\documentclass[journal,onecolumn,11pt]{IEEEtran}
\usepackage{amsmath,amssymb,amsthm,mathtools,bm}
\usepackage{booktabs,array,multirow}
\usepackage{graphicx}
\usepackage{cite}
\usepackage[hidelinks]{hyperref}
\usepackage{url}

\newtheorem{theorem}{Theorem}
\newtheorem{proposition}{Proposition}
\newtheorem{lemma}{Lemma}
\newtheorem{corollary}{Corollary}
\theoremstyle{definition}

\theoremstyle{remark}
\newtheorem{remark}{Remark}

\newcommand{\E}{\mathbb E}
\newcommand{\PP}{\mathbb P}
\newcommand{\R}{\mathbb R}
\newcommand{\C}{\mathbb C}
\newcommand{\Z}{\mathbb Z}
\newcommand{\Rea}{\operatorname{Re}}

\newcommand{\tr}{\operatorname{tr}}
\newcommand{\dist}{\operatorname{dist}}
\newcommand{\dens}{\operatorname{dens}}
\newcommand{\covol}{\operatorname{covol}}
\newcommand{\conv}{\operatorname{conv}}
\newcommand{\diag}{\operatorname{diag}}
\newcommand{\argmin}{\operatorname*{arg\,min}}
\newcommand{\op}{\mathrm{op}}
\newcommand{\HS}{\mathrm{HS}}
\newcommand{\CN}{\mathcal{CN}}
\newcommand{\V}{\mathcal V}
\newcommand{\D}{\mathcal D}
\newcommand{\G}{\mathcal G}
\newcommand{\A}{\mathcal A}

\newcommand{\norm}[1]{\lVert#1\rVert}
\newcommand{\ip}[2]{\langle#1,#2\rangle}
\newcommand{\dd}{\,\mathrm d}

\begin{document}
\title{Deep Holes and Aperture: Sign Laws for the Empirical Correlation of Lattice-Sampled Random Wave Fields}
\author{\IEEEauthorblockN{Tianhao Guo\IEEEauthorrefmark{1},
Hao Luo\IEEEauthorrefmark{2},
Xianzhong Li\IEEEauthorrefmark{3}, and
Yuzhi Yang\IEEEauthorrefmark{4}}\\
\IEEEauthorblockA{\IEEEauthorrefmark{1}College of Physics and Electronic Engineering, Shanxi University, Taiyuan 030006, China. Email: tianhao\_guo@sxu.edu.cn}\\
\IEEEauthorblockA{\IEEEauthorrefmark{2}College of Physics and Electronic Engineering, Shanxi University, Taiyuan 030006, China}\\
\IEEEauthorblockA{\IEEEauthorrefmark{3}College of Physics and Electronic Engineering, Shanxi University, Taiyuan 030006, China}\\
\IEEEauthorblockA{\IEEEauthorrefmark{4}College of Information Science, Japan Advanced Institute of Science and Technology, Nomi 9231292, Japan. Email: yangyzis@jaist.ac.jp}\\
\IEEEauthorblockA{Corresponding author: Yuzhi Yang}}
\maketitle

\begin{abstract}
A Gaussian random wave field is sampled on a lattice within a bounded aperture. After displacement, re-indexing leaves a residual in the Voronoi cell. We study whether the empirical lag correlation has positive real part throughout the cell. For each realization in a single-output reciprocal link with noiseless over-the-air return, this is equivalent to the negative returned gradient being a descent direction for all residuals and differentiable losses with nonzero error signals. Return noise is treated exactly. For isotropic fields, ball-shaped apertures, unshifted lattice points, and dilations approaching the critical covering radius, every minimizer is a deep hole except with exponentially small probability. A Gaussian vector on the deep holes determines the limiting failure probability, with explicit covariance and lattice-independent limiting normalization. Square, hexagonal, simple cubic, and body-centered cubic lattices have two, three, four, and twelve sign classes. At fixed target failure probabilities below the critical limiting value, this covariance determines the exact leading aperture constant. When the margin and target failure probability vanish jointly, that constant is common to lattice shapes within each dimension. A cubic correction yields vanishing relative tail error in a moderate-deviation regime. A non-asymptotic two-sided aperture law has explicit constants. With a Rayleigh-faded specular path and positive diffuse margin, a failure floor occurs at every positive path power exactly when the cosine of its phase change is negative somewhere in the cell. The floor vanishes for all directions exactly when the covering radius is at most a quarter wavelength.
\end{abstract}

\begin{IEEEkeywords}
Deep holes, empirical lag correlation, lattices, moderate deviations, random waves.
\end{IEEEkeywords}

\section{Introduction}

Let a monochromatic random wave field be sampled at the points of a lattice $\Lambda$ inside a bounded aperture. Translating the sample positions by a lattice vector moves each sample onto the position of another sample, so a displacement $\tau$ of the whole array can be undone by re-indexing, up to a residual $e=\tau-\ell$ in the Voronoi cell $\V$ and the samples that leave the aperture. Whether the field seen after the re-indexing still agrees with the field before it is measured by the empirical lag correlation
\begin{equation}\label{eq:intro-S}
 S_L(e)=\frac1{|A^\ell_L|}\sum_{a\in A^\ell_L}h(a)\overline{h(a+e)},
\end{equation}
where $A^\ell_L$ is the set of samples whose translated partner exists. We study the event $\{\Rea S_L(e)>0\ \text{for all }e\in\V\}$.

The event is motivated by the following application. Reciprocity lets a transmitter obtain the gradient of a receiver-side loss with respect to its transmit vector from one analog return transmission~\cite{ota}. If an array of densely placed ports, as in port-selection and fluid-antenna architectures~\cite{wong}, moves between the forward and the return slot and re-indexes its ports by $\ell$, then for a single-output link
\begin{equation}\label{eq:intro-identity}
 \Rea\ip{g}{\widehat g}=|A^\ell_L|\,|\delta|^2\,\Rea S_L(e)
\end{equation}
for every loss, every error signal $\delta$, and every realization of the field, when the return transmission is noiseless (Lemma~\ref{lem:identity}, with return noise treated in Proposition~\ref{prop:noise}). For $\delta\ne0$ the sign of the inner product is the sign of $\Rea S_L(e)$. With receiver noise on the return, the conditional probability that the noisy inner product is nonpositive is $\Phi(-\sqrt{2M\rho/Q_L}\,\Rea S_L(e))$, where $M=|A^\ell_L|$, $\rho$ is the return signal-to-noise ratio and $Q_L$ the empirical field power. On $\{\Rea S_L(e)\ge\gamma/2,\ Q_L\le2\}$ it is at most $\exp(-\gamma^2M\rho/8)$. Here $g$ is the true gradient and $\widehat g$ the returned one. The lattice is fixed before the motion. For a general nonzero residual, knowing $e$ alone does not determine the realized sign of the empirical correlation. We study a guarantee that holds simultaneously for all residuals in the cell. For every nonzero error signal, the update direction $-\widehat g$ is a descent direction for every displacement with the same $\ell$ exactly on the event above.

The mean of $\Rea S_L(e)$ is $\Rea\phi(e)$, where $\phi$ is the spatial correlation. Its minimum over the cell is the margin $\gamma$. For isotropic scattering and a covering radius $R\le r_0$, the first zero of $\phi$, the margin is the value of $\phi$ at distance $R$. A positive mean for every residual means $R<r_0$, and the sparsest such lattices are the optimal coverings~\cite{bambah,barnes56}. A realization of $\Rea S_L$ fluctuates around its mean on the scale $\sigma_L\asymp L^{-(n-1)/2}$. With $M$ samples in dimension $n\in\{2,3\}$, this is not $M^{-1/2}\asymp L^{-n/2}$: the spectrum of a monochromatic field lives on a sphere, and an aperture of linear size $L$ resolves about $L^{n-1}$ spectral directions. The question is how the margin, the fluctuation, and the geometry of the cell combine to decide the sign of $\min_\V\Rea S_L$.

\subsection{Contributions}

\begin{enumerate}
\item \emph{Deep-hole limit law and exact critical aperture} (Theorems~\ref{thm:localization}--\ref{thm:joint}). This is the main result. For an isotropic field with continuous spectrum, the window $K=\overline{B_1}$ (aperture of radius $L$), the unshifted sample set $\ell=0$, and dilations of a fixed lattice shape approaching the critical one, the minimizer of $\Rea S_L$ over $\V$ is a deep hole except on an event of probability $Ce^{-cL^{n-1}}$. The failure probability converges to $\PP(\min_{v\in\D}Z_v\le-t)$, with $t=\lim\gamma/\sigma_L$ and $Z$ a centered Gaussian vector on the deep holes with covariance $\frac12[\phi(v-w)+\phi(v+w)]$. The normalization $L^{n-1}\sigma_L^2\to\kappa_n$ is the same for every lattice. Lattices whose reciprocal vectors alias or are tangent to the sphere of radius $2k$ are included. Deep holes pair up under $v\mapsto-v$. The number of classes is $2$, $3$, $4$, $12$ for the square, hexagonal, simple cubic, and body-centered cubic (BCC) lattices. At the critical radius, distinct classes are negatively correlated for the first three. For BCC the correlations have both signs. In the critical scaling limit at fixed $\eta<p_0$, the aperture satisfies $L_\eta^{n-1}\sim\kappa_nq_\eta^2\gamma^{-2}$, where $q_\eta$ is a quantile of $\min Z$. The lattice therefore enters the constant through its deep-hole configuration, beyond the covering radius. When $\gamma\to0$ and $\eta\to0$ jointly, at arbitrary relative rates, $L_\eta^{n-1}\sim2\kappa_nG_*\gamma^{-2}\log(1/\eta)$, where $G_*=\frac12(1+\psi(2r_0))$, with $\psi$ the radial profile of $\phi$, is the variance at the deep holes of the critical lattice. Since these deep holes lie at distance $r_0$, $G_*$ is the same for every lattice: at high reliability the lattice enters the leading constant only through the margin $\gamma$. The number of sign classes enters a lower-order term of the limiting Gaussian quantile. A corresponding expansion of the actual aperture remains open. The Gaussian proxy has vanishing relative error for $\gamma/\sigma_L=o(L^{(n-1)/6})$, and a cubic correction extends this to $o(L^{(n-1)/4}/\sqrt{\log L})$ (Theorem~\ref{thm:joint}). For a fixed window and a fixed continuous density the covariance limit carries a chord-length weight.
\item \emph{Two-sided aperture law} (Theorem~\ref{thm:aperture}). For continuous spectra with $\gamma>0$, within the compact family~\eqref{eq:family}, every $L\ge L_0$, and every $\eta\in(0,1)$, $L^{n-1}\ge C_3\gamma^{-2}[1+\log(1/\eta)]$ gives failure probability at most $\eta$, and $L\le c_4\gamma^{-2/(n-1)}$ gives at least $1/144$. The constants are explicit and uniform over a compact family of lattices, windows, shifts, and spectral densities. For each fixed $\eta<1/144$ this gives $L^{n-1}\asymp\gamma^{-2}$. Over the admissible family the logarithm cannot be replaced uniformly by any $o(\log(1/\eta))$, and at $\gamma=0$ the failure probability is at least $1/36$ for every aperture. Theorem~\ref{thm:joint} identifies the exact constant in this order near the critical lattice.
\item \emph{Specular components} (Theorems~\ref{thm:specular}--\ref{thm:known}). With a Rayleigh-faded specular path and a positive continuous margin, the failure probability has a positive limit if and only if $\cos(ku\cdot e)<0$ somewhere in $\V$, for every positive path power. A fixed-amplitude path has instead an exact power threshold. No direction produces a floor if and only if $R\le\lambda/4$. This needs port spacings below $\lambda/2$, where mutual coupling, which the correlation model ignores, may matter (Section~\ref{sec:discussion}). For a known direction in the plane, the optimal lattice is explicit, with $2.974$ points per $\lambda^2$, against $2.627$ for diffuse scattering and $6.158$ for an unknown direction.
\end{enumerate}

\subsection{Related Work}

\emph{Lattices.} Covering and quantization optimality in two and three dimensions are classical~\cite{bambah,barnes56,conway,barnes}. Petersen and Middleton identified optimal sampling lattices for band-limited fields~\cite{pm62}. K\"unsch, Agrell and Hamprecht~\cite{kunsch} and Agrell and Cs\'ebfalvi~\cite{agrell18} identified optimal lattices for interpolation of random fields. These criteria select lattices through packing or covering properties of the lattice or of its dual, depending on the spectrum and on the sampling rate. Here the lattice acts through the residual of a translation. Since the guarantee must hold for every residual, the deep holes of the cell, rather than its normalized second moment~\cite{conway}, govern the extreme value of a random field. The planar problem with a known specular direction is a lattice covering by a convex body, settled by F\'ary's theorem~\cite{fary}.

\emph{Random waves.} Extremes and excursions of Gaussian fields are treated in~\cite{adlertaylor,azais}, Gaussian fields on the sphere in~\cite{marinucci}, and normal approximation for Wiener chaos in~\cite{nourdinpeccati}. Nodal statistics of random waves, for instance of arithmetic random waves~\cite{kkw}, are a central example. Two differences matter here. For nodal length the leading-order variance cancels, whereas $\Rea S_L(e)$ is itself a quadratic form in the field, and its variance is of order $L^{-(n-1)}$ for every residual (Proposition~\ref{prop:scale}). For arithmetic random waves the field is built from lattice points on a circle, whereas here the field has a continuous spectrum on $\mathbb R^n$ and the lattice only samples it. Quadratic functionals of Berry's monochromatic random wave~\cite{berry} have Gaussian limits whose variance involves the geometry of the integration domain. In the continuum, squared line-integral structures of this kind appear in~\cite{gps}, and spectral universality of excursion volumes is studied in~\cite{beliaev}. Asymptotic covariances for functionals of stationary fields over several domains are computed in~\cite{maini}, and a spectral central limit theorem for additive functionals of isotropic stationary fields is proved in~\cite{maininourdin}. Our normalization constant and covariance limit are the lattice-sampled, hard-window counterparts. The new point is that they hold uniformly through reciprocal aliases and tangencies, and along a family of lattices approaching the critical one. That uniformity is what the inversion to an exact aperture requires.

\emph{Extremes.} That a minimum over a compact set localizes at a sharp minimizer of the mean under small Lipschitz perturbations is classical in optimization~\cite{ferris}. That the law of a minimum is governed by the minimizing set of the mean is the directional delta method for infimum functionals~\cite{carcamo}. We use both mechanisms. The work lies in verifying their hypotheses for this family of chaos processes with exponentially small exceptional probability, and in identifying the limit covariance on the deep holes.

\emph{Spatial averaging in wireless channels.} Channel hardening is the concentration of the normalized channel energy around its mean. Its counterpart for inner products between the channels of different users, favorable propagation, is discussed in~\cite{ngo14}. For finitely many nonaliasing rays, random ray amplitudes can prevent hardening, whereas fixed amplitudes with random phases do not~\cite{liray}. For a moving terminal, channel aging describes the temporal mismatch between the channel used for processing and the channel at transmission~\cite{truong}. The planar correlation model is Clarke's~\cite{clarke}. Theorem~\ref{thm:specular} is the sign version of non-hardening: it identifies which specular directions produce a floor for $\min_\V\Rea S_L$, as opposed to a non-vanishing variance.

\emph{Tools.} The tail estimates are Hanson--Wright type bounds for Gaussian quadratic forms~\cite{rv13}. The normal approximation is proved directly by Gaussian interpolation (Appendix~\ref{app:deepholes}). The problem-specific inputs are an operator-norm bound for the sampling operator on the sphere that is uniform through tangencies, a lower bound for the variance of the real part, and the linear growth of the mean away from the deep holes.

\subsection{Organization}

Section~\ref{sec:model} sets up the model and proves~\eqref{eq:intro-identity}. Section~\ref{sec:geometry} records the geometry of the mean. Section~\ref{sec:aperture} proves the two-sided law, Section~\ref{sec:deepholes} the deep-hole law, and Section~\ref{sec:specular} treats specular components. Section~\ref{sec:numerics} compares the deep-hole law with independent simulations, and Section~\ref{sec:discussion} discusses scope and open problems. Proofs are in the appendices.

\emph{Notation.} Lengths are measured in wavelengths: $\lambda=1$ and $k=2\pi$. The dimension is $n\in\{2,3\}$ and $d=n-1$. $\sigma$ is the normalized surface measure on $S^{n-1}$. For an operator $T$, $\norm{T}_\op$ and $\norm{T}_\HS$ are the operator and Hilbert--Schmidt norms. $\Phi$ is the standard normal distribution function. $B_r$ denotes the closed Euclidean ball of radius $r$ centered at the origin.

\section{Model and the Exact Reduction}\label{sec:model}

\subsection{Field}
Let $h:\R^n\to\C$ be a zero-mean, stationary, circularly symmetric complex Gaussian field with $\E|h(x)|^2=1$ and covariance
\begin{equation}\label{eq:phi}
 \phi(r)=\E\big[h(x+r)\overline{h(x)}\big]=\int_{S^{n-1}}e^{\mathrm iku\cdot r}\,\mu(\dd u),
\end{equation}
where $\mu$, the angular spectrum, is a probability measure on the unit sphere. The field is band-limited and has a $C^\infty$ version, which we use. We write $m(e)=\Rea\phi(e)$. It is even in $e$. The \emph{continuous-spectrum} case is $\mu=f\sigma$ with $0<c_f\le f\le C_f<\infty$ almost everywhere. The \emph{isotropic} case $f\equiv1$ gives Clarke's planar correlation~\cite{clarke} and its spherical analogue,
\begin{equation}\label{eq:radial}
 \phi(r)=J_0(2\pi|r|)\ (n=2),\qquad \phi(r)=\frac{\sin(2\pi|r|)}{2\pi|r|}\ (n=3),
\end{equation}
where $J_0$ is the Bessel function of the first kind of order zero and $j_{0,1}$ is its smallest positive zero. The first zeros are $r_0=j_{0,1}/(2\pi)=0.38274\ldots$ and $r_0=1/2$. Mixed spectra with atoms are treated in Section~\ref{sec:specular}.

The field is the same in the forward and the return slot: the scatterers are frozen over one round trip and the link is reciprocal. The probability of the failure event studied below is taken over $h$.

\subsection{Lattice array, translation, and surviving ports}
Let $\Lambda=B\Z^n$ be a full-rank lattice with covolume $v=|\det B|$, closed Voronoi cell
\[
 \V=\{e\in\R^n:\ |e|\le|e-\ell|\ \ \forall\ell\in\Lambda\},
\]
covering radius $R=\max_{e\in\V}|e|$, and deep-hole set $\D=\{e\in\V:|e|=R\}$, a finite set of vertices of $\V$. Let $K$ be a convex body with $K=-K$ and $0\in\operatorname{int}K$. The ports occupy $\Lambda\cap LK$.

Between the forward and the return slot the array is displaced by $\tau$. Choose a nearest lattice point $\ell\in\Lambda$ to $\tau$, with a fixed tie-breaking rule on cell boundaries, and write $\tau=\ell+e$ with $e\in\V$. In the return slot the device lets the port with index $a-\ell$ play the role of forward port $a$. That port now sits at $a-\ell+\tau=a+e$. The translation is available for forward port $a$ exactly when $a-\ell\in LK$, so the surviving forward ports are
\begin{equation}\label{eq:survivors}
 A_L^\ell=\Lambda\cap LK\cap(LK+\ell),\qquad M=|A_L^\ell|.
\end{equation}
We assume $M>0$ throughout.
The translation $\ell$ depends only on the displacement and the lattice (Fig.~\ref{fig:geometry}(a)).

\begin{figure*}[t]
\centering
\includegraphics[width=\textwidth]{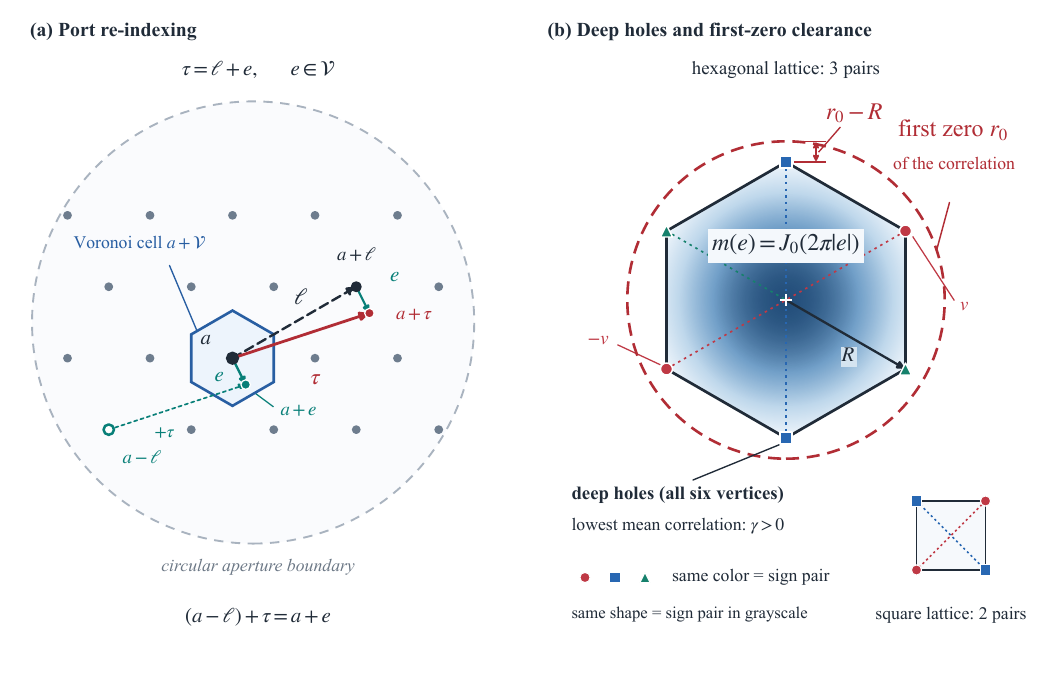}
\caption{(a) A displacement $\tau$ of the port array is undone by re-indexing with the nearest lattice vector $\ell$. The port with index $a-\ell$ lands at $a+e$, inside the Voronoi cell of $a$. (b) The mean correlation $m(e)$ on the cell of the hexagonal lattice is lowest at the six deep holes, where it equals the margin $\gamma>0$ when the covering radius $R$ is below the first zero $r_0$. Opposite deep holes form a sign pair, giving three pairs for the hexagonal lattice and two for the square lattice (inset).}
\label{fig:geometry}
\end{figure*}

\subsection{Single-output gradient return}
In the forward slot the device transmits $x\in\C^{\Lambda\cap LK}$ and a single-antenna receiver observes $z=\sum_a h(a)x_a+w$, where $w$ denotes forward receiver noise. Let $\mathcal L$ be a real-valued loss, evaluated at the receiver and differentiable at $z$. The Wirtinger gradient with respect to $\bar x$ is
\begin{equation}\label{eq:truegrad}
 g_a=\delta\,\overline{h(a)},\qquad \delta=\partial\mathcal L/\partial\bar z.
\end{equation}
In the return slot the receiver transmits $\bar\delta$. Port $a-\ell$, now at $a+e$, receives $h(a+e)\bar\delta$ through the reciprocal channel. After conjugation the device forms
\begin{equation}\label{eq:returngrad}
 \widehat g_a=\delta\,\overline{h(a+e)}\quad(a\in A_L^\ell),\qquad \widehat g_a=0\quad\text{otherwise}.
\end{equation}
This is the over-the-air backward pass of~\cite{ota} with port re-indexing and without return noise.

\begin{lemma}[Exact reduction]\label{lem:identity}
For every realization of the field, every loss, and every $\delta\in\C$,
\[
 \Rea\ip g{\widehat g}=M|\delta|^2\,\Rea S_L(e),\qquad S_L(e):=\frac1M\sum_{a\in A_L^\ell}h(a)\overline{h(a+e)}.
\]
In particular, if $\delta\ne0$, then $-\widehat g$ is a first-order descent direction if and only if $\Rea S_L(e)>0$.
\end{lemma}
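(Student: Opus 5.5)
The identity is a direct computation, so I will substitute the definitions \eqref{eq:truegrad} and \eqref{eq:returngrad} into the inner product and keep track of which ports contribute. I take $\ip{g}{\widehat g}=\sum_a g_a\overline{\widehat g_a}$ over $a\in\Lambda\cap LK$, linear in the first slot. Since $\widehat g_a=0$ for $a\notin A_L^\ell$, only survivors contribute. For $a\in A_L^\ell$ the summand is
\[
 g_a\overline{\widehat g_a}=\delta\,\overline{h(a)}\;\bar\delta\,h(a+e)=|\delta|^2\,\overline{h(a)}\,h(a+e).
\]
Summing over the survivors gives
\[
 \ip{g}{\widehat g}=|\delta|^2\sum_{a\in A_L^\ell}\overline{h(a)}h(a+e)=|\delta|^2 M\,\overline{S_L(e)}.
\]
Taking real parts removes the conjugation, so $\Rea\ip g{\widehat g}=M|\delta|^2\Rea S_L(e)$. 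If the convention is instead conjugate-linear in the first slot, the result is $M|\delta|^2 S_L(e)$ with no conjugate, and the real part is the same. The identity therefore does not depend on the convention.

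No property of the field is used: stationarity, Gaussianity and the spectrum play no role. The loss enters only through the scalar $\delta$, since $g$ is given by \eqref{eq:truegrad}. The identity therefore holds pathwise for every realization, every loss and every $\delta\in\C$. The hypothesis $M>0$ guarantees that $S_L(e)$ is defined.

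For the descent statement, I regard $\mathcal L$ as a real function of $x\in\C^N\cong\R^{2N}$. Its real gradient corresponds to $2g$ under the identification, and the first-order change along a perturbation $\Delta x$ is $2\Rea\ip{g}{\Delta x}$, with $g$ the Wirtinger gradient with respect to $\bar x$. Along $\Delta x=-\widehat g$ the directional derivative is
\[
 -2\Rea\ip{g}{\widehat g}=-2M|\delta|^2\Rea S_L(e).
\]
Since $M>0$ and $\delta\ne0$, this is strictly negative if and only if $\Rea S_L(e)>0$. Here a first-order descent direction is understood as one along which the directional derivative is strictly negative.

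There is no real obstacle in this proof. The only point needing care is the complex-calculus convention: checking that $2\Rea\ip{g}{\cdot}$ is the real differential of $\mathcal L$ with respect to $x$. I will state this convention explicitly in the proof.
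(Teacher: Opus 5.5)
Your proof is correct and takes the same approach as the paper: substitute \eqref{eq:truegrad} and \eqref{eq:returngrad}, observe that ports outside $A_L^\ell$ contribute nothing, and take real parts. Your remarks that the real part does not depend on the inner-product convention, and that the first-order change of $\mathcal L$ along $\Delta x$ is $2\Rea\ip{g}{\Delta x}$, make explicit the descent-direction step that the paper leaves implicit.
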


\begin{proof}
$\ip g{\widehat g}=\sum_a\overline{g_a}\,\widehat g_a=|\delta|^2\sum_{a\in A_L^\ell}h(a)\overline{h(a+e)}$. Ports outside $A_L^\ell$ contribute nothing. Take real parts.
\end{proof}

The error signal $\delta$ may depend on the forward channel, the data, and the loss, and enters only through the factor $|\delta|^2$. Positivity of $\Rea S_L(e)$ is therefore a statement about all losses simultaneously, and port loss enters only through the sample set $A_L^\ell$.

With return noise, the identity of Lemma~\ref{lem:identity} holds for the conditional mean, and the noisy inner product is nonpositive with the following conditional probability. When $\Rea S_L(e)>0$, this is the probability that noise reverses the sign.

\begin{proposition}[Return noise]\label{prop:noise}
Let $\delta\ne0$ and $\sigma_w^2>0$. Suppose port $a-\ell$ receives $h(a+e)\bar\delta+w_a$, with $w_a$ independent and identically distributed (i.i.d.) circularly symmetric complex Gaussian variables, written $w_a\sim\CN(0,\sigma_w^2)$ with $\E|w_a|^2=\sigma_w^2$, independent of the forward information $(h,\delta)$, and the device forms $\widehat g_a$ as the conjugate of what it receives. Let $\rho=|\delta|^2/\sigma_w^2$ and $Q_L=M^{-1}\sum_{a\in A_L^\ell}|h(a)|^2$. Then
\[
 \PP\big(\Rea\ip g{\widehat g}\le0\,\big|\,h,\delta\big)=\Phi\Big(-\sqrt{2M\rho/Q_L}\;\Rea S_L(e)\Big).
\]
In particular, if $\gamma>0$, then on the event $\{\Rea S_L(e)\ge\gamma/2,\ Q_L\le2\}$ this probability is at most $\exp(-\gamma^2M\rho/8)$.
\end{proposition}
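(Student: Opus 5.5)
Conditionally on $(h,\delta)$, the returned gradient equals the noiseless one plus a Gaussian perturbation. So the plan is to write the noisy inner product as the noiseless value from Lemma~\ref{lem:identity} plus a real Gaussian term, and then compute its variance.

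\emph{Decomposition.} Port $a-\ell$ receives $h(a+e)\bar\delta+w_a$. After conjugation the device forms
\[
 \widehat g_a=\delta\,\overline{h(a+e)}+\overline{w_a}
\]
for $a\in A_L^\ell$, and $\widehat g_a=0$ otherwise. With $g_a=\delta\,\overline{h(a)}$, this gives
\[
 \ip g{\widehat g}=M|\delta|^2S_L(e)+\sum_{a\in A_L^\ell}\bar\delta\,h(a)\,\overline{w_a}.
\]
Since $w$ is independent of $(h,\delta)$, conditioning leaves the $w_a$ i.i.d.\ $\CN(0,\sigma_w^2)$. The second term is then a fixed linear combination of them, so it is $\CN(0,\sigma_w^2|\delta|^2MQ_L)$. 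Its real part is a real centered Gaussian with variance $\tfrac12\sigma_w^2|\delta|^2MQ_L$, by circular symmetry ($\E|\Rea Z|^2=\tfrac12\E|Z|^2$).

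\emph{Exact formula.} If $Q_L>0$, the conditional probability that the real part is nonpositive is
\[
 \Phi\!\left(-\frac{M|\delta|^2\Rea S_L(e)}{\sqrt{\tfrac12\sigma_w^2|\delta|^2MQ_L}}\right)
 =\Phi\!\left(-\sqrt{2M\rho/Q_L}\,\Rea S_L(e)\right),
\]
with $\rho=|\delta|^2/\sigma_w^2$. This is the stated identity. The degenerate case $Q_L=0$ occurs with probability zero, because $M>0$ and $h(a)$ is nondegenerate Gaussian. If one wants it covered anyway, then $S_L(e)=0$ there and both sides are read as the convention $\Phi(0)$ or the limit; I would simply note the exclusion.

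\emph{Tail bound.} On $\{\Rea S_L(e)\ge\gamma/2,\ Q_L\le2\}$ the argument of $\Phi$ satisfies
\[
 -\sqrt{2M\rho/Q_L}\,\Rea S_L(e)\le-\sqrt{M\rho}\,\gamma/2 .
\]
Since $\Phi$ is monotone and $\Phi(-x)\le e^{-x^2/2}$ for $x\ge0$, the probability is at most $\exp(-\gamma^2M\rho/8)$.

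No step is a real obstacle. The only care needed is the factor $\tfrac12$ from taking the real part of a circular Gaussian, together with the conditioning argument that keeps $w$ Gaussian given $(h,\delta)$.
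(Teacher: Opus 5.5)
Your proposal is correct and follows essentially the same route as the paper: you split the noisy inner product into the noiseless term $M|\delta|^2S_L(e)$ plus $\bar\delta\sum_a h(a)\overline{w_a}$, identify the real part of the latter as a centered normal with variance $|\delta|^2\sigma_w^2MQ_L/2$, and finish with $\Phi(-x)\le e^{-x^2/2}$. Your side remark that $Q_L=0$ must be excluded, which happens with probability zero, is a small point the paper's proof leaves implicit.
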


\begin{proof}
Now $\widehat g_a=\delta\,\overline{h(a+e)}+\overline{w_a}$, so
\[
 \ip g{\widehat g}=M|\delta|^2S_L(e)+\bar\delta\sum_ah(a)\overline{w_a}.
\] Given $(h,\delta)$, the sum is $\CN(0,\sigma_w^2MQ_L)$, and the real part of $\bar\delta$ times it is centered normal with variance $|\delta|^2\sigma_w^2MQ_L/2$. The bound uses $\Phi(-x)\le e^{-x^2/2}$ for $x\ge0$.
\end{proof}

For continuous spectra in the family of Section~\ref{sec:aperture}, $Q_L=S_L(0)$ has mean $1$, and Lemmas~\ref{lem:op} and~\ref{lem:tail} give $\PP(Q_L>2)\le2e^{-cL^{n-1}}$. With Rayleigh-faded specular paths, random path energy can leave a positive limiting value of $\PP(Q_L>2)$. Since $M\asymp L^n$ and Theorem~\ref{thm:aperture} needs $L^{n-1}\gtrsim\gamma^{-2}$, the exponent $\gamma^2M\rho$ is at least of order $L\rho$, and of exactly that order at the critical scaling $\gamma^2L^{n-1}\asymp1$. To keep the noise contribution below $\eta_{\rm noise}$, it suffices that $\gamma^2M\rho\ge8\log(1/\eta_{\rm noise})$, in addition to controlling the events $\{\min_\V\Rea S_L<\gamma/2\}$ and $\{Q_L>2\}$.

\subsection{The event of interest}
Since the residual is not known when the lattice is designed, we require positivity for every residual. The \emph{margin} and the \emph{failure event} are
\begin{equation}\label{eq:failure}
 \gamma=\min_{e\in\V}m(e),\qquad F_L=\Big\{\min_{e\in\V}\Rea S_L(e)\le0\Big\}.
\end{equation}
$F_L$ is defined for a fixed shift $\ell$. For finitely many shifts, the union bound adds a factor equal to their number.

The fluctuation scale is
\begin{equation}\label{eq:sigma}
 \sigma_L^2=\E|S_L(e)-\E S_L(e)|^2=\frac1{M^2}\sum_{a,b\in A_L^\ell}|\phi(a-b)|^2,
\end{equation}
independent of $e$ (Lemma~\ref{lem:wick}). Note that $\E S_L(e)=\phi(-e)$, so $\E\Rea S_L(e)=m(e)$.

\section{Geometry of the Mean}\label{sec:geometry}

This section records the behavior of the mean $m=\Rea\phi$ on the cell.

\begin{proposition}[Margin and deep holes]\label{prop:margin}
In the isotropic case, $\phi(r)=\psi(|r|)$ with $\psi$ strictly decreasing on $[0,r_0]$, and
\[
 \gamma=\min_{0\le r\le R}\psi(r).
\]
If $R\le r_0$, then $\gamma=\psi(R)$, and the minimizers of $m$ on $\V$ are exactly the deep holes. Consequently $\gamma>0$ if and only if $R<r_0$.
\end{proposition}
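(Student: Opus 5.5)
The plan is to reduce everything to the one-variable profile $\psi$ and to the geometry of the Voronoi cell. In the isotropic case $\mu=\sigma$, and for $n=2$ or $3$ the integral \eqref{eq:phi} gives the real radial functions \eqref{eq:radial}. Thus $\phi(r)=\psi(|r|)$ and $m=\phi$, with $\psi(s)=J_0(2\pi s)$ or $\psi(s)=\sin(2\pi s)/(2\pi s)$.

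\emph{Step 1: monotonicity of $\psi$ on $[0,r_0]$.}
\begin{itemize}
\item For $n=2$, $\psi'(s)=-2\pi J_1(2\pi s)$. Since $J_1>0$ on $(0,j_{1,1})$ and $j_{1,1}>j_{0,1}$ (interlacing of Bessel zeros), $\psi'<0$ on $(0,r_0]$.
\item For $n=3$, write $x=2\pi s$. Then $\frac{d}{dx}\frac{\sin x}{x}=\frac{x\cos x-\sin x}{x^2}$. The numerator has derivative $-x\sin x<0$ on $(0,\pi)$ and vanishes at $0$, so it is negative on $(0,\pi]$. Hence $\psi$ is strictly decreasing on $[0,1/2]$.
\end{itemize}

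\emph{Step 2: reduction to a radial minimum.} $\V$ is a compact, centrally symmetric, convex polytope containing $0$, with $\max_{\V}|e|=R$. By convexity and $0\in\V$, for every $r\in[0,R]$ the continuous function $|\cdot|$ attains $r$ on the segment from $0$ to a deep hole. So the set $\{|e|:e\in\V\}$ is exactly $[0,R]$, and therefore
\[
\min_{\V}m=\min_{0\le r\le R}\psi(r)=\gamma .
\]

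\emph{Step 3: the case $R\le r_0$.} By Step 1, $\psi$ is strictly decreasing on $[0,R]$, so $\gamma=\psi(R)$. Moreover $m(e)=\psi(R)$ forces $|e|=R$ by strictness. Hence the minimizers are exactly the points of $\V$ with $|e|=R$, which is $\D$ by definition.

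\emph{Step 4: the equivalence $\gamma>0\iff R<r_0$.}
\begin{itemize}
\item If $R<r_0$, then $\gamma=\psi(R)>\psi(r_0)=0$.
\item If $R=r_0$, then $\gamma=\psi(r_0)=0$.
\item If $R>r_0$, then $r_0\in[0,R]$, so $\gamma\le\psi(r_0)=0$.
\end{itemize}

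The only nontrivial input is Step 1, specifically the claim $j_{1,1}>j_{0,1}$. This follows from $J_1=-J_0'$ together with Rolle's theorem between $0$ and $j_{0,1}$: Rolle gives a zero of $J_1$ at some point before $j_{0,1}$, but that zero is $0$ itself. The argument must therefore rule out a positive zero of $J_1$ before $j_{0,1}$. This can be done with the standard fact that the zeros of $J_0$ and $J_1$ interlace, or more directly: on $(0,j_{0,1})$ we have $J_0>0$, so $(xJ_1)'=xJ_0>0$, hence $xJ_1>0$ there.
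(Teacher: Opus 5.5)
Your proof is correct and essentially the paper's: the identities $(xJ_1)'=xJ_0$ and $(x\cos x-\sin x)'=-x\sin x$ are the integrated radial equation $r^{n-1}\psi'(r)=-k^2\int_0^r t^{n-1}\psi(t)\dd t<0$, which the paper applies in both dimensions at once, and you then use the same convexity argument that the norms on $\V$ fill $[0,R]$. You can drop the Rolle aside, which does not apply since $J_0(0)\ne J_0(j_{0,1})$; your direct argument via $(xJ_1)'=xJ_0>0$ on $(0,j_{0,1})$ already settles the point.
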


\begin{proof}
The radial function satisfies $(r^{n-1}\psi')'=-k^2r^{n-1}\psi$ with $\psi'(0)=0$, so $r^{n-1}\psi'(r)=-k^2\int_0^rt^{n-1}\psi(t)\dd t<0$ for $0<r\le r_0$. The residual norms attained in $\V$ fill $[0,R]$ by convexity, and $|e|=R$ exactly on $\D$.
\end{proof}

Without isotropy the comparison with $r_0$ fails. For example, $f(\theta)=1+0.9\cos2\theta$ in the plane and $\Lambda=\diag(2/3,1/5)\Z^2$ give $R=\sqrt{109}/30<r_0$ but $m(1/3,0)<-0.096$. In general we work with $\gamma$ directly.

\begin{proposition}[Port density at a given margin]\label{prop:density}
Let $\gamma\in(0,1)$ and let $r_\gamma\in(0,r_0)$ solve $\psi(r_\gamma)=\gamma$. In the isotropic case, the least density of a lattice of a given shape with margin at least $\gamma$ is $(\vartheta/r_\gamma)^n$, where $\vartheta=R/v^{1/n}$ is the normalized covering radius. Among all lattices it is attained by the hexagonal lattice for $n=2$ and by the BCC lattice for $n=3$. The square and simple cubic lattices require, respectively, $3\sqrt3/4\approx1.299$ and $12\sqrt3/(5\sqrt5)\approx1.859$ times as many points.
\end{proposition}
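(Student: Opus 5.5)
The plan is to use Proposition~\ref{prop:margin} to turn the margin condition into a covering-radius condition, and then invoke the classical lattice covering optima. In the isotropic case with $\gamma\in(0,1)$, the value $r_\gamma\in(0,r_0)$ with $\psi(r_\gamma)=\gamma$ is unique, because $\psi$ is strictly decreasing on $[0,r_0]$ from $\psi(0)=1$ to $\psi(r_0)=0$. First we show that $\gamma(\Lambda)\ge\gamma$ holds if and only if $R\le r_\gamma$.

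If $R\le r_\gamma<r_0$, Proposition~\ref{prop:margin} gives $\gamma(\Lambda)=\psi(R)\ge\psi(r_\gamma)=\gamma$. Conversely, suppose $R>r_\gamma$. Then $\gamma(\Lambda)=\min_{0\le r\le R}\psi(r)\le\psi(\min(R,r_0))$. If $R\le r_0$, this is $<\psi(r_\gamma)=\gamma$ by strict monotonicity. If $R>r_0$, it is $\le\psi(r_0)=0<\gamma$. So the admissible lattices are exactly those with $R\le r_\gamma$.

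Now fix a shape, i.e.\ consider the dilations $s\Lambda_0$. The normalized covering radius $\vartheta=R/v^{1/n}$ is scale invariant, and the density is $1/v=(\vartheta/R)^n$. This is decreasing in $R$, so the least density under $R\le r_\gamma$ is attained at $R=r_\gamma$ and equals $(\vartheta/r_\gamma)^n$. Minimizing over all lattices therefore amounts to minimizing $\vartheta$, which is the lattice covering problem. By Kershner/Bambah~\cite{bambah} in the plane and Bambah/Barnes~\cite{barnes56} in three dimensions, the optimum is the hexagonal lattice for $n=2$ and the BCC lattice for $n=3$.

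The ratios then follow from explicit values of $\vartheta^n=R^n/v$.
\begin{itemize}
\item \emph{Square lattice} ($\Z^2$): $R=1/\sqrt2$, $v=1$, so $\vartheta^2=1/2$.
\item \emph{Hexagonal lattice} (unit spacing): $v=\sqrt3/2$, $R=1/\sqrt3$, so $\vartheta^2=(1/3)/(\sqrt3/2)=2/(3\sqrt3)$. The ratio to the square lattice is $(1/2)\cdot 3\sqrt3/2=3\sqrt3/4$.
\item \emph{Simple cubic} ($\Z^3$): $R=\sqrt3/2$, $v=1$, so $\vartheta^3=3\sqrt3/8$.
\item \emph{BCC} (cube side $2$ with centers): $v=4$. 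The deep holes are at points like $(1,1/2,0)$, so $R=\sqrt5/2$ and $\vartheta^3=5\sqrt5/32$. The ratio to simple cubic is $(3\sqrt3/8)/(5\sqrt5/32)=12\sqrt3/(5\sqrt5)$.
\end{itemize}

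The only step that is not routine is the optimality claim, and we import it from the cited classical theorems. The BCC covering radius also requires care: one must identify the vertices of the truncated-octahedron Voronoi cell, and we would check them against the Delaunay tetrahedra. The monotone equivalence between margin and $R$ uses only Proposition~\ref{prop:margin}.
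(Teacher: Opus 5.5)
Your proposal is correct and follows the paper's proof: Proposition~\ref{prop:margin} turns the margin constraint into $R\le r_\gamma$, the density $(\vartheta/R)^n$ is minimized at $R=r_\gamma$, the optimal shapes come from the classical lattice covering theorems, and your values $\vartheta^2=1/2,\ 2/(3\sqrt3)$ and $\vartheta^3=3\sqrt3/8,\ 5\sqrt5/32$ match the paper's (you also spell out the case $R>r_0$ in the equivalence, which the paper leaves implicit). One small attribution point: the paper cites the planar case to Barnes~\cite[Thm.~2]{barnes56}, whereas Bambah's paper~\cite{bambah} is the three-dimensional result.
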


\begin{proof}
$R=\vartheta\,\dens(\Lambda)^{-1/n}$, and by Proposition~\ref{prop:margin} the margin constraint is $R\le r_\gamma$. Minimizing $\vartheta$ is the lattice covering problem. In the plane it is solved by the hexagonal lattice~\cite[Thm.~2]{barnes56}. In space, Barnes~\cite[Thm.~1]{barnes56}, following Bambah~\cite{bambah}, shows that the extremal quadratic form has diagonal entries $3$ and off-diagonal entries $-1$. It is the Gram matrix of the BCC basis $(-1,1,1)$, $(1,-1,1)$, $(1,1,-1)$. For BCC with conventional cube side $a$, $v=a^3/2$ and $R=\sqrt5a/4$, so the least density at covering radius $R$ is $5\sqrt5/(32R^3)$. For the hexagonal lattice it is $2/(3\sqrt3R^2)$. The normalized radii are $\vartheta^2=2/(3\sqrt3)$ (hexagonal), $1/2$ (square), $\vartheta=\sqrt5\,2^{1/3}/4$ (BCC), and $\sqrt3/2$ (simple cubic).
\end{proof}

Table~\ref{tab:densities} lists the densities at the two radii that appear below: the first zero $r_0$, relevant for diffuse scattering, and $1/4$, relevant when a specular path of unknown direction may be present (Corollary~\ref{cor:quarter}).

\begin{table}[t]
\centering
\caption{Points per $\lambda^n$ at covering radius $R$, and nearest-neighbor spacing in wavelengths. The densities at $R=r_0$ are infima under the constraint $\gamma>0$.}
\label{tab:densities}
\begin{tabular}{lcccc}
\toprule
 & \multicolumn{2}{c}{$n=2$} & \multicolumn{2}{c}{$n=3$}\\
\cmidrule(lr){2-3}\cmidrule(lr){4-5}
 & hexagonal & square & BCC & simple cubic\\
\midrule
$R=r_0$ (diffuse) & 2.627 & 3.413 & 2.795 & 5.196\\
$R=1/4$ (any specular direction) & 6.158 & 8.000 & 22.36 & 41.57\\
\midrule
nearest-neighbor spacing at $R=r_0$ & 0.663 & 0.541 & 0.775 & 0.577\\
nearest-neighbor spacing at $R=1/4$ & 0.433 & 0.354 & 0.387 & 0.289\\
\bottomrule
\end{tabular}
\end{table}

\section{The Two-Sided Aperture Law}\label{sec:aperture}

Fix a compact family of configurations:
\begin{equation}\label{eq:family}
\begin{gathered}
 \text{singular values of }B\text{ in }[b_{\min},b_{\max}],\qquad B_{r_{\rm in}}\subset K\subset B_{r_{\rm out}},\qquad |\ell|\le T,\\
 \mu=f\sigma,\quad c_f\le f\le C_f,\qquad L\ge L_0:=\max\{1,2(T+R_*)/r_{\rm in}\},\quad R_*=\sqrt n\,b_{\max}/2 .
\end{gathered}
\end{equation}
Constants called \emph{family constants} depend only on $(n,b_{\max},r_{\rm in},r_{\rm out},T,c_f,C_f)$. The aperture enters through the sampling operator on $L^2(\mu)$,
\[
 A_L=\frac1M\sum_{a\in A_L^\ell}|v_a\rangle\langle v_a|,\qquad v_a(u)=e^{\mathrm ika\cdot u},
\]
a positive trace-one operator with kernel $A_L(u,u')=M^{-1}\sum_ae^{\mathrm ika\cdot(u-u')}$. Lemma~\ref{lem:wick} shows $\sigma_L^2=\norm{A_L}_\HS^2$.

\begin{proposition}[Fluctuation scale]\label{prop:scale}
There are family constants $C_*,c_*>0$ such that for $L\ge L_0$ and $e\in\V$,
\[
 \norm{A_L}_\op\le C_*L^{-d},\qquad c_*L^{-d}\le\operatorname{Var}\Rea S_L(e)\le\sigma_L^2\le C_*L^{-d}.
\]
The bounds hold for every lattice in the family, including those with reciprocal vectors of length $2k$, and use only $c_f\le f\le C_f$.
\end{proposition}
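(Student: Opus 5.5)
The plan is to reduce all three bounds to properties of $A_L$ as an operator on $L^2(\mu)$, then treat the upper bounds through a continuous Fourier-restriction (trace) estimate and the lower bound through the near-diagonal behaviour of the kernel of $A_L$.

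\emph{Operator norm.} For $g\in L^2(\mu)$ with $\norm g=1$ we have
\[
\ip{g}{A_Lg}=\frac1M\sum_{a\in A_L^\ell}|F(a)|^2,\qquad F(x)=\int g(u)\,e^{-\mathrm ikx\cdot u}f(u)\,\sigma(\dd u).
\]
The function $F$ is band-limited, with spectrum on the sphere of radius $k$. Fix a Schwartz function $\chi$ whose Fourier transform equals $1$ on $B_{2k}$. Then $F=F*\chi$, and Cauchy--Schwarz gives
\[
|F(a)|^2\le\norm\chi_1\int|F(x)|^2|\chi(a-x)|\dd x .
\]
Summing over $a\in\Lambda\cap LK$ uses only that $\Lambda$ has covolume and packing radius bounded in terms of $b_{\min},b_{\max}$. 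The result is $\sum_a|F(a)|^2\le C\int|F|^2w_L$, with a weight satisfying $w_L\lesssim(1+\dist(x,B_{r_{\rm out}L}))^{-N}$.

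Next apply the Agmon--H\"ormander trace inequality $\int_{B_\rho}|\widehat{G\sigma}|^2\le C\rho\,\norm G^2_{L^2(\sigma)}$ on dyadic shells, together with $f\le C_f$ and $\norm{g}_{L^2(\mu)}\ge c_f^{1/2}\norm g_{L^2(\sigma)}$. This yields $\int|F|^2w_L\le CL$. Since $M\ge cL^n$ (from $B_{r_{\rm in}}\subset K$, $L\ge L_0$ and $|\ell|\le T$, so that $LK\cap(LK+\ell)$ contains a ball of radius $\asymp L$), we get $\norm{A_L}_\op\le C_*L^{-d}$.

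This step never looks at the reciprocal lattice. That is exactly why it is uniform through aliases and through tangencies to the sphere of radius $2k$, and I expect it to be the main point requiring care: making the discretization constant depend only on family constants. Then $\sigma_L^2=\norm{A_L}_\HS^2\le\norm{A_L}_\op\tr A_L=\norm{A_L}_\op$, using Lemma~\ref{lem:wick} and $\tr A_L=1$.

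\emph{Variance of the real part.} By Wick's formula for circular Gaussians,
\[
\E(S_L-\E S_L)^2=M^{-2}\sum_{a,b}\phi(a-b-e)\phi(b-a-e)=\tr(A_LD_eA_LD_e),
\]
where $D_e$ is multiplication by $e^{-\mathrm ike\cdot u}$ (a unitary). Put $X=A_LD_e$. Then $\norm X_\HS^2=\sigma_L^2$, and
\[
\operatorname{Var}\Rea S_L=\tfrac12\big(\norm X_\HS^2+\Rea\tr X^2\big)=\tfrac14\norm{X+X^*}_\HS^2 .
\]
The upper bound $\operatorname{Var}\Rea S_L\le\sigma_L^2$ follows from $|\tr X^2|\le\norm X_\HS^2$. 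For the lower bound, write the HS norm through kernels:
\[
\operatorname{Var}\Rea S_L=\tfrac14\iint|A_L(u,u')|^2\,\big|1+e^{\mathrm ike\cdot(u+u')}\big|^2 f(u)f(u')\,\sigma(\dd u)\sigma(\dd u').
\]

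\emph{Near-diagonal lower bound.} We have $A_L(u,u)=1$, and $|\nabla_{u'}A_L(u,u')|\le k\max_a|a|\le CL$. Hence $|A_L(u,u')|\ge1/2$ whenever $|u-u'|\le c_1/L$. On that set, $|1+e^{\mathrm ike\cdot(u+u')}|\ge|1+e^{2\mathrm ike\cdot u}|-Cc_1|e|/L$, and $|e|\le R_*$.

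The set $U_e=\{u:|\cos(ke\cdot u)|\ge1/2\}$ has $\sigma(U_e)\ge c_2>0$ uniformly in $|e|\le R_*$. This holds by compactness in $e$: each such set is nonempty and open, varies continuously with $e$, and for $e=0$ is the whole sphere. Restricting the integral to $u\in U_e$ and $|u'-u|\le c_1/L$ gives a lower bound of order $c_f^2\,c_2\,(c_1/L)^d$, that is, $\operatorname{Var}\Rea S_L(e)\ge c_*L^{-d}$.
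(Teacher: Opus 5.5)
Your proof is correct. The variance part is the paper's argument in different notation. Since $D_e=U_e^*$, your $X+X^*=AU_e^*+U_eA=2H_e$, so $\tfrac14\norm{X+X^*}_\HS^2=\norm{H_e}_\HS^2$. Your near-diagonal bound is Lemma~\ref{lem:nondeg}, which uses the explicit band $\{|u\cdot e|\le1/12\}$ where you use compactness. For that step, apply lower semicontinuity to the open set $\{|\cos(ke\cdot u)|>1/2\}$, not to the closed one.

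The operator-norm bound is where you genuinely differ.
\begin{itemize}
\item The paper majorizes $A_L$ in positive order by a Gaussian-weighted sum over all of $\Lambda$. It then applies Poisson summation to pass to the reciprocal lattice $\G$, and closes with a Schur test built from $\mu(B_r(c))\le C_fr^d$ and a count of reciprocal points. Uniformity through aliases and tangencies comes from checking that no step divides by $\big||G|-2k\big|$.
\item You discretize with a reproducing Schwartz function and then invoke the Agmon--H\"ormander trace inequality. So $\G$ never enters, and tangencies are irrelevant by construction. The argument also applies unchanged to any uniformly discrete sample set filling a ball of radius $\asymp L$.
\end{itemize}
The paper's route buys an explicit $C_*$, which feeds the explicit $C_3$ of Theorem~\ref{thm:aperture}. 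It also sets up the Poisson-summation framework reused for $\kappa_n$ in Appendix~\ref{app:deepholes}. Yours is shorter and more conceptual, but rests on an unspecified trace constant.

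One repair is needed to get \emph{family} constants, which may not depend on $b_{\min}$. As written, your discretization constant depends on the packing radius, hence on $b_{\min}$. Track the density instead. Use $\#\{a\in\Lambda:|a-x|\le r\}\le\omega_n(r+R_*)^n/v$, so that $w_L\lesssim v^{-1}(1+\dist(x,B_{r_{\rm out}L}))^{-N}$, together with $Mv\ge\omega_n(Lr_{\rm in}/2)^n$. Then $v$ cancels exactly as in Lemma~\ref{lem:op}. Also, only $f\le C_f$ is needed in that step, since $\int|g|^2f^2\dd\sigma\le C_f\norm g_{L^2(\mu)}^2$.
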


The order is $L^{-d}$ rather than $M^{-1}\asymp L^{-n}$. The operator bound controls the tails of $\Rea S_L$ through Lemma~\ref{lem:tail}.

\begin{theorem}[Two-sided aperture law]\label{thm:aperture}
In the family~\eqref{eq:family}, there are explicit family constants $C_3,c_4>0$ (Appendix~\ref{app:aperture}) such that the following hold for every $L\ge L_0$.
\begin{enumerate}
\item If $\gamma>0$, $0<\eta<1$, and $L^{d}\ge C_3\gamma^{-2}[1+\log(1/\eta)]$, then $\PP(F_L)\le\eta$.
\item If $\gamma>0$ and $L\le c_4\gamma^{-2/d}$, then $\PP(F_L)\ge1/144$.
\item If $\gamma=0$, then $\PP(F_L)\ge1/36$. If $\gamma<0$, then $\PP(F_L)\ge1-C_*L^{-d}/\gamma^2$.
\end{enumerate}
\end{theorem}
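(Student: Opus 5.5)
Each part is an estimate on the quadratic chaos $X(e)=\Rea S_L(e)-m(e)$. We use the Hanson--Wright tail bound (Lemma~\ref{lem:tail}) together with the operator-norm and variance bounds of Proposition~\ref{prop:scale}. Writing $h$ through its spectral representation, $S_L(e)$ is a Hermitian form $\langle W, A_L^{(e)}W\rangle$ in a white noise $W$ on $L^2(\mu)$. Here $A_L^{(e)}$ is $A_L$ composed with multiplication by $e^{-\mathrm iku\cdot e}$, so its operator and Hilbert--Schmidt norms equal those of $A_L$.

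\textbf{Part 1.} For fixed $e$, Hanson--Wright gives
\[
\PP(|X(e)|>s)\le 2\exp\bigl(-c\min\{s^2/\sigma_L^2,\ s/\norm{A_L}_\op\}\bigr)\le 2\exp(-c'L^d\min\{s^2,s\}).
\]
We then discretize $\V$. Since $\V$ has diameter at most $2R_*$, we use a net $N$ of mesh $\epsilon\asymp\gamma$ with $|N|\le C\gamma^{-n}$. To control the oscillation we bound $|\nabla_e S_L(e)|\le k\,Q_L^{1/2}(M^{-1}\sum_a|\nabla h(a+e)|^2)^{1/2}$ via Cauchy--Schwarz, since each derivative of a band-limited field costs a factor $k$. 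The same tail lemma bounds $Q_L\le2$ off an event of probability $2e^{-cL^d}$. The gradient average is handled in the same way, but it is a sum over a shifted sample set $a+e$ rather than a fixed one, so we bound $\sup_e$ of it by a second chaining step or by Bernstein's inequality for band-limited functions applied to $e\mapsto M^{-1}\sum_a|h(a+e)|^2$. Its operator norm is again $O(L^{-d})$ by the same argument as Proposition~\ref{prop:scale}. On the good event $\Rea S_L$ is $C$-Lipschitz, so with $\epsilon=\gamma/(4C)$ failure forces $X(e)\le-\gamma/2$ at some net point. This gives
\[
\PP(F_L)\le C\gamma^{-n}e^{-cL^d\gamma^2}+4e^{-cL^d}.
\]
The factor $\gamma^{-n}$ is absorbed using $L^d\gamma^2\ge C_3$ and $\log(1/\gamma)\le \tfrac12\log(L^d/C_3)$ together with $L^d\gamma^2\ge C_3\log(1/\eta)$. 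We take $C_3$ large enough for the polynomial prefactor and for the $e^{-cL^d}$ term, noting $\gamma\le1$.

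\textbf{Part 2.} Pick $e_0\in\V$ with $m(e_0)=\gamma$. Then $F_L\supset\{X(e_0)\le-\gamma\}$, and we lower bound $\PP(X(e_0)\le -\gamma)$ by anti-concentration. Set $s^2=\operatorname{Var}X(e_0)\ge c_*L^{-d}$. The condition $L\le c_4\gamma^{-2/d}$ gives $\gamma\le c_4^{d/2}c_*^{-1/2}s$, which is a small multiple of $s$. We then apply Paley--Zygmund to $Y=-X(e_0)$, for which we need fourth moments of Gaussian chaos. Hypercontractivity gives $\E Y^4\le 81 s^4$ for degree-two chaos. Since $Y$ is centered, $\E Y_+^2\ge$ some fraction of $s^2$: from $\E Y_+=\E Y_-$ and $\E Y_+\ge \E Y^2/(\E Y^4)^{1/2}\cdot$const one gets $\E Y_+^2\ge s^2/18$ by a Hölder argument. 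Paley--Zygmund then gives $\PP(Y>\theta s)\ge(1-\theta^2\cdot18)^2(\E Y_+^2)^2/\E Y^4$, and choosing $c_4$ so that $\gamma\le s/12$ should yield the constant $1/144$. Getting the stated constant exactly is bookkeeping, and this is the step to calibrate.

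\textbf{Part 3.} If $\gamma=0$, the same argument with $\theta=0$ gives $\PP(Y\ge0)\ge(\E Y_+^2)^2/\E Y^4\ge 1/36$ with the hypercontractive constants. If $\gamma<0$, take $e_0$ with $m(e_0)=\gamma$. Chebyshev gives
\[
\PP(\Rea S_L(e_0)>0)\le\PP(X(e_0)>|\gamma|)\le\sigma_L^2/\gamma^2\le C_*L^{-d}/\gamma^2.
\]

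The main obstacle is the uniform Lipschitz control in Part 1 with exponentially small exceptional probability and constants uniform over the family. The plan is to reuse the uniform operator bound of Proposition~\ref{prop:scale} for the derivative forms, which are again sampling operators, now weighted by $ku$.
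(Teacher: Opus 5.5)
Your Part~1 does not prove the statement as written. You bound the Lipschitz constant of $\Rea S_L$ itself, and that constant is of order one, since the mean $m$ already has gradient of order one. This forces a net of mesh $\epsilon\asymp\gamma$ and a union bound over $|N|\asymp\gamma^{-n}$ points, ending with $C\gamma^{-n}e^{-cL^d\gamma^2}+4e^{-cL^d}$.

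The hypothesis does not let you absorb the factor $\gamma^{-n}$:
\begin{itemize}
\item It only gives $L^d\gamma^2\ge C_3[1+\log(1/\eta)]$, which stays bounded at fixed $\eta$.
\item Meanwhile $\gamma>0$ can be arbitrarily small in the family, for example for the cubic lattices $a\Z^3$ with $a\uparrow1/\sqrt3$.
\item With $\eta=1/2$ and $L^d=2C_3\gamma^{-2}$ the hypothesis holds, yet your bound is $C\gamma^{-n}e^{-2cC_3}\to\infty$.
\item Rewriting $\log(1/\gamma)\le\frac12\log(L^d/C_3)$ only trades $\log(1/\gamma)$ for $\log L$, which is just as unbounded while the exponent stays fixed.
\end{itemize}
Your argument therefore proves only the weaker condition $L^d\ge C\gamma^{-2}[1+\log(1/\eta)+\log(1/\gamma)]$. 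That extra $\log(1/\gamma)$ is exactly what the theorem is built to avoid.

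The missing idea is to control the oscillation of the centered field $X$, not of $\Rea S_L$. By Lemma~\ref{lem:incr}, $X_e-X_w$ is a chaos with Hilbert--Schmidt norm at most $k|e-w|\sigma_L$ and operator norm at most $k|e-w|\norm{A}_\op$. So the increments live on the fluctuation scale, and the net can be chosen independently of $\gamma$.

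The paper chains over dyadic nets of $\V$ at scales $R_*2^{-j}$. It applies Lemma~\ref{lem:tail} to $X_0$ and to each level-$j$ increment, which gives $\sup_\V|X|\le D_*[\sigma_L\sqrt{2(t+h_n)}+\norm{A}_\op(t+h_n)]$ off an event of probability $e^{-t}$. With $t=\log(1/\eta)$, the hypothesis makes the right side at most $5\gamma/8<\gamma$. A fixed-mesh net combined with a tail bound for $\sup|\nabla X|$ built from the derivative forms would also work. Either way, your $Q_L$ and gradient-average events become unnecessary.

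Parts~2 and~3 have the right structure: anti-concentration at a minimizer of $m$, and Chebyshev for $\gamma<0$, which is fine as you wrote it. However, your inequalities do not produce the stated constants:
\begin{itemize}
\item With $\E Y^4\le81s^4$, H\"older gives only $\E|Y|\ge s/9$.
\item Your claim $\E Y_+^2\ge s^2/18$ is not justified.
\item Even granting it, $(\E Y_+^2)^2/\E Y^4\ge1/26244$, far below $1/144$ and $1/36$.
\item At $\gamma=0$, the second-moment route would need $\E Y_+^2\ge s^2/2$ even with fourth-moment constant $9$, and this is not available.
\end{itemize}

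Instead, use the exponential representation $X=\sum_j\lambda_j(E_j-1)$ from Lemma~\ref{lem:tail}. It gives $\E X^4=3s^4+6\sum_j\lambda_j^4\le9s^4$, hence $\E|X|\ge s/3$ and $\E X_-\ge s/6$. Combine this with the first-moment inequality $\PP(X_-\ge a)\ge(\E X_--a)_+^2/\E X_-^2$ and $\E X_-^2\le s^2$. At $a=s/12$ this gives $1/144$, and $c_4=(c_*/144)^{1/d}$ together with Lemma~\ref{lem:nondeg} ensures $\gamma\le s/12$. At $\gamma=0$, the bound $\E X_-\le\sqrt{\E X_-^2\,\PP(X<0)}$ gives $\PP(X<0)\ge1/36$.
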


Combined with Lemma~\ref{lem:identity}, Part 1 gives a direction guarantee for a single realization of the field.

\begin{corollary}[Per-realization direction guarantee]\label{cor:direction}
In the family~\eqref{eq:family}, let $\gamma>0$, $0<\eta<1$, $L\ge L_0$, and $L^{d}\ge C_3\gamma^{-2}[1+\log(1/\eta)]$. With probability at least $1-\eta$ over the field, simultaneously for every residual $e\in\V$, every loss, and every error signal $\delta$,
\[
 \Rea S_L(e)\ge\tfrac38\gamma\qquad\text{and}\qquad \Rea\ip g{\widehat g}\ge\tfrac38\gamma\,M|\delta|^2 .
\]
In particular, for $\delta\ne0$ the update direction $-\widehat g$ is a first-order descent direction for every residual in the cell.
\end{corollary}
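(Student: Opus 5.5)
The corollary has two parts: a deterministic reduction, given by Lemma~\ref{lem:identity}, and a probabilistic bound. The probabilistic bound is the quantitative statement behind Part~1 of Theorem~\ref{thm:aperture}. Part~1 as stated only gives $\PP(\min_\V\Rea S_L\le0)\le\eta$. So I will not cite it as a black box. Instead I will recall that its proof in Appendix~\ref{app:aperture} actually controls the event $\{\min_{e\in\V}\Rea S_L(e)<\tfrac38\gamma\}$. The constant $C_3$ is chosen so that the uniform deviation $\sup_{e\in\V}|\Rea S_L(e)-m(e)|$ is at most $\tfrac58\gamma$ with probability at least $1-\eta$. Then $\Rea S_L(e)\ge m(e)-\tfrac58\gamma\ge\gamma-\tfrac58\gamma=\tfrac38\gamma$ for all $e$, since $m\ge\gamma$ on $\V$ by definition of the margin.

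If the appendix's bookkeeping uses a different fraction, I will redo the chaining step with the target deviation $\tfrac58\gamma$, enlarging $C_3$ if needed. The steps are as follows.

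First, take a finite net $N\subset\V$ of mesh $\epsilon\asymp\gamma$.

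Second, apply the Hanson--Wright tail bound of Lemma~\ref{lem:tail} at each net point. The variable $\Rea S_L(e)-m(e)$ is a centered Gaussian quadratic form whose operator norm is at most $\norm{A_L}_\op\le C_*L^{-d}$ and whose Hilbert--Schmidt norm is $\sigma_L\le C_*^{1/2}L^{-d/2}$, by Proposition~\ref{prop:scale}. This gives a tail $2\exp(-c\min\{\gamma^2L^d,\gamma L^d\})=2\exp(-c\gamma^2L^d)$, using $\gamma\le1$.

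Third, take a union bound over the $|N|\lesssim\gamma^{-n}$ points.

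Fourth, pass from the net to all of $\V$ using a Lipschitz bound on $e\mapsto S_L(e)$. The gradient of $S_L$ is $M^{-1}\sum h(a)\overline{\nabla h(a+e)}$. By Cauchy--Schwarz it is bounded by $Q_L^{1/2}\cdot(\text{empirical gradient power})^{1/2}$. Both factors are at most $2k$ times a constant except with probability $2e^{-cL^d}$, again by Lemma~\ref{lem:tail} applied to the diagonal forms. The mean $m$ is also $k$-Lipschitz.

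Collecting terms, the failure probability is at most $C\gamma^{-n}e^{-c\gamma^2L^d}+2e^{-cL^d}$. The main obstacle is absorbing the entropy factor $\gamma^{-n}$. Under the hypothesis $L^d\ge C_3\gamma^{-2}[1+\log(1/\eta)]$ we have $\gamma^2L^d\ge C_3(1+\log(1/\eta))$. Separately, $L\ge L_0$ and $\gamma\le1$ give $\log(1/\gamma)\le\tfrac12\log L^d\le\tfrac12\gamma^2L^d\cdot(\text{const})$ only when $\gamma^2L^d$ dominates $\log L$. So I will use the cleaner route: $\gamma^{-n}=(\gamma^{-2})^{n/2}\le(L^d/C_3)^{n/2}$. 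Then $\gamma^{-n}e^{-c\gamma^2L^d/2}\le C_3^{-n/2}\sup_{x\ge C_3}x^{n/2}e^{-cx/2}$, which is at most a constant times $e^{-cC_3/4}$. With $C_3$ large, the total is at most $\eta$.

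Finally, the deterministic step applies on this event. For every residual $e\in\V$, every loss, and every $\delta$, Lemma~\ref{lem:identity} gives $\Rea\ip g{\widehat g}=M|\delta|^2\Rea S_L(e)\ge\tfrac38\gamma M|\delta|^2$. The event does not depend on the loss or on $\delta$, because $\delta$ enters only through $|\delta|^2$. For $\delta\ne0$ this lower bound is strictly positive, so $-\widehat g$ has a positive inner product with $g$ and is a first-order descent direction, simultaneously for all $e$.
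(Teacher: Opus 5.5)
Your first paragraph is exactly the paper's proof. The exceptional event in the proof of Part~1 of Theorem~\ref{thm:aperture} is the one in \eqref{eq:uniformtail} with $t=\log(1/\eta)$. Under the hypothesis, with the explicit $C_3=8C_*D_*^2h_n$, its threshold is at most $\gamma/2+\gamma^2/(8D_*)\le 5\gamma/8$. So with probability at least $1-\eta$ you have $\sup_{\V}|\Rea S_L-m|\le 5\gamma/8$, hence $\Rea S_L\ge 3\gamma/8$ on $\V$, and Lemma~\ref{lem:identity} finishes. The fraction in the appendix really is $5/8$, so your fallback is never triggered.

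That is fortunate, because the fallback does not work as written. From $\gamma^{-2}\le L^d/C_3$ you get $\gamma^{-n}\le (L^d/C_3)^{n/2}$, which is a power of $L^d$. Your next step then silently replaces $L^d$ by $x=\gamma^2L^d$, so the claimed bound is not a function of $x$ alone. The hypothesis allows $\gamma^2L^d=C_3[1+\log(1/\eta)]$ with $\gamma\to0$, and then $C\gamma^{-n}e^{-c\gamma^2L^d}\to\infty$ for every fixed $C_3$. More generally, a union bound over a single-scale net of mesh $\asymp\gamma$ adds $n\log(1/\gamma)$ to the exponent, which the hypothesis does not supply. That mesh is forced because you use the $O(1)$ Lipschitz constant of $S_L$ itself rather than that of the centered process.

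The paper avoids this by chaining the centered process $X_e=\Rea S_L(e)-m(e)$ over dyadic nets of $\V$ that do not depend on $\gamma$. By Lemma~\ref{lem:incr}, an increment at level $j$ has Hilbert--Schmidt norm at most $3kR_*2^{-j}\sigma_L$ and operator norm at most $3kR_*2^{-j}\norm{A}_{\op}$. The entropy factor at level $j$ is therefore paid by the geometric decay $2^{-j}$, through $\sum_j 2^{-j}u_j=t+h_n$, and not by $\gamma$. Also, ``enlarging $C_3$'' would prove the statement only with a constant different from the $C_3$ it refers to.
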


\begin{proof}
On the complement of the exceptional event in the proof of Part 1 (Appendix~\ref{app:aperture}), $\sup_\V|\Rea S_L-m|\le5\gamma/8$, hence $\Rea S_L(e)\ge m(e)-5\gamma/8\ge3\gamma/8$ on $\V$. Lemma~\ref{lem:identity} converts this into the inner-product bound.
\end{proof}

For each fixed $\eta<1/144$, Parts 1 and 2 give $L_\eta^{\,d}\asymp\gamma^{-2}$ for the critical aperture. Part 3 shows that a positive margin is necessary for reliable operation at any aperture: when the mean vanishes at some residual, enlarging the array cannot make the failure probability tend to zero.

The dependence on $\eta$ is sharp in the following sense.

\begin{proposition}[Sharpness of the logarithm]\label{prop:log}
There is an admissible family, consisting of cubic lattices $a\Z^3$ with $a\in[1/(2\sqrt3),1/\sqrt3]$, a cubic window, and $f\equiv1$, with the following property. No bound of the form $L^d\ge C\gamma^{-2}[1+g(\log(1/\eta))]$ with $g(t)=o(t)$ implies $\PP(F_L)\le\eta$ for all members of the family and all $\eta$.
\end{proposition}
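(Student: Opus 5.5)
The plan is to fix a single member of the family whose margin $\gamma$ is small but positive, and to prove a \emph{lower} tail bound of Gaussian type,
\[
\PP(F_L)\ \ge\ \PP\big(\Rea S_L(v)\le0\big)\ \ge\ c\,e^{-C'\gamma^2L^d},
\]
for a fixed deep hole $v$ and all large $L$. This bound contradicts any improvement of the logarithm in Theorem~\ref{thm:aperture}, Part 1.

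\emph{Step 1: choice of the member.} For $a\Z^3$ we have $R=\sqrt3a/2\in[1/4,1/2]$ and $r_0=1/2$. By Proposition~\ref{prop:margin}, $\gamma=\psi(R)\downarrow0$ as $a\uparrow1/\sqrt3$. I will take $a$ close enough to $1/\sqrt3$ that $\gamma\le c_0$, where $c_0$ is a family constant fixed in Step 2.

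\emph{Step 2: lower tail of the quadratic form.} Write $X=\Rea S_L(v)=\bm\xi^*H\bm\xi$. Here $\bm\xi$ is the complex Gaussian vector $(h(a),h(a+v))_{a\in A_L^\ell}$, and $H$ is a Hermitian matrix. After whitening, $X=\sum_j\lambda_j|\zeta_j|^2$ with i.i.d.\ standard $\zeta_j\sim\CN(0,1)$. The weights satisfy the following:
\begin{itemize}
\item $\sum_j\lambda_j=m(v)=\gamma$;
\item $\sum_j\lambda_j^2=\operatorname{Var}X\ge c_*L^{-d}$, by Proposition~\ref{prop:scale};
\item $\max_j|\lambda_j|\le C L^{-d}$.
\end{itemize}
The last bound should come from $\norm{A_L}_\op\le C_*L^{-d}$ via the same representation used in Lemma~\ref{lem:wick} and Lemma~\ref{lem:tail}. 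The sum $X-\gamma=\sum_j\lambda_j(|\zeta_j|^2-1)$ is a weighted sum of centered exponentials with uniformly small weights.

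I will use an exponential tilt. Take $\theta=s\gamma/\operatorname{Var}X$ with $s$ a fixed number slightly above $1$. Then $\theta|\lambda_j|\le C\gamma/c_*\le 1/4$ once $c_0$ is small, so the log-moment generating function $\sum_j[-\log(1+\theta\lambda_j)-\theta\lambda_j]$ equals $\tfrac12\theta^2\operatorname{Var}X$ up to a relative error $O(\theta\max|\lambda_j|)=O(c_0)$. Under the tilted law, $X$ has mean approximately $\gamma-s\gamma<0$ and variance comparable to $\operatorname{Var}X$. Chebyshev's inequality, or Paley--Zygmund, then gives tilted probability at least $1/2$ of $\{-2s\gamma\le X\le0\}$. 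Undoing the tilt costs a factor $e^{-\theta\cdot2s\gamma-\Lambda(\theta)}\ge e^{-C''\gamma^2/\operatorname{Var}X}\ge e^{-C'\gamma^2L^d}$. This gives the displayed bound with $c=1/2$.

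\emph{Step 3: contradiction.} Suppose $L^d\ge C\gamma^{-2}[1+g(\log(1/\eta))]$ with $g(t)=o(t)$ always implied $\PP(F_L)\le\eta$. Put $\eta=e^{-t}$, and let $L$ be the smallest admissible aperture, so that $L^d\le2C\gamma^{-2}[1+g(t)]$ for large $t$ (this also ensures $L\ge L_0$). Step 2 gives $\PP(F_L)\ge\tfrac12 e^{-2CC'(1+g(t))}$. Since $g(t)=o(t)$, this exceeds $e^{-t}=\eta$ for $t$ large, a contradiction.

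The main obstacle is Step 2, and specifically the uniform eigenvalue bound $\max|\lambda_j|\lesssim L^{-d}$ for the symmetrized form at a deep hole. I would first obtain it by bounding $\norm H_\op$ through the Gram structure of the samples: this reduces to $\norm{A_L}_\op$ for the cubic window, which Proposition~\ref{prop:scale} covers via $B_{r_{\rm in}}\subset K$. The second delicate point is to keep the tilt inside the Cramér regime. This is exactly why the proposition uses a family whose margin can be made small, rather than an arbitrary fixed lattice.
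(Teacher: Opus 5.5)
Your proof is correct in substance but follows a different route from the paper.

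\textbf{Comparison of routes.} The paper never fixes a member. It lets the lattice $a_L\Z^3$ approach the critical one so that $\gamma(a_L)^2L^d=q$ stays fixed. It uses Lemmas~\ref{lem:op} and~\ref{lem:nondeg} only for the Lyapunov ratio $\max_j|\lambda_j|/s=O(L^{-1})$, and applies Lindeberg--Feller to get $\liminf_L\PP(F_L)\ge\Phi(-\sqrt{q/c_*})$. Choosing $q=\max\{1,C[1+g(t)]\}=o(t)$ then beats $e^{-t}$.

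You instead fix one member with $\gamma\le c_0\asymp c_*/C_*$. By exponential tilting you prove the Cram\'er-type bound $\PP(F_L)\ge\frac12e^{-C'\gamma^2L^d}$ as $L\to\infty$ at fixed $\gamma$. This is the lower half of the paper's moderate-deviation lemma in Appendix~\ref{app:deepholes}, but at a fixed small margin rather than $\gamma\to0$.

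Your version proves more: the logarithm is already unavoidable for a single lattice. The family only supplies a member whose margin keeps $\theta\max_j|\lambda_j|\lesssim\gamma C_*/c_*$ in the quadratic regime. The price is the tilt bookkeeping. The paper's version needs only a textbook CLT and gives the explicit Gaussian constant, but it only refutes uniformity over a family whose members vary with $L$.

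The step you call the main obstacle is immediate. The $\lambda_j$ are the eigenvalues of $H_v$, and Lemma~\ref{lem:wick} together with Lemma~\ref{lem:op} gives $\norm{H_v}_\op\le\norm{A_L}_\op\le C_*L^{-d}$.

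\textbf{Repairs needed.}

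First, the range of Step~2. Your Chebyshev step needs the tilted standard deviation ($\asymp L^{-d/2}$) to be small compared with the distance of the tilted mean from $0$, i.e.\ $\gamma^2L^d$ large. So Step~2 holds only for $L\ge L_1(\gamma)$. In addition, $s-1$ must exceed the $O(c_0)$ relative error in the tilted mean; taking $s=2$ with $\theta\max_j|\lambda_j|\le1/4$ is clean.

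Second, Step~3. If $g$ is bounded or negative, the smallest admissible aperture stays bounded as $t\to\infty$, and Step~2 does not reach it. There are two easy fixes:
\begin{itemize}
\item Replace $g$ by $\max\{g(t),\sqrt t\}$. This is still $o(t)$ and only shrinks the premise, so the assumed implication is preserved.
\item Take $L$ with $\gamma^2L^d=t/(2C')$ directly. This satisfies the premise for large $t$ and gives $\PP(F_L)\ge\frac12e^{-t/2}>e^{-t}$.
\end{itemize}

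Third, a sign. The centered log-moment generating function is $\log\E e^{-\theta(X-\gamma)}=\sum_j[-\log(1+\theta\lambda_j)+\theta\lambda_j]$. With your sign it is not $\approx\frac12\theta^2\operatorname{Var}X$.
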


The upper bound in Theorem~\ref{thm:aperture} combines the chaos tail of Lemma~\ref{lem:tail} with chaining over a net of $\V$ that does not depend on $\gamma$, which is why no $\log(1/\gamma)$ factor appears. The lower bound applies a one-sided Paley--Zygmund inequality to $\Rea S_L(e^*)-\gamma$ at a minimizer $e^*$ of $m$.

\section{Localization at Deep Holes and the Limit Law}\label{sec:deepholes}

Theorem~\ref{thm:aperture} determines the order of the critical aperture, and within a fixed admissible family this order depends on the lattice through $\gamma$. This section identifies the location and the limiting law of $\min_\V\Rea S_L$ along lattices approaching the critical one, where $\gamma\to0$, and from them the constant.

\emph{Setting.} In this section the spectrum is isotropic, $K$ is the closed unit ball, and $\ell=0$, so $A_L=\Lambda\cap\overline{B_L}$. Fix a lattice $\Lambda_0$ with covering radius $R_0$ and deep holes $\D_0$, and dilate it: $\Lambda_s=s\Lambda_0$, $\V_s=s\V_0$, $\D_s=s\D_0$. With $s_*=r_0/R_0$ and a fixed $s_-\in(0,s_*]$, let $s\in I=[s_-,s_*]$. By Proposition~\ref{prop:margin}, $\gamma_s=\psi(sR_0)$, attained exactly on $\D_s$, and $\gamma_s\downarrow0$ as $s\uparrow s_*$. Put
\[
 Y_L(e)=\Rea S_L(e)-m(e),\qquad G(v,w)=\tfrac12\big[\phi(v-w)+\phi(v+w)\big].
\]
Unless stated otherwise, constants depend only on $n$, $\Lambda_0$, and $s_-$. Estimates that are uniform over a compact set of lags may also depend on that set, and the $o(1)$ in Theorem~\ref{thm:sharp} may depend on the fixed $\eta$.

\begin{theorem}[Localization and limit law]\label{thm:localization}
\leavevmode
\begin{enumerate}
\item \emph{Lattice-independent limiting normalization.} Uniformly in $s\in I$,
\[
 L^{d}\sigma_L^2\to\kappa_n,\qquad \kappa_2=\frac{8}{3\pi^3},\qquad\kappa_3=\frac{9}{32\pi^2},
\]
and $\operatorname{Cov}(Y_L(e),Y_L(w))/\sigma_L^2\to G(e,w)$ uniformly for $e,w$ in compact sets.
\item \emph{Exact localization.} There are $c,C,L_1$ such that for all $s\in I$ and $L\ge L_1$,
\[
 \PP\Big\{\argmin_{e\in\V_s}\Rea S_L(e)\not\subseteq\D_s\Big\}\le Ce^{-cL^{d}}.
\]
\item \emph{Gaussian comparison.} Let $X$ be a centered Gaussian vector on $\D_s$ with the exact finite-$L$ covariance $\Gamma_L(v,w)=\operatorname{Cov}(Y_L(v),Y_L(w))$. Then for all $s\in I$ and $L\ge L_1$,
\[
 \Big|\PP(F_L)-\PP\Big\{\min_{v\in\D_s}(\gamma_s+X_v)\le0\Big\}\Big|\le C\Big[\Big(\frac{\norm{A_L}_\op}{\sigma_L}\Big)^{1/3}+e^{-cL^d}\Big]\le CL^{-d/6}.
\]
\item \emph{Limit law.} Let $s_j\uparrow s_*$ and $L_j\to\infty$ with $\gamma_{s_j}/\sigma_{L_j}\to t\in[0,\infty]$. Then
\[
 \PP(F_{L_j})\to H(t):=\PP\Big\{\min_{v\in\D_*}Z_v\le-t\Big\},\qquad H(\infty)=0,
\]
where $\D_*=s_*\D_0$ and $Z$ is centered Gaussian on $\D_*$ with covariance $G$.
\end{enumerate}
\end{theorem}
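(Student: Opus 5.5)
The proof splits into the four parts of the statement. Each part is built on the spectral representation. Writing $h(x)=\int e^{\mathrm iku\cdot x}W(\dd u)$ with $W$ complex white noise on $S^{n-1}$ with intensity $\sigma$, we get $S_L(e)=\langle W,\mathcal A_{L,e}W\rangle$, where $\mathcal A_{L,e}$ is $A_L$ twisted by the multiplier $e^{-\mathrm iku\cdot e}$.

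\emph{Part 1.} Start from \eqref{eq:sigma}: $L^d\sigma_L^2=L^dM^{-2}\sum_{a,b}|\psi(|a-b|)|^2$. By Poisson summation, the double sum equals $M^2$ times an integral of $|\widehat{1_{B_L}}|^2$ convolved against the spherical measure, summed over the dual lattice $\Lambda_s^*$.
\begin{itemize}
\item The zero dual vector gives the continuum quantity $\frac{v_s^{-2}}{M^2}\int_{B_L}\int_{B_L}\psi(|x-y|)^2\dd x\dd y$. Since $\psi^2(r)\sim c\,r^{-d}$ on average, this is $\asymp L^{-d}$, and its limit constant $\kappa_n$ comes out of the explicit angular average. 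Using $M\sim|B_L|/v_s$, the covolume cancels, which is why $\kappa_n$ does not depend on the lattice.
\item Nonzero dual vectors $\xi$ contribute through $\int\!\!\int\sigma(\dd u)\sigma(\dd u')\,|\widehat{1_{B_L}}(k(u-u')-2\pi\xi)|^2$. Within the compact range $s\in I$ these can be tangent ($|2\pi\xi|=2k$) or aliasing. I would bound them by $o(L^{-d})$ uniformly: the overlap of two spheres shifted by $\xi\neq0$ has codimension at least one more than the diagonal, and the decay of the ball's Fourier transform is $|\widehat{1_{B_L}}(\zeta)|\lesssim L^{d/2}|\zeta|^{-(n+1)/2}$. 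The tangential case gives $O(L^{-d}\cdot L^{-1/2})$ or similar.
\end{itemize}
The covariance limit follows in the same way. The shift $e$ only multiplies the diagonal term by $\frac12[\phi(e-w)+\phi(e+w)]$: the real part produces both the $u$ and the $-u$ pairings in Wick's formula (Lemma~\ref{lem:wick}), and the continuum integral localizes at $u'=\pm u$ antipodal pairs.

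\emph{Part 2.} This combines a sharp-minimum argument with uniform gradient control.
\begin{itemize}
\item \emph{Linear growth of the mean.} By Proposition~\ref{prop:margin} and the strict decrease of $\psi$, with $\psi'(R)<0$ bounded away from zero uniformly in $s\in I$, and since $\D_s$ consists of vertices of $\V_s$, we have $m(e)-\gamma_s\ge c\,\dist(e,\D_s)$ on $\V_s$. This holds because each deep hole is a vertex and the cell lies in a cone at the vertex along which $|e|$ decreases linearly. This uses that the cone spanned by the cell at a vertex $v$ lies strictly inside the half-space $\{x\cdot v<0\}$.
\item \emph{Gradient control.} Lemma~\ref{lem:tail} applied to $\nabla Y_L$ and $\nabla^2 Y_L$ (also chaos variables with operator norm $\lesssim L^{-d}$), together with a net argument as in Theorem~\ref{thm:aperture}, gives $\sup_{\V}|\nabla Y_L|\le c/2$ off an event of probability $Ce^{-cL^d}$.
\end{itemize}
On the complement, $\Rea S_L=m+Y_L$ strictly increases away from each deep hole along every direction into the cell. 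The minimizer over the convex polytope must therefore be a vertex in $\D_s$. The one care point is that near $\D_s$ the minimum over the local cone is attained at the vertex: this is the Lipschitz-perturbation argument of~\cite{ferris}.

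\emph{Part 3 and Part 4.} By Part 2, $\PP(F_L)$ equals $\PP\{\min_{\D_s}(\gamma_s+Y_L)\le0\}$ up to $Ce^{-cL^d}$. Next, compare the finite vector $(Y_L(v))_{v\in\D_s}$ with Gaussian $X$. I would use Gaussian interpolation on a smoothed minimum: a soft-min with temperature $\beta$ composed with a smoothed indicator of width $\epsilon$. The fourth-cumulant/contraction bounds for second chaos give a distance $\lesssim\epsilon^{-3}\norm{A_L}_\op\sigma_L^{-1}\cdot\sigma_L^{-?}$ after normalizing. Anti-concentration of the Gaussian min, whose variances are $\ge c_*L^{-d}$ by Proposition~\ref{prop:scale}, costs $\epsilon$. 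Optimizing gives the power $1/3$. Proposition~\ref{prop:scale} then yields $L^{-d/6}$. Part 4 follows by dividing by $\sigma_L$ and applying Part 3 and Part 1: normalized covariances converge uniformly since $\D_{s_j}\to\D_*$. For $t<\infty$ the law of $\min Z$ is continuous at $-t$, because the $Z_v$ have positive variance $G(v,v)=\frac12(1+\psi(2|v|))>0$. The case $t=\infty$ is immediate.

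I expect the main obstacle to be the uniform Poisson-summation estimate in Part 1, which must hold through tangencies $|2\pi\xi|=2k$ along the family. My first attempt would be the stationary-phase bound for spherical cap convolutions, treating tangency by a separate bound of order $L^{-d-1/2}$.
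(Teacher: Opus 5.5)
Your overall architecture matches the paper's: Poisson summation with alias bounds, a linear gap of the mean plus an exponential gradient tail, Gaussian interpolation with smoothed indicators and marginal anti-concentration, and convergence of normalized covariances. Parts 2 and 4 go through as you describe.

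\textbf{Part 1.} This part rests on a step that is not valid. You apply Poisson summation to the sharp indicator $\mathbf 1_{B_L}$. You then write $\iint|D_L|^2$ as a sum over dual vectors of integrals of $|\widehat{\mathbf 1_{B_L}}(q-G)|^2$.

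\emph{Why it fails.} $|\widehat{\mathbf 1_{B_L}}(\zeta)|$ decays only like $|\zeta|^{-(n+1)/2}$, so the dual sum $\sum_G\widehat{\mathbf 1_{B_L}}(q-G)$ is not absolutely convergent for $n=2,3$. Squaring it produces cross terms $G\ne G'$, which you drop. Bounding them directly needs uniform lattice-point (exponential-sum) estimates for the ball, which you do not supply. The crude cell-averaging bound $M^{-1}\big|\sum_a e^{\mathrm ia\cdot q}-v_s^{-1}\widehat{\mathbf 1_{B_L}}(q)\big|=O(L^{-1}+|q|)$ is useless after integrating in $q$.

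\emph{The missing idea.} Do the Fourier analysis with a smooth radial window $w_\delta$, equal to $1$ on $B_{1-\delta}$ and supported in $B_1$. For it, only one dual term is non-negligible at each $q$, and your alias and tangency bounds apply. Then pay for the hard window by positivity:
\begin{itemize}
\item $0\le A-A^{w_\delta}\le A$;
\item a shell count gives $\tr(A-A^{w_\delta})\le C(\delta+L^{-1})$;
\item hence $\norm{A-A^{w_\delta}}_\HS^2\le\norm A_\op\tr(A-A^{w_\delta})\le C(\delta+L^{-1})L^{-d}$ by Lemma~\ref{lem:op};
\item Cauchy--Schwarz transfers the limit, first $L\to\infty$, then $\delta\to0$.
\end{itemize}
With that in place, the tangency issue you single out is routine. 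Small-ball bounds for the law of $q=k(U-U')$ give $O(L^{-3/2})$ for $n=2$ and $O(L^{-3})$ for $n=3$ per nonzero dual vector, uniformly.

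\emph{A smaller point.} The covariance limit does not come from localization at $u'=\pm u$. Only $u'=u$ contributes at leading order. $G(e,w)$ comes from the diagonal $\cos(ku\cdot e)$ of the kernel of $H_e$, via $\cos(ku\cdot e)\cos(ku\cdot w)=\frac12[\cos(ku\cdot(e-w))+\cos(ku\cdot(e+w))]$.

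\textbf{Part 3.} Your bookkeeping does not give the stated rate. Write $r=\norm{A_L}_\op/\sigma_L$. A smooth-comparison error $\epsilon^{-3}r$ plus an anti-concentration cost $\epsilon$ optimizes to $r^{1/4}$, i.e.\ $L^{-d/8}$, not $r^{1/3}$ and $L^{-d/6}$.

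To reach $1/3$ you need a comparison controlled by second derivatives only. Gaussian integration by parts provides this for second-chaos vectors: $\E[F_if(F)]=\sum_j\E[T_{ij}\partial_jf(F)]$ with $\E T_{ij}=C_{ij}$ and $\E|T_{ij}-C_{ij}|\le r$. Hence $|\E h(F)-\E h(G)|\le\frac r2\sum_{i,j}\norm{\partial_{ij}h}_\infty$. Bracketing the orthant indicator at scale $\delta$ then costs $\delta^{-2}r$. The strips cost $O(\delta)$ using only the marginal variances. Take $\delta=r^{1/3}$. No soft-min is needed, and after normalizing by $\sigma_L$ no extra power of $\sigma_L$ appears.
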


In Part 1 the constant is the same for every lattice, including lattices whose reciprocal vectors are tangent to the sphere of radius $2k$. Part 2 asserts that the minimizer is a deep hole (Fig.~\ref{fig:geometry}(b)), not merely close to one. The mean grows linearly in the distance to $\D$, with a slope $c_g>0$ determined by $\psi'(R)$ and the cell. The fluctuation gradient satisfies $\sup_\V|\nabla Y_L|=O_P(L^{-d/2})$, and the event $\{\sup_\V|\nabla Y_L|\ge c_g\}$ has probability at most $Ce^{-cL^d}$. Outside that event the linear growth dominates on every segment ending at a deep hole. Part 3 bounds the absolute error by $O(L^{-d/6})$.

The structure of $Z$ is explicit.

\begin{proposition}[Sign classes of deep holes]\label{prop:classes}
$Z_v=Z_{-v}$ almost surely. After one representative of each pair $\{v,-v\}\subset\D_*$ is kept, the covariance matrix is positive definite. The rank of $G$ on $\D_*$ is therefore $|\D_*|/2$: it equals $2$, $3$, $4$, $12$ for the square, hexagonal, simple cubic, and BCC lattices. Lattice-equivalent vertices, such as $v$ and $v-\ell$, are not identified. The function $H$ is continuous and strictly decreasing, and $p_0:=H(0)\in(0,1)$.
\end{proposition}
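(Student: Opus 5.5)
The plan is to rewrite $G$ on $\D_*$ as a Gram matrix over the sphere, read off the pairing $v\leftrightarrow-v$ directly, and prove positive definiteness on representatives with a Funk--Hecke argument. The properties of $H$ then follow from nondegeneracy.

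\emph{Gram representation and the pairing.} In the isotropic case $\phi(r)=\int e^{\mathrm iku\cdot r}\dd\sigma(u)$ is real and even. Expanding $\tfrac12[e^{\mathrm ik u\cdot(v-w)}+e^{\mathrm iku\cdot(v+w)}]$ and symmetrising under $u\mapsto-u$ gives
\[
 G(v,w)=\int_{S^{n-1}}\cos(ku\cdot v)\cos(ku\cdot w)\dd\sigma(u).
\]
Hence $G(v,v)=G(-v,-v)=G(v,-v)$, so $\operatorname{Var}(Z_v-Z_{-v})=0$ and $Z_v=Z_{-v}$ almost surely. The deep-hole set is symmetric because $\V_0=-\V_0$, so $\D_*$ splits into $|\D_*|/2$ antipodal pairs. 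The vertex counts of the Voronoi cells at maximal distance are $4$ (square), $6$ (hexagon), $8$ (cube) and $24$ (truncated octahedron of BCC). This gives the ranks $2,3,4,12$ once positive definiteness is established. Lattice-equivalent vertices are distinct points of $\R^n$ and are treated as such throughout.

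\emph{Positive definiteness (main step).} Fix one representative $v_1,\dots,v_p$ from each pair. For real coefficients $c$ the representation gives $\sum c_ic_jG(v_i,v_j)=\int F^2\dd\sigma$, where $F(u)=\sum_ic_i\cos(ku\cdot v_i)$. So it suffices to show that $F\equiv0$ on $S^{n-1}$ forces $c=0$. All $v_i$ have the same length $r_0$. The Funk--Hecke formula therefore gives, for every spherical harmonic $Y$ of even degree $m$,
\[
 \int F\,Y\dd\sigma=\pm\,\beta_m(kr_0)\sum_ic_iY(\hat v_i),
\]
where $\beta_m$ is the ultraspherical Bessel factor ($J_m$ for $n=2$, $j_m$ for $n=3$). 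Odd degrees vanish identically by evenness. For $m>kr_0$ we have $\beta_m(kr_0)>0$, because the first zero of $J_\nu$ exceeds $\nu$. Hence the even, atomic measure $\nu_c=\sum_ic_i(\delta_{\hat v_i}+\delta_{-\hat v_i})$ is orthogonal to all harmonics of degree above $kr_0$. Its harmonic expansion is then a finite sum, so $\nu_c$ has a continuous density, and an atomic measure with a continuous density must vanish. The directions $\pm\hat v_i$ are pairwise distinct, so $c=0$. I expect this step to be the main obstacle. One must make sure that no cancellation is possible among finitely many pairs, and the density argument is what rules it out.

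\emph{Properties of $H$.} Write $W=(Z_{v_1},\dots,Z_{v_p})$. Then $W$ is a nondegenerate Gaussian vector in $\R^p$, and $\min_{\D_*}Z=\min_iW_i$. Its law has a density that is positive on all of $\R$: for every interval $J$, the set $\{\min_ix_i\in J\}$ has nonempty interior and the Gaussian density is everywhere positive. Hence $H(t)=\PP(\min W\le-t)$ is continuous, being free of atoms, and strictly decreasing, with $H(\infty)=0$. Finally $p_0=H(0)\ge\PP(W_1\le0)=1/2>0$. Moreover $p_0<1$, because $\PP(W\in(0,\infty)^p)>0$ by full support.
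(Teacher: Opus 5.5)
Your proof is correct. The pairing $Z_v=Z_{-v}$ and the properties of $H$ follow the same reasoning as the paper: no atoms, positive mass on every interval because the Gaussian vector on sign classes has a positive density, and $p_0<1$ by full support. Your positive-definiteness step, however, takes a genuinely different route.

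The paper proves a more general lemma: for any distinct $x_1,\dots,x_m\in\R^n$ with $n\ge2$, the functions $e^{\mathrm iku\cdot x_j}$ are linearly independent on the sphere. It restricts to a great circle $p\cos z+q\sin z$ on which the values $q\cdot x_j$ are distinct, and continues analytically to $z=\mathrm i\tau$. As $\tau\to\infty$ the term with the smallest $q\cdot x_j$ dominates, which forces its coefficient to vanish. Independence of the functions $\cos(ku\cdot v_i)$ then follows because the $2p$ points $\pm v_i$ are distinct.

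You instead use that all deep holes of the critical lattice lie on the sphere of radius $r_0$. Funk--Hecke then reduces $F\equiv0$ to the vanishing of $\sum_ic_iY(\hat v_i)$ for every even harmonic of degree above $kr_0$, where $\beta_m(kr_0)>0$. The atomic-measure argument finishes the proof.

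The paper's argument is more elementary and needs no equal-norm hypothesis. With unequal norms your factors $\beta_m(k|v_i|)$ would differ across $i$, and collinear points of different lengths would have to be separated by the growth of $\beta_m$ in $m$. Your argument is tailored to the deep-hole geometry and makes explicit that only the directions $\pm\hat v_i$ matter.

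One small point: the claim that lattice-equivalent vertices $v$ and $v-\ell$ are not identified is a consequence of your positive-definiteness step. They lie in different sign pairs unless $v-\ell=-v$, so their coordinates of $Z$ are not almost surely equal. You should say this explicitly rather than only noting that they are ``treated as distinct points''.
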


At the critical radius $R=r_0$, the correlations between distinct classes are $-0.956$ for the square lattice, $-0.49991$ for the hexagonal lattice, and $-0.307$ for the simple cubic lattice. For BCC they take the values $-0.434$, $-0.389$, $-0.249$, $0.006$, and $0.407$. For the square lattice,
\begin{equation}\label{eq:p0square}
 p_0=\frac34-\frac{1}{2\pi}\arcsin\frac{J_0(\sqrt2\,j_{0,1})}{\frac12(1+J_0(2j_{0,1}))}=0.95262\ldots,
\end{equation}
and Monte Carlo with $2\times10^5$ draws from the centered Gaussian distribution with limiting covariance $G$ gives $p_0\approx0.998$ for simple cubic and values that round to $1.000$ for hexagonal and BCC (by Proposition~\ref{prop:classes}, $p_0<1$ in all cases). A single Gaussian would give $1/2$. At finite $L$ the Gaussian vector in Theorem~\ref{thm:localization}.3 keeps all deep holes. The identification $Z_v=Z_{-v}$ holds only in the limit. In the hexagonal case the three classes are nearly, but not exactly, linearly dependent: $1+J_0(2j_{0,1})+2J_0(\sqrt3\,j_{0,1})=1.39\times10^{-4}$.

\begin{theorem}[Critical scaling of the aperture]\label{thm:sharp}
For $\gamma_s>0$ and $\eta\in(0,1)$, let
\[
 L_\eta(\gamma_s)=\inf\{a\ge0:\ \PP(F_L)\le\eta\ \text{for every }L\ge a\}.
\]
For each fixed $\eta\in(0,p_0)$, let $q_\eta>0$ solve $H(q_\eta)=\eta$. Then, as $s\uparrow s_*$,
\[
 L_\eta(\gamma_s)^{d}=\kappa_n\,q_\eta^2\,\gamma_s^{-2}\,(1+o(1)).
\]
For $\eta>p_0$, $L_\eta(\gamma_s)=O(1)$. For $\eta=p_0$, $\gamma_s^2L_\eta(\gamma_s)^d\to0$.
\end{theorem}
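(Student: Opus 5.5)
We will derive Theorem~\ref{thm:sharp} by inverting the limit law of Theorem~\ref{thm:localization}.4, using continuity and strict monotonicity of $H$ from Proposition~\ref{prop:classes}. The argument rests on one principle: along any sequence $s_j\uparrow s_*$ and any $L_j\to\infty$, $\PP(F_{L_j})$ is asymptotically $H(\gamma_{s_j}/\sigma_{L_j})$. Since $L^d\sigma_L^2\to\kappa_n$ uniformly in $s$ (Part 1), this is $H(\gamma_{s_j}L_j^{d/2}\kappa_n^{-1/2})$. By a subsequence argument, Part 4 upgrades to the following statement. Given $\epsilon>0$, there are $s_\epsilon<s_*$ and $L_\epsilon$ such that $|\PP(F_L)-H(\gamma_sL^{d/2}\kappa_n^{-1/2})|\le\epsilon$ for all $s\in[s_\epsilon,s_*)$ and $L\ge L_\epsilon$. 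If this failed, we could extract $s_j,L_j$ along which $\gamma_{s_j}/\sigma_{L_j}$ converges in $[0,\infty]$, and Part 4 together with continuity of $H$ on $[0,\infty]$ (with $H(\infty)=0$) would give a contradiction.

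\emph{Case $\eta<p_0$.} Fix $\delta>0$ and set $L_\pm^d=\kappa_n(q_\eta\pm\delta)^2\gamma_s^{-2}$. Note that $L_\pm\to\infty$ as $\gamma_s\to0$, so for $s$ near $s_*$ every $L\ge L_-$ exceeds $L_\epsilon$.

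For the upper bound, take $L\ge L_+$. Then $\gamma_sL^{d/2}\kappa_n^{-1/2}\ge q_\eta+\delta$, so $\PP(F_L)\le H(q_\eta+\delta)+\epsilon$. Strict monotonicity gives $H(q_\eta+\delta)<\eta$, and choosing $\epsilon$ smaller than the gap yields $\PP(F_L)\le\eta$ for every $L\ge L_+$. Hence $L_\eta\le L_+$.

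For the lower bound, at $L=L_-$ we get $\PP(F_{L_-})\ge H(q_\eta-\delta)-\epsilon>\eta$, so $L_\eta\ge L_-$. Here we need $q_\eta-\delta>0$, which holds since $\eta<p_0=H(0)$ makes $q_\eta$ a well-defined positive number by continuity and strict monotonicity. Letting $\delta\to0$ gives the claimed asymptotic.

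\emph{Case $\eta>p_0$.} We need a bound uniform in small $L$. For $L$ large and $s$ near $s_*$, $\PP(F_L)\le H(0)+\epsilon<\eta$, because $H\le p_0$ on $[0,\infty)$. So $\PP(F_L)\le\eta$ for all $L\ge L_\epsilon$ once $s\ge s_\epsilon$, independently of $\gamma_s$, which gives $L_\eta=O(1)$.

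\emph{Case $\eta=p_0$.} For any $\delta>0$, take $L^d=\kappa_n\delta^2\gamma_s^{-2}$ or larger. Then $\PP(F_L)\le H(\delta)+\epsilon<p_0$ once $\epsilon<p_0-H(\delta)$, again by strict monotonicity. Therefore $\gamma_s^2L_\eta^d\le\kappa_n\delta^2$ eventually, and since $\delta$ is arbitrary, $\gamma_s^2L_\eta^d\to0$.

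The main obstacle is making the convergence in Part 4 uniform, meaning that it applies simultaneously to all $L\ge a$ rather than along a single sequence. This matters because $L_\eta$ is defined by a supremum over all larger apertures. The compactness/subsequence argument above handles this, since Part 4 allows arbitrary joint sequences $(s_j,L_j)$, including $t=\infty$. The argument also needs $L_\pm\to\infty$, which follows from $\gamma_s\to0$. If one wanted rates, Part 3 combined with a rate in Part 1 would suffice, but that is not needed here.
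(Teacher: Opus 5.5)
Your proposal is correct and follows essentially the paper's own argument: invert the limit law of Theorem~\ref{thm:localization}.4 by a subsequence/compactness argument, using the uniform normalization $L^d\sigma_L^2\to\kappa_n$ and the continuity and strict monotonicity of $H$. Packaging the subsequence step as a single uniform approximation $|\PP(F_L)-H(\gamma_sL^{d/2}\kappa_n^{-1/2})|\le\epsilon$ for $s\in[s_\epsilon,s_*)$ and $L\ge L_\epsilon$ is a clean way to obtain both bounds and the cases $\eta\ge p_0$ at once.
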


The restriction $\eta<p_0$ is necessary for a positive constant: for the critical square lattice $p_0<0.96$, and at $\eta=0.99$ the critical aperture stays bounded as $\gamma\to0$.

Theorem~\ref{thm:sharp} is a critical scaling limit at a fixed target failure probability $\eta<p_0$. The high-reliability regime, in which $\eta\to0$ together with $\gamma$, is governed by the following result. Let
\[
 G_*=\tfrac12\big(1+\psi(2r_0)\big)
\]
be the common limiting variance $G(v,v)$ at the deep holes of the critical lattice, and, for a deep hole $v$ at finite $L$, let $H_v$ be the Hermitian kernel of $Y_L(v)$ (Appendix~\ref{app:kernel}), $\nu_v^2=\tr H_v^2$, $\kappa_{3,v}=2\tr H_v^3$, and $x_v=\gamma_s/\nu_v$.

\begin{theorem}[Joint limit and relative error]\label{thm:joint}
\leavevmode
\begin{enumerate}
\item \emph{Joint limit.} If $\gamma_s\to0$ and $\gamma_s^2L^d\to\infty$, then
\[
 \log\PP(F_L)=-\frac{\gamma_s^2L^d}{2\kappa_nG_*}\,(1+o(1)).
\]
Consequently, for every joint sequence $\gamma_s\downarrow0$ and $\eta\downarrow0$, with no restriction on their relative rates,
\[
 L_\eta(\gamma_s)^d=2\kappa_nG_*\,\gamma_s^{-2}\log(1/\eta)\,(1+o(1)).
\]
\item \emph{Relative error.} If $t_L=\gamma_s/\sigma_L\to\infty$ and $t_L=o(L^{d/6})$, then $\PP(F_L)$ divided by the Gaussian proxy of Theorem~\ref{thm:localization}.3 tends to one. If $t_L\to\infty$ and $t_L=o(L^{d/4}/\sqrt{\log L})$, then
\[
 \PP(F_L)=(1+o(1))\sum_{v}\Phi(-x_v)\exp\Big(-\frac{\kappa_{3,v}\gamma_s^3}{6\nu_v^6}\Big),
\]
where the sum runs over one deep hole from each sign pair.
\end{enumerate}
\end{theorem}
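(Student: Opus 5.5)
The plan is to reduce everything to the finite family of quadratic forms $Y_L(v)$, $v\in\D_s$, using Theorem~\ref{thm:localization}.2, and then to do exact tail asymptotics for each form from its spectral representation. By Part~2, up to an event of probability $Ce^{-cL^d}$, $F_L=\bigcup_{v\in\D_s}\{Y_L(v)\le-\gamma_s\}$.

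Two observations make this exceptional event harmless.
\begin{itemize}
\item In Part~1 the target is $\exp(-\gamma_s^2L^d/(2\kappa_nG_*))$ with $\gamma_s\to0$, and this dominates $e^{-cL^d}$.
\item In Part~2, $t_L=o(L^{d/4})$ gives $\gamma_s^2L^d=o(L^{3d/2})$. That alone is not enough, but it suffices that $t_L^2=o(L^d)$, so that $\Phi(-t_L)\gg e^{-cL^d}$.
\end{itemize}

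For each deep hole I write $Y_L(v)=\sum_i\lambda_i(|\xi_i|^2-1)$, with $\lambda_i$ the eigenvalues of the Hermitian kernel $H_v$ and $\xi_i$ i.i.d.\ standard complex Gaussians. The cumulant generating function is then exact:
\[
\log\E e^{-\theta Y_L(v)}=\sum_i\bigl[\theta\lambda_i-\log(1+\theta\lambda_i)\bigr]=\tfrac12\theta^2\nu_v^2-\tfrac16\theta^3\kappa_{3,v}+O\bigl(\theta^4\norm{H_v}_\op^2\nu_v^2\bigr),
\]
valid for $\theta\norm{H_v}_\op\le1/2$. Proposition~\ref{prop:scale} gives $\norm{H_v}_\op\lesssim L^{-d}$, and Theorem~\ref{thm:localization}.1 gives $\nu_v^2=\sigma_L^2G(v,v)(1+o(1))=\kappa_nG_*L^{-d}(1+o(1))$, uniformly as $s\uparrow s_*$, because $|v|\to r_0$ and $G(v,v)=\tfrac12(1+\psi(2|v|))$.

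For Part~1 I use the tilting parameter $\theta=\gamma_s/\nu_v^2\asymp\gamma_sL^d$. Then $\theta\norm{H_v}_\op\lesssim\gamma_s\to0$, so the expansion applies and the cubic and higher terms are $O(\gamma_s)$ relative to the quadratic term. Chernoff's bound gives the upper bound $\log\PP(Y_L(v)\le-\gamma_s)\le-\tfrac{\gamma_s^2}{2\nu_v^2}(1+o(1))$. The matching lower bound comes from the standard change of measure: under the tilted law the variance is $\nu_v^2(1+o(1))$, and Chebyshev's inequality in a window of width $O(\nu_v)$ costs a factor $e^{-O(\gamma_s/\nu_v)}=e^{o(\gamma_s^2/\nu_v^2)}$, since $\gamma_s^2L^d\to\infty$. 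A union over the finitely many deep holes does not change the logarithmic asymptotics. This proves the displayed formula for $\log\PP(F_L)$.

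The aperture statement follows by inversion, with one subtlety. $L_\eta$ is defined through all $L\ge a$, so I need an upper bound on $\PP(F_L)$ uniform over $L\ge a$. This is available in two regimes.
\begin{itemize}
\item Where $\gamma_s^2L^d\to\infty$, the Chernoff bound above is uniform.
\item Where $\gamma_s^2L^d$ stays bounded, $\PP(F_L)$ is bounded below by Theorem~\ref{thm:localization}.4, since $H$ is finite there, so these $L$ cannot meet a vanishing $\eta$.
\end{itemize}
Together these give $L_\eta^d=2\kappa_nG_*\gamma_s^{-2}\log(1/\eta)(1+o(1))$ for arbitrary relative rates.

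For Part~2 I would prove a Cramér-type moderate deviation for each $Y_L(v)$. With the tilting $\theta$ chosen by the saddle-point equation $\Lambda'(\theta)=-\gamma_s$, the saddle-point analysis gives
\[
\PP(Y_L(v)\le-\gamma_s)=\Phi(-x_v)\exp\Bigl(-\frac{\kappa_{3,v}\gamma_s^3}{6\nu_v^6}+O\bigl(t_L^4L^{-d}\bigr)\Bigr)(1+o(1)).
\]
The error terms are controlled as follows.
\begin{itemize}
\item The tilted law is close to normal: a Berry--Esseen bound for the tilted sum gives error $O(\norm{H_v}_\op/\nu_v)=O(L^{-d/2})$.
\item The cubic term is $O(t_L^3L^{-d/2})$, which vanishes when $t_L=o(L^{d/6})$. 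This gives the first claim, after matching with the Gaussian proxy of Theorem~\ref{thm:localization}.3, whose marginals have exactly the variances $\nu_v^2$.
\item The quartic term is $O(t_L^4L^{-d})$. This is where the restriction to $o(L^{d/4})$ comes from, with the $\sqrt{\log L}$ absorbing the polynomial prefactor in the Mills-ratio comparison.
\end{itemize}

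To pass from the individual tails to $\PP(F_L)$, I use inclusion--exclusion over $\D_s$. Two kinds of pairs arise.
\begin{itemize}
\item \emph{Distinct sign classes.} For $v$ and $w$ in different classes the limiting correlation $G(v,w)/G_*$ is strictly less than $1$ (Proposition~\ref{prop:classes}). A bivariate tilting bound then gives $\PP(Y_L(v)\le-\gamma_s,\,Y_L(w)\le-\gamma_s)\le e^{-(1+c)t_L^2/2}$, which is negligible relative to $\Phi(-t_L)$.
\item \emph{Opposite pairs $v,-v$.} This is the main obstacle. $Y_L(v)$ and $Y_L(-v)$ are not identical at finite $L$, and the claimed formula counts each pair once, so I must show $\PP(Y_L(v)\le-\gamma_s<Y_L(-v))=o(\Phi(-x_v))$. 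My first attempt is to write $Y_L(-v)-Y_L(v)$ as a quadratic form of variance $o(\sigma_L^2)$. Using the symmetry $a\mapsto-a$ of $\Lambda\cap\overline{B_L}$ and the substitution $a\mapsto a-v$, the difference is a boundary sum over an $O(1)$-neighbourhood of $\partial B_L$. I would then show its variance is $O(L^{-d-1})$ and its operator norm is $O(L^{-d-1})$. The Hanson--Wright bound (Lemma~\ref{lem:tail}) then makes a discrepancy of size $\varepsilon\nu_v/t_L$ occur with probability $e^{-cL\varepsilon^2/t_L^2}$, which is negligible.
\end{itemize}

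If the boundary estimate gives only $o(\sigma_L^2)$ and no power saving, I would instead treat the pair jointly as a two-dimensional tilted vector whose correlation tends to $1$. Within this range of $t_L$, such a pair has union probability $(1+o(1))$ times a single tail, provided $t_L^2(1-\rho_L)\to0$. Establishing that rate is the genuinely delicate quantitative step.
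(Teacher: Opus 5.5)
Your Part~1 is correct and follows the paper's argument: localization to the deep holes, a Chernoff upper bound, a tilted lower bound with a Chebyshev window of width $O(\nu_v)$, and a Bernstein bound uniform over $L\ge\ell_+$ for the inversion. One small fix: choose the tilt from the saddle-point equation at level $\gamma_s+3\nu_v$, as the paper does. At the fixed value $\theta=\gamma_s/\nu_v^2$ the tilted mean is $-\gamma_s+O(\gamma_s^2)$, and $\gamma_s^2$ need not be small compared with $\nu_v$.

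The gap is in Part~2, at the step you flag yourself: showing $\mathbb P\{Y_L(v)\le-\gamma_s<Y_L(-v)\}=o(\Phi(-x_v))$. Your proposed mechanism is not right. Substituting $b=a-v$ shows that $\operatorname{Re}S_L(-v)$ is the lag-$v$ statistic sampled on the translate $(\Lambda_s\cap\overline{B_L})-v$. This set is disjoint from $\Lambda_s$, and the symmetry $a\mapsto-a$ only gives equality in law. So $Y_L(v)-Y_L(-v)$ is not a boundary sum. In the paper's notation, central symmetry of the sample set makes the sampling operator $\mathcal K$ real symmetric, $H_{\pm v}=Q_v\pm A_v$, and the difference is $2a_v$, the centered form of $A_v$. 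Its kernel $\frac{\mathrm i}{2}D_L(k(u-u'))[\sin(ku\cdot v)-\sin(ku'\cdot v)]$ is $O(1/L)$ near the diagonal. It is of unit size near every aliased frequency $k(u-u')=G\in\mathcal G_s\setminus\{0\}$, where $D_L(G)=1$ and the bracket does not vanish. All four critical lattices of the paper alias; for the critical square lattice, $2\pi/a\approx11.6<2k\approx12.6$. Hence $\|A_v\|_{\mathrm{op}}$ is of order $L^{-d}$, not $L^{-d-1}$, and the variance bound available in general, including tangencies, is $\operatorname{tr}A_v^2\le CL^{-3d/2}$ (Lemma~\ref{lem:signdiff}).

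Even granting your bounds, an unconditional tail bound for the discrepancy cannot reach the stated range. Since $x_v\asymp t_L$, your bound $e^{-c\varepsilon^2L/t_L^2}$ is $o(\Phi(-x_v))$ only when $t_L^4=o(L)$. This misses $t_L=o(L^{1/2}/\sqrt{\log L})$ for $n=3$. With the true operator norm, the linear Bernstein term of Lemma~\ref{lem:tail} caps you at $t_L=o(L^{d/6})$ in every dimension. Your fallback condition $t_L^2(1-\rho_L)\to0$ is the right heuristic, but what is missing is exactly the rate of $1-\rho_L$ and the passage from a Gaussian pair to the chaos.

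The paper controls the discrepancy under the tilt that produces the rare event, not unconditionally. Tilting by $q_v$ at its saddle point $h$, the Gaussian vector has real symmetric covariance $(I+hQ_v)^{-1}$. So $a_v$ stays centered with variance at most $4\operatorname{tr}A_v^2\le C\sigma_L^2L^{-d/2}$. Hence $|a_v|\le a_L\asymp\sigma_LL^{-d/4}\sqrt{\log L}$ off a tilted event of probability $O(L^{-p})$. The extra band $\{0<q_v+\gamma_s\le|a_v|\}$ then has weight $O(t_La_L/\sigma_L)$ relative to the main term, whose Mills factor is $\asymp t_L^{-1}$ (Lemma~\ref{lem:collapse}). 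This is the source of the $\sqrt{\log L}$ in the hypothesis, not a Mills-ratio prefactor. The same collapse, applied to the Gaussian pair, is also needed for your ratio statement when $t_L=o(L^{d/6})$, because the proxy keeps both $v$ and $-v$.
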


Part 1 gives the exact constant in the order $\gamma^{-2}\log(1/\eta)$ of Theorem~\ref{thm:aperture}. It agrees with Theorem~\ref{thm:sharp} at the boundary between the two regimes: for the limiting Gaussian vector, $H(q)=N_*\Phi(-q/\sqrt{G_*})(1+o(1))$ as $q\to\infty$, where $N_*$ is the number of sign classes, so that
\[
 \frac{q_\eta^2}{G_*}=2w-\log w+2\log N_*-\log(4\pi)+o(1),\qquad w=\log(1/\eta).
\]
Because the deep holes of the critical lattice lie at distance $r_0$ from the origin and $\phi$ is radial, $G_*$ is the same for every lattice. In the high-reliability regime the lattice therefore enters the leading constant only through $\gamma$, that is, through its covering radius. The number of sign classes appears in the quantile of $H$ at the next order. At a fixed target failure probability, by contrast, the whole deep-hole covariance enters $q_\eta$. Transferring that next-order term to the actual aperture requires accuracy conditions on the finite-$L$ covariances that we do not verify here. When $\tr H_v^3>0$, the cubic correction in Part 2 reduces the corresponding Gaussian tail approximation, consistent with the frequencies in Table~\ref{tab:deepholes}.

In Theorems~\ref{thm:localization} and~\ref{thm:sharp} the geometry enters in two separate ways: the covering radius determines $\gamma$, whereas the deep holes determine $q_\eta$ through the covariance $G$ on $\D$, that is, through both the number of sign classes and their relative positions, while $\kappa_n$ does not depend on the lattice. At equal margin $\gamma=0.15$ and $L=8$, the simulated failure frequency of the hexagonal lattice is $15.6$ times that of the square lattice (Section~\ref{sec:numerics}), reflecting differences in both the number of sign classes and their covariance. At equal port density the margin of the hexagonal lattice is much larger, and in the same simulations at $4.01$ points per $\lambda^2$ and $L=4$ it fails in $0.12\%$ of realizations against $66\%$ for the square lattice. Covering-optimal lattices maximize the margin at a given density, and in these simulations they also need the smaller aperture. Theorem~\ref{thm:sharp} applies to one lattice shape at a time and does not by itself compare apertures across shapes.

\begin{proposition}[Other windows and spectra]\label{prop:weight}
For a fixed symmetric convex body $K$, a fixed lattice, and a fixed continuous $f$ with $c_f\le f\le C_f$, the normalized covariance limit in Theorem~\ref{thm:localization}.1 holds with $G$ replaced by
\[
 G_{K,f}(e,w)=\frac{\int f(u)^2\omega_K(u)\cos(ku\cdot e)\cos(ku\cdot w)\,\sigma(\dd u)}{\int f(u)^2\omega_K(u)\,\sigma(\dd u)},\qquad
 \omega_K(u)=\int_{u^\perp}\Big[\int_\R\mathbf 1_K(y+tu)\dd t\Big]^2\dd y .
\]
The weight $\omega_K(u)$ is the integral of the squared chord length of $K$ in direction $u$. If $\argmin_\V m$ is finite and $m$ grows linearly away from it, the localization argument applies to this set. The constants here may depend on $K$, $f$, and the growth constant.
\end{proposition}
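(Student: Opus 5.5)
The plan has two parts: first compute the covariance limit, then repeat the localization argument of Theorem~\ref{thm:localization}.2 with a general minimizing set.

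\emph{Covariance.} By Wick's formula (Lemma~\ref{lem:wick}), $Y_L(e)$ is a quadratic form in the field. Its covariance is a sum of two terms, each of the form
\[
 M^{-2}\sum_{a,b\in A_L}\phi(a-b)\,\overline{\phi(a-b\pm\cdot)}
\]
with the lags $e,w$ inserted. I will write this spectrally as $\iint |A_L(u,u')|^2$ times the phases $e^{\mathrm ik(u\cdot e-u'\cdot w)}$ and their sign variants, integrated against $f(u)f(u')\,\sigma(\dd u)\,\sigma(\dd u')$. The key step is the asymptotics of the Fej\'er-type kernel $|A_L(u,u')|^2$. Since $M^{-1}\sum_{a\in\Lambda\cap LK}e^{\mathrm ika\cdot\xi}$ is essentially $v\,|LK|^{-1}\widehat{\mathbf 1_{LK}}$ evaluated at $k\xi$ and summed over the aliases $k\xi\in 2\pi\Lambda^*$, the kernel concentrates near the diagonal $u=u'$ (and near alias/antipodal sets) at scale $1/L$. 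After normalizing by $\sigma_L^2$, I will integrate $|\widehat{\mathbf 1_{LK}}(k(u-u'))|^2$ over $u'$ near $u$. Locally $u'-u$ ranges over the tangent plane $u^\perp$, which is $(n-1)$-dimensional. By Plancherel on $u^\perp$, $\int_{u^\perp}|\widehat{\mathbf 1_K}(\xi)|^2\dd\xi$ is $(2\pi)^{n-1}$ times the $L^2(u^\perp)$ norm of the projection $y\mapsto\int\mathbf 1_K(y+tu)\dd t$. That norm is exactly $\omega_K(u)$, and this gives the weight $L^{-d}f(u)^2\omega_K(u)$. The phase factors then become $e^{\mathrm iku\cdot(e\mp w)}$, again by continuity of $f$ and the $1/L$ concentration. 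Summing the $\pm$ contributions and taking real parts gives $\cos(ku\cdot e)\cos(ku\cdot w)$. Dividing by the $e=w=0$ variance gives $G_{K,f}$. Aliased contributions, where $u-u'$ is near a nonzero reciprocal vector divided by $k$, are handled as in Theorem~\ref{thm:localization}.1. These sets are lower dimensional on $S^{n-1}\times S^{n-1}$, or sit at tangencies, and the uniform operator bound of Proposition~\ref{prop:scale} controls them. I expect them to contribute $o(L^{-d})$. Continuity of $f$ is used only to replace $f(u')$ by $f(u)$. The boundary of $K$ enters only through $\omega_K$, because convexity gives Lipschitz chord lengths.

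\emph{Localization.} Let $\mathcal M=\argmin_\V m$, assumed finite, with linear growth $m(e)-\gamma\ge c_g\dist(e,\mathcal M)$. The proof of Theorem~\ref{thm:localization}.2 uses only three inputs: this growth, the tail bound $\PP(\sup_\V|\nabla Y_L|\ge c_g)\le Ce^{-cL^d}$, and the fact that $\V$ is star-shaped along segments toward minimizers. The gradient bound follows from Lemma~\ref{lem:tail} applied to the derivative quadratic forms, whose kernels again have operator norm $O(L^{-d})$ by Proposition~\ref{prop:scale} uniformly in $e$. A net argument then makes the bound uniform over $\V$. Off the exceptional event, $\Rea S_L$ strictly increases along every segment from a point of $\mathcal M$. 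I must check that each point of $\V$ lies on such a segment, with $m$ growing linearly along it; the linear-growth hypothesis is stated so that this holds. Then $\argmin\subseteq\mathcal M$. The Gaussian comparison of part 3 then carries over verbatim with $\mathcal M$ replacing $\D_s$. The constants depend on $K$, $f$, and $c_g$ through the operator bound and the growth slope.

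\emph{Main obstacle.} The hardest step is the uniform kernel asymptotics near tangential aliases for a general convex $K$. There $\widehat{\mathbf 1_K}$ decays only like $|\xi|^{-(n+1)/2}$, so neglecting the tails needs care. I would first try dominated convergence, using the integrable majorant $\int_{u^\perp}|\widehat{\mathbf 1_K}|^2<\infty$ together with Proposition~\ref{prop:scale}.
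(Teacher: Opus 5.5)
There is a genuine gap, and it sits exactly at the step you flag as the main obstacle. Your architecture otherwise matches the paper's. You write the covariance as $\tr(H_eH_w)$ and evaluate the diagonal term by flattening the sphere onto $u^\perp$ and applying Plancherel there. That gives $(2\pi)^d\omega_K(u)$, and through $\mu\otimes\mu$ it gives the weight $f(u)^2$. Aliases are of lower order, and localization comes from linear growth plus an exponentially small gradient tail.

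The repair you propose for the hard window does not work, for three reasons.
\begin{enumerate}
\item The Poisson sum for $\mathbf 1_{LK}$ is not absolutely convergent. Already for the ball its terms decay only like $|\xi|^{-(n+1)/2}$, which is not summable over a lattice in $\R^n$ for $n\ge2$.
\item For a general convex body the pointwise decay is worse than you state. For a square or cube window, $\widehat{\mathbf 1_K}$ decays only like $|\xi|^{-1}$ along face normals. Finiteness of $\int_{u^\perp}|\widehat{\mathbf 1_K}|^2$ is not a dominating function. Dominated convergence needs an integrable pointwise bound, uniform in $L$, on the curved surfaces $\{kL(u-u')\}$ as they approach $u^\perp$, and also control away from the diagonal.
\item Proposition~\ref{prop:scale} bounds the alias contributions only by $O(L^{-d})$, the order of the main term. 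It cannot show that they are $o(L^{-d})$.
\end{enumerate}

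The missing idea is to avoid Fourier analysis of the hard window altogether.
\begin{enumerate}
\item Take $w_\delta\in C_c^\infty$ with $\mathbf 1_{(1-\delta)K}\le w_\delta\le\mathbf 1_K$. Set $A^{w_\delta}=M^{-1}\sum_{a\in\Lambda}w_\delta(a/L)|v_a\rangle\langle v_a|$, keeping the normalization $M$.
\item Then $0\le A-A^{w_\delta}\le A$. Its trace is $M^{-1}$ times a weighted count of lattice points in $LK\setminus(1-\delta)LK$, hence at most $C(\delta+L^{-1})$.
\item By Lemma~\ref{lem:op}, $\norm{A-A^{w_\delta}}_\HS^2\le\norm{A}_\op\tr(A-A^{w_\delta})\le C(\delta+L^{-1})L^{-d}$. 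Cauchy--Schwarz then shows that replacing $D_L$ by $D_L^{w_\delta}$ changes $L^d\iint|D_L|^2\Psi$ by at most $C\norm{\Psi}_\infty\sqrt{\delta+L^{-1}}$.
\item For $w_\delta$ the Poisson terms are Schwartz, so at each $q$ at most one $G$ matters. Under $\mu\otimes\mu\le C_f^2\,\sigma\otimes\sigma$, the law of $k(u-u')$ assigns mass $O(r^{3/2})$ for $n=2$ and $O(r^3)$ for $n=3$ to a ball of radius $r$ away from the origin, uniformly through tangencies. Hence the terms with $G\ne0$ contribute $O(L^{-3/2})$ and $O(L^{-3})$ respectively, which is $o(L^{-d})$.
\item The diagonal term produces $\int_{u^\perp}[\int w_\delta(y+tu)\dd t]^2\dd y$, which tends to $\omega_K(u)$ as $\delta\to0$. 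Let $L\to\infty$ and then $\delta\to0$.
\end{enumerate}

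In the localization part there is a smaller correction. The growth hypothesis does not make $\Rea S_L$ monotone along segments, and you do not need monotonicity. For $e\notin\argmin_\V m$ and a nearest minimizer $v$, the segment $[e,v]$ lies in the convex set $\V$. The bound $\Rea S_L(e)-\Rea S_L(v)\ge(c_g-\sup|\nabla Y_L|)\,|e-v|$ follows from the growth at $e$ and the mean value theorem for $Y_L$. This settles your ``I must check'' item. Your net argument for the gradient tail is a fine substitute for the paper's Sobolev-type bound with moments of order $p\asymp L^d$.
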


The convergence of $L^d\sigma_L^2$ cannot be uniform over the densities with $c_f\le f\le C_f$: for $f_m=1+\frac12\cos(m\theta)$ with $m$ growing faster than $L^2$, $L\sigma_L^2(f_m)-L\sigma_L^2(1)\to0$ while the limits differ by $1/(3\pi^3)$ (Appendix~\ref{app:deepholes}). A rate requires angular regularity of $f$.

\section{Specular Components}\label{sec:specular}

A continuous angular spectrum is what makes $\sigma_L\to0$. A specular path is an atom of the spectrum, and spatial averaging does not remove its random gain. We now let
\[
 \mu=(1-\alpha)f\sigma+\sum_{j=1}^Jp_j\delta_{u_j},\qquad \sum_jp_j=\alpha\in(0,1),
\]
\[
 h(r)=\sqrt{1-\alpha}\,h_c(r)+\sum_{j=1}^J\sqrt{p_j}\,g_j\,e^{\mathrm ik u_j\cdot r},
\]
with $h_c$ a continuous-spectrum field as in Section~\ref{sec:model}, independent of the gains. We consider two gain models:
\begin{itemize}
\item \emph{(G) Rayleigh-faded paths}: $g_j$ i.i.d.\ $\CN(0,1)$, so $h$ is Gaussian.
\item \emph{(D) fixed-amplitude paths}: $g_j=e^{\mathrm i\theta_j}$ with independent phases uniformly distributed on $[0,2\pi)$.
\end{itemize}
Write $\xi_j=ku_j$ and $m_c(e)=(1-\alpha)\Rea\phi_c(e)$, and assume the continuous margin $\gamma_c=\min_\V m_c>0$, except in part~3 of Theorem~\ref{thm:known}, where the lattice is critical and $\gamma_c=0$. Two paths \emph{alias} if $\xi_j-\xi_l\in\G:=2\pi B^{-T}\Z^n$, the angular reciprocal lattice. For an alias class $\mathcal C$ and $e\in\R^n$, let $B_e$ be the Hermitian matrix with entries
\[
 (B_e)_{lj}=\tfrac12\sqrt{p_lp_j}\big(e^{-\mathrm i\xi_l\cdot e}+e^{\mathrm i\xi_j\cdot e}\big)\quad(l,j\in\mathcal C),
\]
assembled block-diagonally over classes. Without aliasing, $B_e=\diag\big(p_j\cos(\xi_j\cdot e)\big)$.

\begin{theorem}[Sign law with Rayleigh-faded paths]\label{thm:specular}
In model (G), for a fixed configuration and large $L$,
\[
 \sup_{e\in\V}|S_L(e)-S_\infty(e)|=O_P\big(L^{-d/2}+L^{-1}\big),\qquad \Rea S_\infty(e)=m_c(e)+g^*B_eg .
\]
Write $g=\sqrt T\,U$, where $T$ has a Gamma distribution with shape $J$ and unit scale and, independently, $U$ is uniform on the unit sphere of $\C^J$, and put
\[
 r(U)=\max\Big\{0,\ \max_{e\in\V}\frac{-U^*B_eU}{m_c(e)}\Big\},\qquad
 H_J(r)=e^{-1/r}\sum_{q=0}^{J-1}\frac{r^{-q}}{q!}\ (r>0),\quad H_J(0)=0 .
\]
Then
\[
 \lim_{L\to\infty}\PP(F_L)=\E\,H_J\big(r(U)\big),
\]
and the limit is positive if and only if $B_e$ has a negative eigenvalue for some $e\in\V$.
\end{theorem}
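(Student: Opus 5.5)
We split $h=\sqrt{1-\alpha}\,h_c+h_s$ with $h_s(r)=\sum_j\sqrt{p_j}\,g_je^{\mathrm ik u_j\cdot r}$. Expanding $S_L(e)$ bilinearly gives three parts. The first is the continuous part $(1-\alpha)S^c_L(e)$. The second is a cross part, linear in $h_c$ and in $g$. The third is the specular part
\[
 \sum_{l,j}\sqrt{p_lp_j}\,g_l\bar g_j\,e^{-\mathrm i\xi_j\cdot e}\,\frac1M\sum_{a\in A_L^\ell}e^{\mathrm i(\xi_l-\xi_j)\cdot a}.
\]

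\textbf{Specular part.} The lattice sum equals $1$ when $\xi_l-\xi_j\in\G$. Otherwise it is a normalized exponential sum over $\Lambda\cap LK\cap(LK+\ell)$ at a nonzero frequency modulo $\G$, and it is $O(L^{-1})$ because the boundary of $LK$ contributes $O(L^{n-1})$ terms against $M\asymp L^n$. Both facts hold uniformly in $e$. Taking real parts and symmetrizing in $(l,j)$ yields exactly $g^*B_eg$ block-diagonally over alias classes.

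\textbf{Continuous and cross parts.} For the continuous part we use the chaining bound already behind Theorem~\ref{thm:aperture}. Lemma~\ref{lem:tail}, Proposition~\ref{prop:scale} and a net of $\V$ give $\sup_\V|S^c_L-\phi_c(-e)|=O_P(L^{-d/2})$. For the cross terms, conditional on $g$ each is a centered Gaussian linear functional of $h_c$ of the form $\frac1M\sum_a h_c(a)e^{-\mathrm i\xi_j\cdot(a+e)}$. Its variance is $\langle A_Lv,v\rangle\le\norm{A_L}_\op=O(L^{-d})$, and a Lipschitz/net argument in $e$ controls the supremum. This proves the first display.

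\textbf{Limit probability.} Since $\min_\V\Rea S_\infty$ is a continuous function of $g$, and the uniform convergence lets us pass to the limit, $\PP(F_L)\to\PP(\min_\V\Rea S_\infty\le0)$, provided the boundary event $\{\min_\V\Rea S_\infty=0\}$ has probability zero. This continuity step is the main obstacle. Write $g=\sqrt T U$. Then failure means there is $e$ with $T\,(-U^*B_eU)\ge m_c(e)$. Because $m_c\ge\gamma_c>0$, this is equivalent to $T\ge1/r(U)$ when $r(U)>0$, and it never occurs when $r(U)=0$. For fixed $U$ the boundary is the single value $T=1/r(U)$, a null event because $T$ has a density and is independent of $U$. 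This handles the obstacle via Fubini, with $r(U)$ measurable as a supremum of continuous functions over compact $\V$.

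We then condition on $U$. Since $\PP(T\ge x)=e^{-x}\sum_{q<J}x^q/q!$ for Gamma$(J,1)$, setting $x=1/r$ gives $H_J(r)$, and therefore $\lim\PP(F_L)=\E H_J(r(U))$.

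\textbf{Positivity criterion.} $H_J(r)>0$ exactly when $r>0$. If $B_{e_0}$ has a negative eigenvalue, then $U^*B_{e_0}U<0$ on an open subset of the sphere, which has positive uniform measure, so the expectation is positive. Conversely, if every $B_e\succeq0$, then $r(U)=0$ identically and the limit is $0$.
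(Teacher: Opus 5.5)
Your proposal is correct and follows the paper's proof essentially step for step. It uses the same split into continuous, cross, and atomic parts, and the same bound $|D_L(q)|=O(L^{-1})$ for $q\notin\G$; the paper makes that bound rigorous by comparing $\A$ with $\A+a_0$ for some $a_0\in\Lambda$ with $e^{\mathrm ia_0\cdot q}\ne1$. It also uses the same bound $\norm{A}_{\op}=O(L^{-d})$ on the cross-term variances, and the same conditioning on $U$, under which the boundary event $\{\min_\V\Rea S_\infty=0\}$ reduces to the single value $T=1/r(U)$.

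The only point to tighten is the supremum of the cross terms. There the paper bounds the variances of all $e$-derivatives by $k^{2|\alpha|}\norm{A}_{\op}$ and applies a Sobolev bound on the cell. Your net argument should likewise rest on these $L^2$ increment bounds (i.e.\ chaining). A crude pathwise Lipschitz constant combined with a union bound would give $L^{-d/2}\sqrt{\log L}$ instead of the stated rate, although the limit law would be unaffected.
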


\begin{corollary}[No power threshold]\label{cor:nothreshold}
Without aliasing, the limit is positive if and only if $\cos(\xi_j\cdot e)<0$ for some path $j$ and some $e\in\V$, whatever the powers $p_j>0$. For a single path of power $p$,
\[
 \lim_{L\to\infty}\PP(F_L)=\exp\Big(-\frac{(1-p)\kappa}{p}\Big),\qquad
 \kappa=\inf_{e\in\V:\ \cos(\xi\cdot e)<0}\frac{\Rea\phi_c(e)}{-\cos(\xi\cdot e)} ,
\]
with $\kappa=\infty$ and limit $0$ when $\cos(\xi\cdot e)\ge0$ on $\V$.
\end{corollary}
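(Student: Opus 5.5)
The plan is to read both claims off Theorem~\ref{thm:specular}, specialized to the case without aliasing. Then every alias class is a singleton and $B_e=\diag\big(p_j\cos(\xi_j\cdot e)\big)$. Its eigenvalues are $p_j\cos(\xi_j\cdot e)$, and since every $p_j>0$, $B_e$ has a negative eigenvalue exactly when $\cos(\xi_j\cdot e)<0$ for some $j$. The positivity criterion of Theorem~\ref{thm:specular} therefore becomes the stated condition, and it does not involve the values of the powers. This proves the first claim.

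For a single path ($J=1$), the unit sphere of $\C^1$ is the unit circle. So $U$ is a unimodular scalar, $U^*B_eU=p\cos(\xi\cdot e)$ is deterministic, and $r(U)=r$ is a constant. Using $m_c(e)=(1-p)\Rea\phi_c(e)$, which is positive on $\V$ because $\gamma_c>0$, we get
\[
 r=\max\Big\{0,\ \max_{e\in\V}\frac{-p\cos(\xi\cdot e)}{(1-p)\Rea\phi_c(e)}\Big\}.
\]
The inner maximum exists because $\V$ is compact and the ratio is continuous, the denominator being positive.

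If $\cos(\xi\cdot e)\ge0$ on $\V$, then $r=0$ and the limit is $H_1(0)=0$, which matches the convention $\kappa=\infty$. Otherwise $r>0$. The maximum of a positive quantity is the reciprocal of the infimum of its reciprocal over the set where it is positive, so
\[
 r=\frac{p}{1-p}\cdot\frac1\kappa ,
\]
with $\kappa$ as in the statement. Here $\kappa\in(0,\infty)$, since the infimum is attained on the closure by compactness and continuity, and stays positive because $\Rea\phi_c\ge\gamma_c/(1-p)>0$ while $-\cos\le1$.

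With $J=1$, $H_1(r)=e^{-1/r}$, so the limit is $\exp(-(1-p)\kappa/p)$. As a sanity check, $T$ is then Exp(1), and failure is the event $T\,p\cos(\xi\cdot e)\le-(1-p)\Rea\phi_c(e)$ for some $e$, that is, $T\ge1/r$, which has probability $e^{-1/r}$.

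The only delicate step is the sup/inf inversion in the definition of $\kappa$, in particular checking that restricting to $\{\cos(\xi\cdot e)<0\}$ is harmless. This holds because points with $\cos(\xi\cdot e)\ge0$ contribute nonpositive values to the maximum defining $r$, and $r>0$ in this case.
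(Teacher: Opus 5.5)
Your proof is correct and follows the paper's route: specialize Theorem~\ref{thm:specular} to the diagonal matrix $B_e=\diag\big(p_j\cos(\xi_j\cdot e)\big)$ to get the positivity criterion, and for $J=1$ compute $r=p/((1-p)\kappa)$ and use $H_1(r)=e^{-1/r}$. One small wording point: the infimum defining $\kappa$ is not attained on the closure where $\cos(\xi\cdot e)=0$, since the ratio blows up there, but you never need attainment; nonemptiness of the set and your bound $\Rea\phi_c\ge\gamma_c/(1-p)$ already give $0<\kappa<\infty$.
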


When some $e\in\V$ has $\cos(\xi\cdot e)<0$, the floor is positive for every $p>0$ but exponentially small in $1/p$, and it is at most $\eta$ if and only if
\[
 p\le p_\eta:=\frac{\kappa}{\kappa+\log(1/\eta)} .
\]
This is the effective power threshold at target failure probability $\eta$. When $\kappa=\infty$ the floor vanishes and $p_\eta=1$. With $\Rea\phi_c\approx0.2$ and $|\cos|\approx0.5$ where the cosine of the path phase change is negative, $\kappa\approx0.4$. A path carrying $10\%$ of the power leaves a floor of about $e^{-3.6}\approx0.027$, and $p_\eta\approx0.055$, about $-12.6$ dB relative to the total average power, for $\eta=10^{-3}$.

\begin{proposition}[Fixed-amplitude paths]\label{prop:fixed}
In model (D) without aliasing, $S_\infty(e)=(1-\alpha)\phi_c(-e)+\sum_jp_je^{-\mathrm i\xi_j\cdot e}$ is deterministic. Hence $\PP(F_L)\to0$ if $\min_\V\Rea S_\infty>0$ and $\PP(F_L)\to1$ if $\min_\V\Rea S_\infty<0$. For a single path, with $\kappa$ as above,
\[
 \min_\V\Rea S_\infty>0\iff p<p^*:=\frac{\kappa}{1+\kappa},
\]
where $p^*=1$ if $\kappa=\infty$.
\end{proposition}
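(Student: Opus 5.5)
In model (D) without aliasing, the plan has two stages. First I show that $S_L$ converges uniformly on $\V$ to the deterministic function $S_\infty$. Then I read off the single-path threshold by elementary algebra.

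\emph{Stage 1: the decomposition.} Substituting $h=\sqrt{1-\alpha}\,h_c+\sum_j\sqrt{p_j}e^{\mathrm i\theta_j}e^{\mathrm i\xi_j\cdot r}$ into $S_L(e)$ gives three kinds of terms.
\begin{itemize}
\item The continuous part is $(1-\alpha)S^c_L(e)$. By Theorem~\ref{thm:aperture} and its chaining bound (the exceptional event in Corollary~\ref{cor:direction}), $\sup_\V|S^c_L-\phi_c(-\cdot)|\to0$ in probability. This uses Proposition~\ref{prop:scale} for the scale $L^{-d/2}$, applied to both real and imaginary parts, the latter by the same argument with $\mathrm i$ inserted.
\item The diagonal path terms are $p_j\,\overline{e^{\mathrm i\xi_j\cdot e}}\cdot\frac1M\sum_a|e^{\mathrm i\xi_j\cdot a}|^2=p_je^{-\mathrm i\xi_j\cdot e}$. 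They are exact and deterministic, because the random phase cancels against its conjugate.
\item The off-diagonal path terms, for $j\ne l$, are $\sqrt{p_jp_l}\,e^{\mathrm i(\theta_j-\theta_l)}e^{-\mathrm i\xi_l\cdot e}\frac1M\sum_a e^{\mathrm i(\xi_j-\xi_l)\cdot a}$. Since $\xi_j-\xi_l\notin\G$ (no aliasing), the normalized exponential sum over $\Lambda\cap LK\cap(LK+\ell)$ tends to $0$. This follows by a lattice-point/Riemann-sum argument: reduce $\xi_j-\xi_l$ modulo $\G$ to a nonzero frequency in a fundamental domain, and compare the sum with the Fourier transform of the indicator of the convex set at scale $L$. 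That transform is $o(L^n)$, while $M\asymp L^n$. The bound is uniform in $e$ because $e$ only enters through a unimodular factor.
\item The cross terms between $h_c$ and the paths are $\frac1M\sum_a h_c(a)e^{-\mathrm i\xi_j\cdot(a+e)}$ and similar. Each is a centered Gaussian linear functional of $h_c$ with variance $\langle A_Lv,v\rangle\le\norm{A_L}_\op\to0$, with $v$ the appropriate plane wave in $L^2(f\sigma)$. The sup over $e$ is trivial, since $e$ appears only as a unimodular factor (or, in the other cross term, as $h_c(a+e)$, handled by a finite net plus the gradient bound, as in the aperture proof).
\end{itemize}
Together these give $\sup_\V|S_L-S_\infty|\to0$ in probability.

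\emph{Stage 2: the dichotomy.} Let $\varepsilon=|\min_\V\Rea S_\infty|>0$. If $\min_\V\Rea S_\infty>0$, failure requires a deviation of at least $\varepsilon$, so $\PP(F_L)\le\PP(\sup|S_L-S_\infty|\ge\varepsilon)\to0$. If $\min_\V\Rea S_\infty<0$, the minimizer $e^*$ of $\Rea S_\infty$ has $\Rea S_L(e^*)<0$ with probability tending to one.

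\emph{Stage 3: the single-path threshold.} For one path,
\[
\Rea S_\infty(e)=(1-p)\Rea\phi_c(e)+p\cos(\xi\cdot e).
\]
Where $\cos(\xi\cdot e)\ge0$, this is at least $\gamma_c>0$. Elsewhere, positivity is equivalent to $(1-p)\,\Rea\phi_c(e)/(-\cos(\xi\cdot e))>p$, that is, to $p/(1-p)$ being below the ratio. Compactness of $\V$ and continuity of both functions turn the pointwise condition into $p/(1-p)<\kappa$, the infimum over the open set. This step needs care at the boundary: near points where $\cos\to0^-$ the ratio tends to $+\infty$ because $\Rea\phi_c\ge\gamma_c/(1-\alpha)>0$, so the infimum is attained in the interior and strict inequality transfers. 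This gives $p<\kappa/(1+\kappa)$, and the case $\kappa=\infty$ is immediate.

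\emph{Main obstacle.} The only nonroutine step is the uniform $o(1)$ for the off-diagonal exponential sums over the truncated lattice set. It needs the convex-body lattice-point estimate rather than anything proved in the paper.
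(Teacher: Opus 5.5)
Your proposal follows the paper's proof (the uniform-limit and fixed-amplitude paragraphs of Appendix~\ref{app:specular}) essentially step for step. You use the same split of $S_L$ into three pieces. The continuous part is handled by the chaining bound~\eqref{eq:uniformtail}. The field--path cross terms are conditionally Gaussian linear functionals of $h_c$ with variance at most $\norm{A}_\op=O(L^{-d})$, made uniform in $e$ by a net or derivative bound. The path--path terms have diagonal part exactly $p_je^{-\mathrm i\xi_j\cdot e}$, and their off-diagonal part carries the factor $D_L(\xi_j-\xi_l)$. The dichotomy then follows from uniform convergence, and the single-path case is the same algebra. You also note that the infimum defining $\kappa$ is attained, because $\Rea\phi_c\ge\gamma_c/(1-p)>0$ sends the ratio to $+\infty$ where $\cos(\xi\cdot e)\uparrow0$. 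That is what makes the strict inequality transfer, and the paper leaves it implicit.

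The step you flag as the main obstacle is true, but your sketch does not yet prove it. A Riemann-sum comparison for a frequency at the lattice scale has error $O(|q|)$ per cell, hence $O(M)$ in total, not $o(M)$. The paper needs no Fourier analysis here. Since $q=\xi_j-\xi_l\notin\G$, some $a_0\in\Lambda$ has $e^{\mathrm ia_0\cdot q}\ne1$, and
\[
 D_L(q)\big(1-e^{\mathrm ia_0\cdot q}\big)=\frac1M\Big[\sum_{a\in\A}e^{\mathrm ia\cdot q}-\sum_{a\in\A+a_0}e^{\mathrm ia\cdot q}\Big].
\]
The right side involves only $\A\triangle(\A+a_0)$, that is, $O(L^{n-1})$ lattice points within distance $|a_0|$ of the boundary of the convex set $LK\cap(LK+\ell)$. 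With $M\asymp L^n$ this gives $|D_L(q)|=O(L^{-1})$, uniformly in $e$ and in the phases. Your route can also be made exact. For $P=B[-\frac12,\frac12)^n$ one has $\int_{a+P}e^{\mathrm ix\cdot q}\dd x=e^{\mathrm ia\cdot q}\int_Pe^{\mathrm iy\cdot q}\dd y$. Choosing the representative of $q$ modulo $\G$ with $B^{T}q\in(-\pi,\pi]^n$ makes the last integral at least $(2/\pi)^nv$ in modulus. The sum is then a bounded multiple of $\int_{\A+P}e^{\mathrm ix\cdot q}\dd x$, which differs from $L^n\widehat{\mathbf 1}_K(Lq)=o(L^n)$ by $O(L^{n-1})$. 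The translation argument is shorter and gives the rate directly.
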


For a fixed-amplitude path, failures disappear as $L\to\infty$ below the power threshold $p^*$. For a Rayleigh-faded path there is no such threshold: wherever the cosine of its phase change is negative in the cell, a sufficiently large gain draw reverses the sign, and this event has a probability that does not depend on $L$. With aliasing in model (D), the relative phases $\theta_j-\theta_l$ enter $S_\infty$ and the limit is random again.

\begin{corollary}[Robustness to every direction]\label{cor:quarter}
$\cos(ku\cdot e)\ge0$ for all unit vectors $u$ and all $e\in\V$ if and only if $R\le\lambda/4$. In that case distinct directions do not alias, and in both gain models $\PP(F_L)\to0$ for every finite configuration of paths with $\gamma_c>0$. The least densities are those of the second row of Table~\ref{tab:densities}: $6.158$ per $\lambda^2$ (hexagonal) and $22.36$ per $\lambda^3$ (BCC).
\end{corollary}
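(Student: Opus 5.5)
The plan is to prove the geometric equivalence first, then show that $R\le 1/4$ rules out aliasing, then quote the specular results, and finally read off the densities. Throughout, $\lambda=1$ and $k=2\pi$.

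\emph{Equivalence.} Fix $e\in\V$. As $u$ ranges over $S^{n-1}$, the product $u\cdot e$ fills $[-|e|,|e|]$. The function $\cos(2\pi x)$ is nonnegative on $[-1/4,1/4]$ and negative on $(1/4,3/4)$. So if $|e|\le1/4$ for all $e\in\V$, that is $R\le1/4$, every cosine is nonnegative. Conversely, suppose $R>1/4$. Take $e\in\D$ with $|e|=R$, and choose $u$ with $u\cdot e\in(1/4,\min\{R,3/4\})$. This interval is nonempty, and for such $u$ we get $\cos(ku\cdot e)<0$.

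\emph{No aliasing.} Suppose $\xi_j-\xi_l=2\pi w$ with $w\in B^{-T}\Z^n\setminus\{0\}$. Since $|\xi_j-\xi_l|\le2k=4\pi$, we have $|w|\le2$. The lattice lies in the parallel hyperplanes $\{x:w\cdot x\in\Z\}$, which are spaced $1/|w|$ apart. Hence a point on the mid-hyperplane $w\cdot x=1/2$ is at distance at least $1/(2|w|)$ from $\Lambda$.

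This bound is not yet strict, so I sharpen it. The lattice points on each hyperplane form an $(n-1)$-dimensional lattice, which is full rank in that hyperplane because $w$ is a dual vector and $n\ge2$. That lattice has positive covering radius $\rho$. Pick a point $x$ on the mid-hyperplane whose projection onto a neighbouring hyperplane is a deep hole of that layer. Then
\[
R^2\ge\frac1{4|w|^2}+\rho^2>\frac1{4|w|^2}.
\]
Combined with $R\le1/4$, this gives $|w|>2$, a contradiction. So distinct directions never alias when $R\le 1/4$. This strictness step is the only delicate point: the crude bound $R\ge1/(2|w|)$ alone would leave the antipodal case $u_l=-u_j$ with $R=1/4$ open. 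In that case $B_e$ would have determinant $p_jp_l(\cos^2\theta-1)$, which can be negative, so the floor would not be excluded.

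\emph{Limits in both gain models.} Without aliasing, $B_e=\diag(p_j\cos(\xi_j\cdot e))$, and by the equivalence above this is positive semidefinite on $\V$. In model (G), Theorem~\ref{thm:specular} (equivalently Corollary~\ref{cor:nothreshold}) gives $\lim\PP(F_L)=0$, because $B_e$ has no negative eigenvalue. In model (D), Proposition~\ref{prop:fixed} applies: $\Rea S_\infty(e)=m_c(e)+\sum_jp_j\cos(\xi_j\cdot e)\ge\gamma_c>0$ on $\V$, hence $\PP(F_L)\to0$.

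\emph{Densities.} As in the proof of Proposition~\ref{prop:density}, the least density at covering radius $R$ is $(\vartheta/R)^n$, minimized over lattice shapes by the covering-optimal lattices. At $R=1/4$ this gives $16\cdot2/(3\sqrt3)\approx6.158$ for the hexagonal lattice and $5\sqrt5\cdot64/32=10\sqrt5\approx22.36$ for BCC. These match the second row of Table~\ref{tab:densities}.
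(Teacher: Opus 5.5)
Your equivalence argument, the reduction to a diagonal $B_e$ once aliasing is excluded, the two limit statements via Theorem~\ref{thm:specular} and Proposition~\ref{prop:fixed}, and the density arithmetic are all correct and follow the paper's proof. The gap is in the strictness step, which you yourself single out as the delicate point. Your choice of $x$ does not give $R^2\ge\frac1{4|w|^2}+\rho^2$, and that inequality is false in general.

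When $w$ is primitive, the mid-hyperplane $w\cdot x=1/2$ has two adjacent layers at the same longitudinal distance $1/(2|w|)$. One is $\Lambda_0=\Lambda\cap w^\perp$ and the other is a translate $\Lambda_0+a_1$ with $w\cdot a_1=1$. Placing the projection of $x$ at a deep hole of $\Lambda_0$ says nothing about its transverse distance to $\Lambda_0+a_1$, and that distance can be zero. Take the hexagonal lattice spanned by $(1,0)$ and $(1/2,\sqrt3/2)$, with $w=(0,2/\sqrt3)$. Then $\Lambda_0=\Z\times\{0\}$ and $\rho=1/2$. The point $x=(1/2,\sqrt3/4)$ projects onto the deep hole $(1/2,0)$ of $\Lambda_0$. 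It also projects onto the lattice point $(1/2,\sqrt3/2)$, so $\dist(x,\Lambda)=\sqrt3/4=1/(2|w|)$ exactly. Moreover $R^2=1/3<3/16+1/4$ for this lattice, so no choice of $x$ can yield your bound.

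The conclusion you need, $R>1/(2|w|)$, is still true, and the repair is the paper's argument. Choose $x$ on the mid-hyperplane whose transverse position avoids the projections of all lattice points. This is possible because that set is countable and the mid-hyperplane has dimension $n-1\ge1$; avoiding the two adjacent layers already suffices. Then every lattice point lies at longitudinal distance at least $1/(2|w|)$ and at strictly positive transverse distance from $x$. Since $\Lambda$ is discrete, $\dist(x,\Lambda)$ is attained at some lattice point, so $R\ge\dist(x,\Lambda)>1/(2|w|)\ge1/4$.

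Your observation that the non-strict bound would leave the antipodal alias at $R=1/4$ open, with $\det B_e=p_jp_l(\cos^2\theta-1)$, is correct. It shows that this step cannot be dropped, so it has to be fixed as above.
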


Robustness to an unknown specular direction thus costs a factor $(r_0/(\lambda/4))^n$ in density: $2.34$ in the plane and $8$ in space. When the direction is known, the cost is much smaller.

\begin{theorem}[Known direction in the plane]\label{thm:known}
Let $n=2$, $f\equiv1$, and let one path have known direction $u=(1,0)$. Let $\rho=r_0$, $h=1/4$, $s=\sqrt{\rho^2-h^2}$, and
\[
 C=\{e:\ |e|\le\rho,\ |e_1|\le h\},\qquad
 B_*=\begin{pmatrix}2h&h\\0&\rho+s\end{pmatrix},\qquad \Lambda_*=B_*\Z^2 .
\]
\begin{enumerate}
\item For every lattice with $\V(\Lambda)\subset C$, and also for every lattice with $\Lambda+C=\R^2$ (an arbitrary measurable remap rule), $\covol\Lambda\le2h(\rho+s)$. Equality holds for $\Lambda_*$, whose Voronoi cell is the hexagon with vertices $(0,\pm\rho)$, $(\pm h,\pm s)$. The least density is
\[
 \delta_*=\frac{1}{2h\big(\rho+\sqrt{\rho^2-h^2}\big)}=2.97375\ldots\ \text{per }\lambda^2 .
\]
\item The infimum is not attained under a strictly positive uniform margin. The lattices $(1-\varepsilon)\Lambda_*$ have positive continuous and specular margins and satisfy $\PP(F_L)\to0$ for each $\varepsilon\in(0,1)$.
\item For $\Lambda_*$ itself, in model (G), $\PP(F_L)\ge1/36$ for every finite nonempty aperture.
\end{enumerate}
\end{theorem}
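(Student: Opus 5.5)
The plan is to turn the two margin requirements into the inclusion $\V\subset C$, recognise Part 1 as a lattice covering problem for the centrally symmetric convex body $C$, and solve it with F\'ary's theorem~\cite{fary}. Parts 2 and 3 then follow from Corollary~\ref{cor:nothreshold} and from the zero-mean argument behind Theorem~\ref{thm:aperture}.3.

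\emph{Reduction for Part 1.} With $f\equiv1$, Proposition~\ref{prop:margin} shows the continuous margin is nonnegative iff $R\le r_0=\rho$. With $u=(1,0)$ and $k=2\pi$, we have $\cos(ku\cdot e)\ge0$ iff $|e_1|\le1/4=h$ (for $|e_1|\le\rho<3/4$). So both requirements together say exactly $\V(\Lambda)\subset C$. Since $\Lambda+\V=\R^2$, this implies $\Lambda+C=\R^2$, so it suffices to bound the covolume of lattice coverings by $C$.

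\emph{F\'ary step.} $C$ is convex and centrally symmetric, so F\'ary's theorem gives that the largest covolume of a lattice covering by $C$ equals the largest area of a centrally symmetric hexagon inscribed in $C$. Such a hexagon is a lattice cell and tiles by a lattice, so this reduces Part 1 to an extremal problem for the hexagon. I would parametrize the hexagon by two vertices $p,q$ on $\partial C$ with the third vertex at $q-p$ on $\partial C$, so that the vertices are $\pm p,\pm q,\pm(q-p)$.

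The main obstacle is showing that the maximum of the area is attained at vertices $(0,\pm\rho)$, $(\pm h,\pm s)$. I would first argue that each vertex may be pushed to $\partial C$ without decreasing the area. Then I would note that $\partial C$ consists of two vertical segments $|e_1|=h$, $|e_2|\le s$, and two circular caps. A hexagon with a vertex pair in the interior of the vertical segments can be sheared vertically to increase the area unless its neighbours sit at the corners $(\pm h,\pm s)$. On the caps, the classical argument shows the maximal symmetric hexagon inscribed in a disc arc is regular-like, with the vertex at the top of the cap by symmetry.

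The area at the candidate is computed directly: a $2h\times2s$ rectangle plus two triangles of base $2h$ and height $\rho-s$, giving $4hs+2h(\rho-s)=2h(\rho+s)$. For the equality case I would check that the Voronoi cell of $\Lambda_*=B_*\Z^2$ is exactly this hexagon. Its basis vectors are $(2h,0)$ and $(h,\rho+s)$, and the point $(h,s)$ is equidistant from $0$, $(2h,0)$ and $(h,\rho+s)$, because $h^2+s^2=\rho^2$. The numerical density $\delta_*$ is then routine.

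\emph{Part 2.} Any lattice with strictly positive uniform margins has $\V\subset C'$ for a compact convex $C'$ strictly inside $C$ in the relevant directions. Applying the same F\'ary bound to $C'$ gives a hexagon of strictly smaller area, so the infimum in Part 1 is not attained. For $(1-\varepsilon)\Lambda_*$, the cell is $(1-\varepsilon)$ times the hexagon. Hence $|e|<\rho$, which gives $\gamma_c>0$ by Proposition~\ref{prop:margin}, and $|e_1|<h$, which gives $\cos(ku\cdot e)>0$ on $\V$.

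Corollary~\ref{cor:nothreshold} then gives $\kappa=\infty$ and limit $0$ in model (G). Proposition~\ref{prop:fixed} gives $p^*=1$, so $\min_\V\Rea S_\infty>0$ and the limit is $0$ in model (D) as well.

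\emph{Part 3.} At the vertex $e^*=(h,s)$ of $\V(\Lambda_*)$ we have $|e^*|=r_0$, so $\Rea\phi_c(e^*)=0$, and $\cos(2\pi h)=0$. Hence $\E\Rea S_L(e^*)=0$ under the full mixed spectrum, which is Gaussian in model (G). I would reuse the argument of Theorem~\ref{thm:aperture}.3 for $\gamma=0$. There, $\Rea S_L(e^*)$ is a centered real Gaussian quadratic form, which is nonzero because the diffuse part contributes positive variance. The hypercontractive fourth-moment bound $\E X^4\le 9(\E X^2)^2$ for second-chaos variables, combined with the one-sided Paley--Zygmund inequality, gives $\PP(X\le0)\ge1/36$ for every nonempty aperture. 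Since $F_L\supseteq\{\Rea S_L(e^*)\le0\}$, this proves $\PP(F_L)\ge1/36$.
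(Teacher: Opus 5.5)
Your reduction of Part~1 to lattice coverings by $C$ through F\'ary's lemma, the area count $2h(\rho+s)$, and your treatment of Parts~2 and~3 agree with the paper. For Parts~2 and~3 that means scaling the cell and invoking Corollary~\ref{cor:nothreshold}, and the zero-mean Paley--Zygmund bound at the vertex $(h,s)$.

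\textbf{The gap is in the extremal step of Part~1.} You parametrize the hexagon as $\conv\{\pm p,\pm q,\pm(q-p)\}$. That is the critical-lattice configuration from \emph{packing} theory, and it describes only affine-regular hexagons. The hexagon F\'ary's lemma gives for a covering is an arbitrary centrally symmetric hexagon in $C$. The lattice it tiles under is generated by sums of adjacent vertices, and the vertices satisfy no relation of the form $r=q-p$. In fact the claimed maximizer is not in your family. The set $\{\pm(h,s),\pm(0,\rho),\pm(-h,s)\}$ has the form $\{\pm p,\pm q,\pm(q-p)\}$ only if $\rho=2s$, i.e.\ $h=\sqrt3\rho/2$, whereas here $h/\rho\approx0.653$. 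So maximizing over your family cannot bound the covolume of every covering lattice, and it misses the equality case $\Lambda_*$. Your non-attainment argument in Part~2 reuses the same bound and inherits the problem.

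\textbf{The case analysis does not close the argument either.} You sketch vertical shears on the flat sides and a ``regular-like by symmetry'' maximizer on the caps. Symmetry does not force the maximizer to be symmetric. Nothing in the sketch explains why the outer vertices go to the corners $(\pm h,\pm s)$, and that is false once $h>\sqrt3\rho/2$, where the regular hexagon inscribed in the disk wins.

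\textbf{The missing idea is the paper's convexity argument.}
\begin{itemize}
\item The area of $\conv\{\pm p,\pm q,\pm r\}$ equals $\max\{2D_{pq},2D_{qr},2D_{rp},D_{pq}+D_{qr}+D_{rp}\}$ with $D_{pq}=|\det(p,q)|$. This is convex in each vector, so each vertex can be moved to an extreme point of $C$, that is, onto the circular arcs.
\item Take the angles of the three vertices on the upper arc, $\theta_0\le\theta_1\le\theta_2\le\theta_3\le\pi-\theta_0$ with $\theta_0=\arccos(h/\rho)$, and put $a=\theta_2-\theta_1$, $b=\theta_3-\theta_2$. The area is $\rho^2[\sin a+\sin b+\sin(a+b)]\le\rho^2[2\sin(w/2)+\sin w]$ by concavity of sine.
\item Here $w=a+b\le2\arcsin(h/\rho)\le2\pi/3$, and the right side increases on $[0,2\pi/3]$. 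So the maximum is $2h(\rho+s)$, attained at $(0,\pm\rho),(\pm h,\pm s)$.
\end{itemize}

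A smaller point: for the Voronoi cell of $\Lambda_*$, equidistance of $(h,s)$ from $0$, $(2h,0)$, $(h,\rho+s)$ is not enough on its own. You also need that no other lattice point is closer, for instance because this triangle is acute, or by the paper's support-function check.
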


The geometric statement in part~1 holds with $\rho$ replaced by any $R'>0$ and any $0<h\le\sqrt3R'/2$. Parts~2 and~3 use $\rho=r_0$ and $h=1/4$. The optimal lattice is not a hexagonal lattice: its Voronoi cell is the largest centrally symmetric hexagon inscribed in the disk-slab intersection $C$. A known direction costs $13\%$ in density relative to diffuse scattering ($2.974$ against $2.627$), compared with $134\%$ for an unknown direction.

\begin{remark}
Theorem~\ref{thm:known} restricts residuals to the component of $\{J_0(2\pi|e|)>0\}$ containing the origin. The full positive set also contains annuli beyond the second zero. Remap rules that use them admit lattices $\frac12\Z\times(N+\sqrt2)\Z$ of density $2/(N+\sqrt2)\to0$ with positive mean on a fundamental domain. Their margins on these domains are of order $N^{-1/2}$, and their Euclidean Voronoi cells have no positive margin. These lattices lie outside the compact family of Theorem~\ref{thm:aperture}, and their aperture requirement is not determined here.
\end{remark}

\section{Numerical Comparison}\label{sec:numerics}

We compare the deep-hole law with independent simulations of the field. The simulations use an exact angular Fourier--Bessel representation of the isotropic planar field, not a finite sum of plane waves. A finite plane-wave sum is itself an atomic spectrum and would mix in the effects of Section~\ref{sec:specular}. The angular expansion is truncated so that the omitted variance is below $10^{-14}$ at every point within distance $L+R$ of the origin, which covers all samples and all shifted points $a+e$. For each realization, the minimum of $\Rea S_L$ over the closed cell is bracketed, with maximal bracket width about $10^{-5}$. The cell is triangulated. On each triangle the lower bound is the smallest vertex value minus $H\operatorname{diam}^2/6$, where $H$ bounds the Hessian norm on the cell, and the upper bound comes from evaluated points and constrained local optimization. Triangles are subdivided until the bracket width is at most $10^{-5}$, and margins for the truncation and for floating point are then added. The code and the bracket endpoints of every realization are available at \url{https://doi.org/10.5281/zenodo.23687425}~\cite{zenodo}. The bracket is for the truncated model in floating point. The predictions are the deep-hole Gaussian proxy of Theorem~\ref{thm:localization}.3, with the exact finite-$L$ mean and covariance, and, for contrast, the same proxy at a single deep hole.

\begin{table}[t]
\centering
\caption{Failures over the whole cell out of $4096$ independent realizations, each classified by the bracket (no bracket contained zero in these rows), against the deep-hole Gaussian proxy and the single-point proxy, each evaluated with $4\times10^5$ Gaussian draws. Ball window of radius $L$, isotropic planar field.}
\label{tab:deepholes}
\begin{tabular}{llrrrcrr}
\toprule
lattice & case & $L$ & $M$ & failures & $95\%$ Wilson interval & deep holes & single point\\
\midrule
square & $\gamma=0.15$ & 4 & 221 & 500 & $[0.112,0.132]$ & 0.146 & 0.046\\
hexagonal & $\gamma=0.15$ & 4 & 163 & 2556 & $[0.609,0.639]$ & 0.636 & 0.137\\
square & $\gamma=0.15$ & 8 & 877 & 53 & $[0.0099,0.0169]$ & 0.0162 & 0.0057\\
hexagonal & $\gamma=0.15$ & 8 & 673 & 827 & $[0.190,0.214]$ & 0.225 & 0.041\\
hexagonal & $\gamma=0.15$ & 16 & 2713 & 34 & $[0.0059,0.0116]$ & 0.0110 & 0.0020\\
square & density $4.01$ & 4 & 197 & 2714 & $[0.648,0.677]$ & 0.668 & 0.197\\
hexagonal & density $4.01$ & 4 & 199 & 5 & $[0.0005,0.0029]$ & 0.0056 & 0.0012\\
\bottomrule
\end{tabular}
\end{table}

Table~\ref{tab:deepholes} bears on localization, on the role of the deep-hole configuration, and on the accuracy of the Gaussian proxy in the tail.
\begin{itemize}
\item \emph{Localization.} In all $28{,}672$ realizations, the approximate minimizer found over the cell was a deep hole, and within the truncation and the numerical tolerance the failure classification over the cell and over the deep holes never disagreed.
\item \emph{Multiplicity.} When the failure probability is moderate, the multi-hole proxy is close to the simulation, and the single-point proxy is off by a factor of $2$ to $5$. At equal margin $\gamma=0.15$, the simulated failure frequency of the hexagonal lattice, with three sign classes, is $15.6$ times that of the square lattice, with two ($L=8$).
\item \emph{Tails.} The Gaussian proxy is larger than the simulated frequency in every row, and the ratio grows as the probability decreases, reaching $4.6$ for the point estimate in the last row. This is the direction predicted by the cubic factor of Theorem~\ref{thm:joint}.2. For the last row, the cubic-corrected sum over sign classes, evaluated with the exact finite-$L$ spectra, is $0.00059$, inside the Wilson interval, while the Gaussian proxy is $0.0056$. Here $\gamma/\sigma_L\approx1.7$ is not large, so the asymptotic expansion provides no numerical error guarantee and this agreement is only indicative.
\end{itemize}

The convergence of the normalization is fast for some lattices and slow for others. With density $4$ per $\lambda^2$, $L\sigma_L^2=0.0861$ for the hexagonal lattice at $L=32$, already at $\kappa_2=0.0860$. For the square lattice, whose reciprocal lattice has vectors of length $2k$, it is $0.0959$. In three dimensions at density $8$, $L^2\sigma_L^2$ is $0.0285$ for BCC and $0.0294$ for simple cubic at $L=16$, against $\kappa_3=0.0285$. In all cases the slope of $\log\sigma_L^2$ against $\log L$ is close to $-d$ ($-1.005$ hexagonal, $-1.075$ square, $-2.021$ BCC, $-2.073$ simple cubic, least-squares slopes in log-log coordinates over $L\in\{8,16,32\}$ in the plane and $L\in\{4,8,16\}$ in space), not $-n$.

The specular floor of Corollary~\ref{cor:nothreshold} is approached slowly when the continuous margin is small. With one Rayleigh-faded path ($p=0.1$) along $(1,0)$ and the rectangular cell $[-0.3,0.3]\times[-0.2,0.2]$, the limit is $0.115$. The simulated failure frequency decreases as $0.737$, $0.593$, $0.432$, $0.301$ for $L=4,8,16,32$. The continuous margin here is only $0.067$, so the diffuse fluctuation dominates up to large apertures. For the cell $[-0.24,0.24]\times[-0.2,0.2]$, which lies inside the slab $|e_1|<1/4$, the frequency is $0$ out of $2048$ at $L=8$ and $L=16$, consistent with a zero floor.

\section{Scope and Open Problems}\label{sec:discussion}

\emph{Approximation in the tail.} Theorem~\ref{thm:localization}.3 bounds the absolute error of the Gaussian proxy, and Theorem~\ref{thm:joint} gives relative error and the exact logarithmic rate near the critical lattice. Two questions remain: the second-order term of the actual aperture in the joint limit, for which the number of sign classes enters the limiting quantile through $2\log N_*$, and large deviations at a fixed positive margin, where the localization estimate of Theorem~\ref{thm:localization}.2 does not by itself transfer a single-point rate to the whole cell.

\emph{Worst versus average residual.} For isotropic scattering $m(e)=1-k^2|e|^2/(2n)+O(|e|^4)$. If the residual is uniform on the cell and the cell is small compared with the wavelength, the average of $m$ over $\V$ is therefore governed to leading order by the normalized second moment of $\V$, which leads to the lattice quantizer problem~\cite{conway,barnes}. For a general spectrum the quadratic term is weighted by $\int uu^\top\mu(\dd u)$. At fixed density, the isotropic criterion selects the same lattice shapes as covering in two and three dimensions. The worst residual, studied here, brings in the deep holes.

\emph{Several outputs.} The reduction of Lemma~\ref{lem:identity} uses a single receive antenna, so that the error signal is a scalar. With $N_r$ outputs, $\Rea\ip g{\widehat g}=M\,\Rea(\Delta^*\Sigma_L(e)\Delta)$, where $\Delta$ is the error vector and $\Sigma_L(e)$ is the empirical cross-correlation matrix of the $N_r$ fields. The descent condition holds simultaneously for all nonzero error vectors and all $e\in\V$ if and only if the Hermitian part of $\Sigma_L(e)$ is positive definite for every $e\in\V$. For independent output fields its mean is $m(e)I$.

\emph{Parameterized transmitters.} The statement concerns the gradient with respect to the transmit vector. For a front end $x=P(w)$ the returned gradient is composed with the adjoint of the Jacobian of $P$, which can change angles. Return noise is treated in Proposition~\ref{prop:noise}.

\emph{Rigid translation and the field model.} The residual is the same at every port because the array translates rigidly. A rotation by a small angle $\theta$ adds a port-dependent residual of size up to $\theta L$. The model does not cover it, and it is small compared with the cell only when $\theta L\ll R$. The correlation model ignores mutual coupling between ports. At $R=r_0$ the spacings in Table~\ref{tab:densities} are between $0.54\lambda$ and $0.78\lambda$. At $R=\lambda/4$ they are between $0.29\lambda$ and $0.43\lambda$, below $\lambda/2$. At these spacings mutual coupling can affect the port correlations and may require an array-dependent channel model.

\emph{Further problems.} Three questions remain open: the optimal three-dimensional lattice for a known specular direction (Theorem~\ref{thm:known} relies on F\'ary's planar theorem), the rate at which $\PP(F_L)$ approaches the specular floor, which Section~\ref{sec:numerics} shows can be slow when the continuous margin is small, and the first correction to $L^d\sigma_L^2\to\kappa_n$, which depends on the reciprocal lattice near the sphere of radius $2k$.

\appendices
\section{Quadratic-Form Representation and Kernel Bounds}\label{app:kernel}

Throughout, $\A=A_L^\ell$, $M=|\A|$, $v_a(u)=e^{\mathrm ika\cdot u}$ on $L^2(\mu)$, and
\[
 A=\frac1M\sum_{a\in\A}|v_a\rangle\langle v_a|,\qquad (U_e z)(u)=e^{\mathrm iku\cdot e}z(u),\qquad H_e=\tfrac12(U_eA+AU_e^*).
\]

\begin{lemma}[Second moments]\label{lem:wick}
$\Rea S_L(e)$ is the Hermitian Gaussian quadratic form associated with $H_e$, and $\tr H_e=m(e)$. For any finitely many lags, with $Y(e)=\Rea S_L(e)-m(e)$,
\[
 \operatorname{Cov}(Y(e),Y(w))=\tr(H_eH_w),\qquad \operatorname{Var}Y(e)=\norm{H_e}_\HS^2,\qquad \sigma_L^2=\norm{A}_\HS^2 .
\]
The kernel of $H_e$ is
\begin{equation}\label{eq:Hkernel}
 H_e(u,u')=D_L\big(k(u-u')\big)\,e^{\mathrm ik(u-u')\cdot e/2}\cos\big(k(u+u')\cdot e/2\big),\qquad D_L(q)=\frac1M\sum_{a\in\A}e^{\mathrm ia\cdot q}.
\end{equation}
Consequently $\operatorname{Var}Y(e)\le\sigma_L^2$, and $\norm{H_e}_\op\le\norm{A}_\op$.
\end{lemma}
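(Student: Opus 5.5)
We realize the field through its spectral representation and then read off all the claims from standard Gaussian quadratic-form identities. Since $\mu$ is a probability measure on $S^{n-1}$, we write $h(x)=\ip{v_x}{W}$, where $W$ is a circularly symmetric complex Gaussian (white-noise) element with covariance operator the identity on $L^2(\mu)$ and $v_x(u)=e^{\mathrm ikx\cdot u}$. With the convention $\ip{f}{g}=\int\bar f g\,\dd\mu$, this gives $\E h(x)\overline{h(y)}=\ip{v_x}{v_y}^{*}$. We fix the conjugation convention so that $\E h(x+r)\overline{h(x)}=\int e^{\mathrm iku\cdot r}\mu(\dd u)$. If $W$ is infinite-dimensional, we first restrict it to the finite span of $\{v_a,v_{a+e}\}$ over the finitely many lags involved; everything is then a finite-dimensional Gaussian computation.

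\textbf{Step 1 (quadratic form).} Using $v_{a+e}=U_ev_a$, we write
\[
h(a)\overline{h(a+e)}=\ip{W}{\,|U_ev_a\rangle\langle v_a|\,W}
\]
up to the chosen conjugation convention. Averaging over $a$ gives $S_L(e)=\ip{W}{U_eAW}$ (or its adjoint form). Taking real parts replaces $U_eA$ by its Hermitian part $\tfrac12(U_eA+AU_e^*)=H_e$. We then check the mean:
\[
\E\ip{W}{TW}=\tr T,\qquad \tr H_e=\Rea\tr(U_eA)=\Rea\frac1M\sum_a\ip{v_a}{U_ev_a}=\Rea\int e^{\pm\mathrm iku\cdot e}\dd\mu=m(e),
\]
which is consistent with $\E S_L(e)=\phi(-e)$ and with $m$ being even.

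\textbf{Step 2 (Wick/Isserlis).} For a circular complex Gaussian vector with identity covariance and Hermitian $T_1,T_2$, Isserlis' theorem gives
\[
\operatorname{Cov}\big(\ip{W}{T_1W},\ip{W}{T_2W}\big)=\tr(T_1T_2).
\]
The only pairing that survives is $\E W_i\overline{W_l}\,\E W_k\overline{W_j}$, because the pseudo-covariances vanish. Applying this with $T_1=H_e$, $T_2=H_w$ gives the covariance formula and $\operatorname{Var}Y(e)=\norm{H_e}_\HS^2$. For the complex fluctuation, the analogous identity $\E|\ip{W}{TW}-\tr T|^2=\tr(TT^*)$ holds for non-Hermitian $T$. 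Applied to $T=U_eA$, it gives $\sigma_L^2=\tr(U_eA^2U_e^*)=\norm{A}_\HS^2$. This matches \eqref{eq:sigma}, since
\[
\norm{A}_\HS^2=\frac1{M^2}\sum_{a,b}|\ip{v_a}{v_b}|^2=\frac1{M^2}\sum_{a,b}|\phi(a-b)|^2,
\]
and it is independent of $e$ because $U_e$ is unitary.

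\textbf{Step 3 (kernel and bounds).} The kernel of $A$ is $D_L(k(u-u'))$ with respect to $\mu$. The operator $U_eA$ has kernel $e^{\mathrm iku\cdot e}D_L$, and $AU_e^*$ has kernel $D_Le^{-\mathrm iku'\cdot e}$. We factor $e^{\mathrm ik(u-u')\cdot e/2}$ out of the half-sum:
\[
\tfrac12\big(e^{\mathrm iku\cdot e}+e^{-\mathrm iku'\cdot e}\big)=e^{\mathrm ik(u-u')\cdot e/2}\cos\big(k(u+u')\cdot e/2\big),
\]
which is \eqref{eq:Hkernel}. Since $U_e$ is unitary, the triangle inequality gives $\norm{H_e}_\op\le\norm{A}_\op$ and $\norm{H_e}_\HS\le\norm{A}_\HS$, hence $\operatorname{Var}Y(e)\le\sigma_L^2$.

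The only delicate point is bookkeeping: the sign and conjugation conventions in $\ip{\cdot}{\cdot}$ and in $v_a$ must be chosen consistently so that the mean comes out as $\phi(-e)$ rather than $\phi(e)$. Steps 2 and 3 are insensitive to this choice, because $m$ is even and only traces of products enter.
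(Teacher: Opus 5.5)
Your proposal is correct and follows essentially the paper's proof: realize $h$ through a proper complex isonormal process on $L^2(\mu)$, reduce to the finite span of $\{v_a,v_{a+e}\}$, apply the circular fourth-moment (Wick) identity, and read off the kernel and the norm bounds from $\norm{U_e}_\op=1$. Your variant $\E|\ip{W}{TW}-\tr T|^2=\tr(TT^*)$ with $T=U_eA$ is a clean way to get $\sigma_L^2=\norm{A}_\HS^2$.

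The only slip is cosmetic. With $h(x)=\ip{v_x}{W}$ one gets $\E h(x)\overline{h(y)}=\ip{v_x}{v_y}=\phi(y-x)$, not its conjugate. Take instead $h(x)=\ip{W}{v_x}$, which is linear in $v_x$ like the paper's $W(v_x)$. This turns $S_L(e)$ into the adjoint form $\ip{W}{AU_e^*W}$ with mean $\phi(-e)$, and leaves $\Rea S_L(e)$ and $H_e$ unchanged.
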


\begin{proof}
Let $W$ be a proper complex isonormal Gaussian process on $L^2(\mu)$, that is, $\E[W(\varphi)\overline{W(\chi)}]=\ip{\chi}{\varphi}$ and $\E[W(\varphi)W(\chi)]=0$, and put $h(x)=W(v_x)$. Its covariance is~\eqref{eq:phi} and its pseudo-covariance is zero. On the finite span of $\{v_a,v_{a+e}\}$, choose an orthonormal basis. The coordinates $z$ of $W$ are i.i.d.\ $\CN(0,1)$ and $\Rea S_L(e)=z^*H_ez$. For such $z$, $\E[z_i\bar z_jz_p\bar z_q]=\delta_{ij}\delta_{pq}+\delta_{iq}\delta_{pj}$, which gives the covariance identities. The kernel follows from $U_eA+AU_e^*$. The bounds follow from $|\cos|\le1$ and $\norm{U_e}_\op=1$.
\end{proof}

\begin{lemma}[Operator bound]\label{lem:op}
In the family~\eqref{eq:family} and for $L\ge L_0$, $\norm{A}_\op\le C_*L^{-d}$, where
\[
 C_*=e^{r_{\rm out}^2/2}(2\pi)^{n/2}\frac{2^n}{\omega_nr_{\rm in}^n}\frac{C_f2^d\Gamma(1+d/2)}{k^d}Q_n(g),\quad
 Q_n(g)=\sum_{j=0}^n\binom nj\Big(1+\frac{4k}{g}\Big)^{n-j}\Big(\frac4g\Big)^j\Gamma\Big(1+\frac j2\Big),
\]
$g=k/b_{\max}$, and $\omega_n$ is the volume of the unit ball. Hence $\sigma_L^2\le\norm A_\op\tr A\le C_*L^{-d}$.
\end{lemma}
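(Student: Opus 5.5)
The plan is to bound the quadratic form of $A$ by a Schur test on a smoothed, periodized kernel, using Poisson summation over $\Lambda$. This reduces everything to bounds on Gaussian masses of spherical caps and a count of reciprocal lattice points. For $z\in L^2(\mu)$,
\[
 \ip{z}{Az}=\frac1M\sum_{a\in\A}\Big|\int e^{-\mathrm ika\cdot u}z(u)\,\mu(\dd u)\Big|^2 .
\]
So it suffices to show two things:
\[
 \sum_{a\in\A}|\widehat{z\mu}(ka)|^2\le C\,L\,\norm z^2_{L^2(\mu)},\qquad M\ge cL^n .
\]

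\emph{Counting step.} Since $L\ge L_0$ and $|\ell|\le T$, the set $LK\cap(LK+\ell)$ contains the ball $B_{Lr_{\rm in}-T}\supset B_{Lr_{\rm in}/2+R_*}$. Every lattice point whose Voronoi cell meets $B_{Lr_{\rm in}/2}$ lies in this ball, because the cell has circumradius at most $R\le R_*$. Therefore
\[
 M\,v\ge\omega_n(Lr_{\rm in}/2)^n ,
\]
which accounts for the factor $2^n/(\omega_nr_{\rm in}^n)$ in $C_*$; the covolume $v$ will cancel against the $1/v$ from Poisson summation below.

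\emph{Smoothing and periodization.} Majorize the indicator of $\A\subset\Lambda\cap B_{Lr_{\rm out}}$ by the Gaussian weight
\[
 w(x)=e^{r_{\rm out}^2/2}e^{-|x|^2/(2L^2)} .
\]
Since $e^{r_{\rm out}^2/2}e^{-|x|^2/(2L^2)}\ge1$ for $|x|\le Lr_{\rm out}$, the sum over $\A$ is at most $\iint z(u)\overline{z(u')}\,\mathcal K(u,u')\,\mu(\dd u)\mu(\dd u')$ with $\mathcal K(u,u')=\sum_{a\in\Lambda}w(a)e^{\mathrm ika\cdot(u-u')}$. By Poisson summation,
\[
 \mathcal K(u,u')=\frac1v\sum_{\gamma\in\G}\widehat w\big(k(u-u')+\gamma\big),\qquad \widehat w(\xi)=e^{r_{\rm out}^2/2}(2\pi)^{n/2}L^ne^{-L^2|\xi|^2/2} .
\]
Every term is nonnegative. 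The Schur test, together with $f\le C_f$, then bounds the operator by
\[
 \frac{C_f}{v}\sup_u\sum_{\gamma\in\G}\int\widehat w\big(k(u-u')+\gamma\big)\,\sigma(\dd u') .
\]

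\emph{Main step: uniform estimate of the Schur integral.} For a fixed $\gamma$, write $\gamma+ku=:p$. The integrand is a Gaussian of width $1/L$ centered at distance $\operatorname{dist}(p,kS^{n-1})$ from the sphere $kS^{n-1}$. I will use only the crude cap bound
\[
 \sigma\{u':|ku'-y|\le r\}\le(r/k)^d\cdot\mathrm{const}
\]
for every $y$. This bound ignores curvature, so it holds equally when the shifted sphere $kS^{n-1}-\gamma$ is tangent to $kS^{n-1}$ (the $|\gamma|=2k$ case). Integrating the Gaussian tail against this cap profile by the layer-cake formula gives
\[
 L^n\cdot L^{-d}\,\frac{2^d\Gamma(1+d/2)}{k^d}\,e^{-c L^2\operatorname{dist}^2}
\]
for each $\gamma$. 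Then sum over $\gamma$. The reciprocal lattice $\G$ has minimal spacing at least of order $g=k/b_{\max}$, so the number of $\gamma$ with $|\gamma|\le2k+t$ is controlled by a polynomial in $(1+4k/g)$ and $t/g$. Summing this count against the Gaussian decay in $t$ (with $L\ge1$) produces the moments $\Gamma(1+j/2)$ and the binomial sum $Q_n(g)$.

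I expect this last step — bounding $\sum_\gamma$ uniformly in $L$ and in the lattice, including at tangencies — to be the main obstacle. The fallback is a shell decomposition of $\G$ by distance to the sphere of radius $2k$.

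\emph{Assembly.} Combining the Schur bound $O(L^{n-d})=O(L)$ with $M\ge cL^n$ gives $\norm A_\op\le C_*L^{-d}$. The final claim follows from $\sigma_L^2=\norm A_\HS^2\le\norm A_\op\tr A$ and $\tr A=1$.
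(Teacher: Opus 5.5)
Your proposal is correct and follows the paper's proof essentially step for step: the cell-covering bound $Mv\ge\omega_n(Lr_{\rm in}/2)^n$, the Gaussian majorant with Poisson summation, the Schur test with $f\le C_f$ and the cap bound integrated by layer cake, and the count of $g$-separated reciprocal points summed against Gaussian decay to produce $Q_n(g)$. The step you flag as the main obstacle is closed in the paper the way your sketch suggests. It splits $e^{-L^2|q-G|^2/2}\le e^{-(|G|-2k)_+^2/4}e^{-(kL)^2|u'-(u-G/k)|^2/4}$ using $L\ge1$ and $|q|\le2k$, which decouples the lattice sum from the cap integral and needs no shell decomposition.
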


\begin{proof}
\emph{Counting.} The parallelotope $P=B[-\frac12,\frac12)^n$ lies in $B_{R_*}$. If $|a|\le Lr_{\rm in}-T$, then $a,a-\ell\in LK$. Tiling $B_{Lr_{\rm in}-T-R_*}$ by translates of $P$ gives $Mv\ge\omega_n(Lr_{\rm in}/2)^n$ for $L\ge L_0$. Also $|a|\le Lr_{\rm out}$ for $a\in\A$.

\emph{Gaussian majorant.} Since $e^{-|a|^2/(2L^2)}\ge e^{-r_{\rm out}^2/2}$ on $\A$, in positive order
\[
 A\preceq\widetilde A=\frac{e^{r_{\rm out}^2/2}}{M}\sum_{a\in\Lambda}e^{-|a|^2/(2L^2)}|v_a\rangle\langle v_a| .
\]
Poisson summation for the Gaussian~\cite[Ch.~5, Thm.~3.1]{stein} gives
\[
 \sum_{a\in\Lambda}e^{-|a|^2/(2L^2)}e^{\mathrm ia\cdot q}=\frac{(2\pi)^{n/2}L^n}{v}\sum_{G\in\G}e^{-L^2|q-G|^2/2}.
\]

\emph{Row integrals.} Let $q=k(u-u')$, so $|q|\le2k$. For $L\ge1$, $e^{-L^2|q-G|^2/2}\le e^{-(|G|-2k)_+^2/4}e^{-(kL)^2|u'-(u-G/k)|^2/4}$. For every center $c$ and radius $r$, $\sigma(S^{n-1}\cap B_r(c))\le r^d$, so $\mu(B_r(c))\le C_fr^d$. Writing the Gaussian as an integral of indicator functions of balls,
\[
 \int e^{-(kL)^2|u'-c|^2/4}\mu(\dd u')\le\frac{C_f2^d\Gamma(1+d/2)}{(kL)^d}.
\]
Reciprocal lattice points are $g$-separated, so the number in a ball of radius $s$ is at most $(1+2s/g)^n$, and $\sum_Ge^{-(|G|-2k)_+^2/4}\le Q_n(g)$. Combining these with the counting bound, every row and column integral of $|\widetilde A(u,u')|$ is at most $C_*L^{-d}$. The Schur test gives $\norm{\widetilde A}_\op\le C_*L^{-d}$. No step divides by $\big||G|-2k\big|$, so tangencies are included.
\end{proof}

\begin{lemma}[Real-part non-degeneracy]\label{lem:nondeg}
In the family~\eqref{eq:family}, $\operatorname{Var}\Rea S_L(e)\ge c_*L^{-d}$ for $e\in\V$ and $L\ge L_0$, with
\[
 c_*=\frac{c_f^2p_*\alpha_n\varepsilon_*^d}{8},\quad a_*=\min\{1,(12R_*)^{-1}\},\quad p_*=\frac{2a_*}{\pi},\quad \varepsilon_*=\min\Big\{1,\frac1{2kr_{\rm out}},\frac1{12R_*}\Big\},
\]
$\alpha_2=1/\pi$, $\alpha_3=1/4$.
\end{lemma}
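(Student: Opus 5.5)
The plan is to write the variance as an explicit double integral using Lemma~\ref{lem:wick} and then bound it from below by keeping only the contribution from pairs of nearby directions. By Lemma~\ref{lem:wick} and~\eqref{eq:Hkernel},
\[
 \operatorname{Var}\Rea S_L(e)=\norm{H_e}_\HS^2=\iint \big|D_L\big(k(u-u')\big)\big|^2\cos^2\!\big(k(u+u')\cdot e/2\big)\,\mu(\dd u)\,\mu(\dd u').
\]
The integrand is nonnegative. I would therefore restrict the integration to the set
\[
 \{(u,u'):\ u\in G_e,\ |u-u'|\le\varepsilon_*/L\},
\]
where $G_e\subset S^{n-1}$ is a set of directions on which $\cos^2(ku\cdot e)$ is bounded below, and bound each of the three factors from below on this set.

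\emph{Step 1: the Dirichlet factor.} Take $|u-u'|\le\varepsilon_*/L$. Every $a\in\A$ has $|a|\le Lr_{\rm out}$, and $\varepsilon_*\le1/(2kr_{\rm out})$. Hence $|a\cdot k(u-u')|\le1/2$, so $\cos(a\cdot k(u-u'))\ge\cos(1/2)>1/\sqrt2$. Averaging over $\A$ gives $\Rea D_L\ge1/\sqrt2$, and so $|D_L|^2\ge1/2$. No lattice structure is used here; the only input is that the window has outer radius $r_{\rm out}$.

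\emph{Step 2: the cosine factor and the good set.} On the same pairs, $|e|\le R_*$ and $\varepsilon_*\le1/(12R_*)$. Therefore $k(u+u')\cdot e/2$ differs from $ku\cdot e$ by at most $k|u-u'||e|/2\le\pi/12$ (using $L\ge1$ and $k=2\pi$). I would take
\[
 G_e=\{u:\ \dist(ku\cdot e,\tfrac\pi2+\pi\Z)\ge\tfrac\pi6\},
\]
or, more carefully, a version tuned to $a_*$. On $G_e$ the perturbed angle stays at distance at least $\pi/12$ from $\frac\pi2+\pi\Z$, so the cosine squared is at least an absolute constant.

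The step I expect to be the main obstacle is showing $\sigma(G_e)\ge p_*$ uniformly in $e\in\V$. I would split into two cases.
\begin{itemize}
\item If $k|e|$ is small, then $G_e$ is the whole sphere.
\item Otherwise, the projection $u\cdot(e/|e|)$ under $\sigma$ has a density bounded by a dimensional constant: $1/\pi\sqrt{1-t^2}$ for $n=2$, and uniform on $[-1,1]$ for $n=3$. So the bad set, where $ku\cdot e$ lies within a fixed window of the points $\frac\pi2+\pi\Z$, has measure bounded away from $1$.
\end{itemize}
The constant $a_*=\min\{1,(12R_*)^{-1}\}$ and the arcsine form of $p_*=2a_*/\pi$ suggest organizing this through an arc of directions in a fixed plane containing $e$, on which $u\cdot e/|e|$ ranges over an interval where $ku\cdot e$ moves by less than $\pi/3$. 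With that organization the measure bound is an elementary arc-length computation. Then $\mu(G_e)\ge c_f\,\sigma(G_e)\ge c_fp_*$.

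\emph{Step 3: assemble.} For each $u\in G_e$, the cap $\{u':|u-u'|\le\varepsilon_*/L\}$ has $\sigma$-measure at least $\alpha_n(\varepsilon_*/L)^d$. In the plane this is the arc measure $\approx\varepsilon/(\pi L)$, giving $\alpha_2=1/\pi$. On $S^2$ it is $\approx\varepsilon^2/(4L^2)$, giving $\alpha_3=1/4$. This uses $\varepsilon_*/L\le1$. Its $\mu$-measure is therefore at least $c_f\alpha_n\varepsilon_*^dL^{-d}$. Multiplying the three lower bounds gives
\[
 \operatorname{Var}\Rea S_L(e)\ge\frac12\cdot c\cdot c_fp_*\cdot c_f\alpha_n\varepsilon_*^dL^{-d},
\]
where $c$ is the lower bound for the cosine squared from Step 2. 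With the constants arranged as in the statement, this is $c_*L^{-d}$ with $c_*=c_f^2p_*\alpha_n\varepsilon_*^d/8$.

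The bound is uniform over $e\in\V$ and over the family~\eqref{eq:family}. No step involves the reciprocal lattice, so lattices with reciprocal vectors of length $2k$ are included.
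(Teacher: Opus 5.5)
Your strategy is the paper's: restrict $\norm{H_e}_\HS^2$ to pairs with $|u-u'|\le\varepsilon_*/L$ and $u$ in a good set of directions, bound $|D_L|$ and the cosine from below there, and multiply by the cap mass $\alpha_n(\varepsilon_*/L)^d$. Steps 1 and 3 are correct. Your bound $\Rea D_L\ge\cos(1/2)$ is even sharper than the paper's $|D_L-1|\le\frac12$.

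\textbf{The gap.} It is in Step 2 and the assembly. With your set $G_e=\{u:\dist(ku\cdot e,\frac\pi2+\pi\Z)\ge\frac\pi6\}$, the perturbed angle $k(u+u')\cdot e/2$ is only guaranteed to lie $\pi/12$ away from a zero of cosine. The best uniform constant is therefore $c=\sin^2(\pi/12)\approx0.067$. Your final product $\frac12\,c\,c_f^2p_*\alpha_n\varepsilon_*^dL^{-d}$ equals $c_*L^{-d}$ only if $c=\frac14$. With $c\approx0.067$ you get about $0.27\,c_*$, or about $0.41\,c_*$ if you use $\cos^2(1/2)$ for the Dirichlet factor.

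A better estimate of $\sigma(G_e)$ cannot repair this. When $R_*\le1/12$ one has $p_*=2/\pi$, and you would need $\sigma(G_e)\ge2.4\,p_*>1$. In addition, the bound $\sigma(G_e)\ge p_*$ is only sketched. Your claim that the projection density is bounded also fails for $n=2$, where $1/(\pi\sqrt{1-t^2})$ blows up at $t=\pm1$.

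\textbf{The fix.} Use a single equatorial band instead of staying away from the zeros of cosine. Take $E_e=\{u:|u\cdot e|\le1/12\}$, which is contained in $G_e$.
\begin{itemize}
\item There $|ku\cdot e|\le\pi/6$, so the perturbed angle is at most $\pi/6+\pi/12=\pi/4$ and $\cos^2\ge\frac12$.
\item Combined with $|D_L|^2\ge\frac14$, this gives $|H_e|^2\ge\frac18$. Your Step 1 even gives $\frac14$.
\item Since $|e|\le R_*$, $E_e$ contains $\{u:|u\cdot e|\le a_*|e|\}$. Its $\sigma$-mass is $\frac2\pi\arcsin a_*\ge2a_*/\pi$ for $n=2$ and $a_*\ge2a_*/\pi$ for $n=3$, which is exactly $p_*$.
\end{itemize}
No case split on $k|e|$ is needed, and the stated constant $c_*$ follows directly.
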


\begin{proof}
Let $E_e=\{u:|u\cdot e|\le1/12\}$. It contains an equatorial band of $\sigma$-mass at least $p_*$. If $|u-u'|\le\varepsilon_*/L$, then $|D_L(k(u-u'))-1|\le kr_{\rm out}\varepsilon_*\le\frac12$, and for $u\in E_e$, $|k(u+u')\cdot e/2|\le\pi/6+kR_*\varepsilon_*/(2L)\le\pi/4$. In~\eqref{eq:Hkernel} this gives $|H_e|^2\ge\frac14\cdot\frac12$ on that set. Since $f\ge c_f$ and a cap of radius $r\le1$ centered on the sphere has $\sigma$-mass at least $\alpha_nr^d$,
\[
 \norm{H_e}_\HS^2\ge\frac{c_f^2}{8}\int_{E_e}\sigma\big(B_{\varepsilon_*/L}(u)\cap S^{n-1}\big)\sigma(\dd u)\ge c_*L^{-d}.
\]
Only pointwise bounds on $f$ are used.
\end{proof}

\begin{lemma}[Increments]\label{lem:incr}
$\norm{H_e-H_{w}}_\HS\le k|e-w|\sigma_L$ and $\norm{H_e-H_{w}}_\op\le k|e-w|\norm A_\op$. More generally, for every multi-index $\alpha$, $\norm{\partial_e^\alpha H_e}_\HS\le k^{|\alpha|}\sigma_L$ and $\norm{\partial_e^\alpha H_e}_\op\le k^{|\alpha|}\norm A_\op$.
\end{lemma}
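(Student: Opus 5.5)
The plan is to work directly with the operator form $H_e=\tfrac12(U_eA+AU_e^*)$ of Lemma~\ref{lem:wick}. Since $A$ does not depend on $e$, every derivative in $e$ falls on the multiplication operators $U_e$ and $U_e^*$. Both are multiplication by unimodular functions: $U_e$ multiplies by $e^{\mathrm iku\cdot e}$ and $U_e^*$ by $e^{-\mathrm iku\cdot e}$.

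Differentiating in $e$ gives
\[
\partial_e^\alpha U_e=M_\alpha U_e,\qquad M_\alpha z(u)=(\mathrm ik)^{|\alpha|}u^\alpha z(u),
\]
where $u^\alpha=\prod_i u_i^{\alpha_i}$. Since $|u_i|\le1$ on the sphere, $\norm{M_\alpha}_\op\le k^{|\alpha|}$. The same holds for $U_e^*$, with multiplier $\overline{M_\alpha}$. Therefore
\[
\partial_e^\alpha H_e=\tfrac12\big(M_\alpha U_eA+AU_e^*\overline{M_\alpha}\big).
\]

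We bound this with the ideal inequalities $\norm{XY}_\HS\le\norm X_\op\norm Y_\HS$ and $\norm{YX}_\HS\le\norm Y_\HS\norm X_\op$, together with $\norm{U_e}_\op=1$. Each of the two terms then has HS norm at most $k^{|\alpha|}\norm A_\HS=k^{|\alpha|}\sigma_L$, using Lemma~\ref{lem:wick} for $\sigma_L=\norm A_\HS$. Averaging the two terms with the factor $\tfrac12$ gives $\norm{\partial_e^\alpha H_e}_\HS\le k^{|\alpha|}\sigma_L$. The operator-norm bound $k^{|\alpha|}\norm A_\op$ follows in the same way from submultiplicativity of the operator norm.

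The increment bounds come from the first-order case. Take the segment $e_t=w+t(e-w)$ for $t\in[0,1]$. Then
\[
H_e-H_w=\int_0^1\tfrac12\big(D_tU_{e_t}A+AU_{e_t}^*\overline{D_t}\big)\dd t,\qquad D_t=\mathrm ik\,u\cdot(e-w).
\]
The operator $D_t$ is multiplication by a function of modulus at most $k|e-w|$. Applying the same norm inequalities inside the integral, and using the triangle inequality for the Bochner integral in the HS and operator norms, gives $\norm{H_e-H_w}_\HS\le k|e-w|\sigma_L$ and $\norm{H_e-H_w}_\op\le k|e-w|\norm A_\op$.

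A cruder route would bound each coordinate derivative separately and lose a factor $\sqrt n$. The segment argument avoids this because $|u\cdot(e-w)|\le|e-w|$ exactly.

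The only point needing care is justifying the differentiation under the operator. One could instead use the explicit kernel~\eqref{eq:Hkernel}, but it is simpler to note that everything reduces to finite rank. The operator $A$ has finite rank, with range and corange in the span of $\{v_a\}$. Consequently $U_eA$ and $AU_e^*$ are finite sums $\frac1M\sum_a|U_ev_a\rangle\langle v_a|$ and its adjoint. Each $U_ev_a=v_{a+e}$ is smooth in $e$ as an element of $L^2(\mu)$, with derivative $\partial_e^\alpha v_{a+e}=M_\alpha v_{a+e}$, because the difference quotients converge uniformly in $u$. Hence differentiation commutes with the finite sum, and the formulas above hold in HS norm.
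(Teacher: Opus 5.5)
Your proof is correct and is essentially the paper's argument. The paper also starts from $H_e=\tfrac12(U_eA+AU_e^*)$, notes that $\partial_e^\alpha U_e$ is multiplication by $(\mathrm iku)^\alpha e^{\mathrm iku\cdot e}$ of modulus at most $k^{|\alpha|}$, and applies the ideal inequalities with $\norm A_\HS=\sigma_L$. The differences are cosmetic: for the increments the paper uses the pointwise bound $\norm{U_e-U_w}_\op\le k|e-w|$ directly instead of integrating along the segment, and it omits your finite-rank justification for differentiating under the operator.
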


\begin{proof}
$\norm{U_e-U_w}_\op\le k|e-w|$, and $\partial^\alpha_eU_e$ is multiplication by $(\mathrm iku)^\alpha e^{\mathrm iku\cdot e}$, of modulus at most $k^{|\alpha|}$. Apply these to $H_e=\frac12(U_eA+AU_e^*)$.
\end{proof}

\begin{lemma}[Chaos tail]\label{lem:tail}
Let $X=z^*Hz-\tr H$ with $H$ Hermitian of finite rank, $z$ standard proper complex Gaussian, $s^2=\norm H_\HS^2$ and $b=\norm H_\op$. Then $X=\sum_j\lambda_j(E_j-1)$ with $E_j$ i.i.d.\ unit exponentials, and
\[
 \PP\big\{|X|>s\sqrt{2u}+bu\big\}\le2e^{-u}\qquad(u\ge0).
\]
Moreover $\E X^4\le9s^4$.
\end{lemma}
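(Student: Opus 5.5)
First diagonalize $H$. Since $H$ is Hermitian of finite rank, write $H=\sum_j\lambda_j|\varphi_j\rangle\langle\varphi_j|$ with an orthonormal eigenbasis $\varphi_j$. Then $z^*Hz=\sum_j\lambda_j|\langle\varphi_j,z\rangle|^2$. The coordinates $w_j=\langle\varphi_j,z\rangle$ are i.i.d.\ $\CN(0,1)$, because $z$ is standard proper complex Gaussian and the change of basis is unitary. Hence $E_j=|w_j|^2$ are i.i.d.\ unit exponentials, and since $\tr H=\sum_j\lambda_j$ we get $X=\sum_j\lambda_j(E_j-1)$. We also have $s^2=\sum_j\lambda_j^2$ and $b=\max_j|\lambda_j|$.

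For the tail, I will use the Chernoff method on a centered exponential. For $|\theta|<1$, $\log\E e^{\theta(E-1)}=-\theta-\log(1-\theta)$. For $0\le\theta\le1/2$ this is at most $\theta^2/(2(1-\theta))$; the same bound (indeed the sharper $\theta^2/2$) holds for $\theta<0$. So for $|\theta|<1/b$,
\[
 \log\E e^{\theta X}\le\sum_j\frac{\theta^2\lambda_j^2}{2(1-|\theta\lambda_j|)}\le\frac{\theta^2s^2}{2(1-b|\theta|)}.
\]
This is the sub-gamma condition with variance factor $s^2$ and scale $b$ on both tails. By the standard sub-gamma inversion (Boucheron--Lugosi--Massart, Lemma 2.?), $\PP\{X>s\sqrt{2u}+bu\}\le e^{-u}$. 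Applying the same argument to $-X$, whose eigenvalues are $-\lambda_j$ with the same $s,b$, and taking a union bound gives the factor $2$. The inversion step is routine: optimize $\theta$ in $\exp(-\theta t+\theta^2s^2/(2(1-b\theta)))$ at $t=s\sqrt{2u}+bu$, with the choice $\theta=\sqrt{2u}/(s+b\sqrt{2u})$ after rescaling. The only point needing care is that the centered exponential is not symmetric, so I will verify the cumulant bound separately for $\theta\ge0$ and $\theta<0$. I expect this to be the main (minor) obstacle.

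For the fourth moment, expand using independence and zero mean of $Y_j=E_j-1$. A unit exponential has central moments $\E Y^2=1$ and $\E Y^4=9$. Then
\[
 \E X^4=\sum_j\lambda_j^4\,\E Y^4+3\sum_{i\ne j}\lambda_i^2\lambda_j^2=9\sum_j\lambda_j^4+3\Big(s^4-\sum_j\lambda_j^4\Big)=3s^4+6\sum_j\lambda_j^4\le9s^4,
\]
using $\sum_j\lambda_j^4\le(\sum_j\lambda_j^2)^2=s^4$. The cross terms with an odd power vanish, and the third-moment terms $\lambda_i^3\lambda_j\E Y_i^3\E Y_j$ vanish because $\E Y_j=0$.
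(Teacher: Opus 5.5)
Your proposal is correct and follows the paper's proof step for step. You diagonalize into centered exponentials, bound the cumulant generating function by $\theta^2s^2/(2(1-|\theta|b))$, apply Markov's inequality at $\theta=\sqrt{2u}/(s+b\sqrt{2u})$ to $X$ and to $-X$, and expand to get $\E X^4=3s^4+6\sum_j\lambda_j^4\le9s^4$. One detail to tidy: you state the scalar bound only for $0\le\theta\le1/2$, but you need it on all of $0\le\theta<1$ to cover every $|\theta|<1/b$. It does hold there, since $-\theta-\log(1-\theta)=\sum_{k\ge2}\theta^k/k\le\frac{\theta^2}{2}\sum_{k\ge0}\theta^k=\theta^2/(2(1-\theta))$.
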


\begin{proof}
Diagonalize. For $|\theta|b<1$, $\log\E e^{\theta X}=\sum_j[-\log(1-\theta\lambda_j)-\theta\lambda_j]\le\theta^2s^2/(2(1-|\theta|b))$. Take $\theta=\sqrt{2u}/(s+b\sqrt{2u})$ in Markov's inequality, and repeat for $-X$. For the fourth moment, $\E(E-1)^4=9$ and independence give $\E X^4=3s^4+6\sum_j\lambda_j^4\le9s^4$.
\end{proof}

\section{Proof of Theorem~\ref{thm:aperture} and Proposition~\ref{prop:log}}\label{app:aperture}

Proposition~\ref{prop:scale} follows from Lemmas~\ref{lem:wick}, \ref{lem:op}, and~\ref{lem:nondeg}. Put
\[
 D_*=1+3kR_*,\qquad h_n=2n\log3+4(n+1)\log2,\qquad C_3=8C_*D_*^2h_n,\qquad c_4=(c_*/144)^{1/d}.
\]

\emph{Uniform tail over the cell.} Let $X_e=Y(e)=\Rea S_L(e)-m(e)$. Its paths are continuous. Let $N_0=\{0\}$ and, for $j\ge1$, let $N_j$ be a maximal $R_*2^{-j}$-separated subset of $\V\subset B_{R_*}$, so $|N_j|\le(3\cdot2^j)^n$. Choose $\pi_j(e)\in N_j$ within $R_*2^{-j}$ of $e$, with $\pi_0(e)=0$. Then $|\pi_j(e)-\pi_{j-1}(e)|\le3R_*2^{-j}$, and there are at most $9^n2^{2nj}$ such pairs at level $j$. Apply Lemma~\ref{lem:tail} to $X_0$ with $u_0=t+\log4$ and to each increment with
\[
 u_j=t+2n\log3+\log4+(2n+1)j\log2 .
\]
By Lemma~\ref{lem:incr}, an increment over a pair at level $j$ has HS norm at most $3kR_*2^{-j}\sigma_L$ and operator norm at most $3kR_*2^{-j}\norm{A}_\op$. The union of all exceptional events has probability at most $e^{-t}/2+\frac{e^{-t}}2\sum_j2^{-j}=e^{-t}$. Outside it, telescoping, continuity, $\sum_j2^{-j}u_j=t+h_n$, and Cauchy--Schwarz give
\begin{equation}\label{eq:uniformtail}
 \PP\Big\{\sup_{e\in\V}|X_e|>D_*\big[\sigma_L\sqrt{2(t+h_n)}+\norm A_\op(t+h_n)\big]\Big\}\le e^{-t}\qquad(t\ge0).
\end{equation}
The net is fixed by $\V$ and does not depend on $\gamma$.

\emph{Part 1.} Since $m(0)=1$, $0<\gamma\le1$. Let $t=\log(1/\eta)$ and $x=L^d$. The assumption and $h_n>1$ give $C_*(t+h_n)/x\le\gamma^2/(8D_*^2)$. By Lemma~\ref{lem:op}, the threshold in~\eqref{eq:uniformtail} is at most
\[
 D_*\Big[\sqrt{2C_*(t+h_n)/x}+C_*(t+h_n)/x\Big]\le\frac\gamma2+\frac{\gamma^2}{8D_*}\le\frac{5\gamma}8<\gamma .
\]
On $F_L$, $\sup_\V|X_e|\ge\gamma$. Hence $\PP(F_L)\le\eta$.

\emph{Part 2.} Let $e^*$ minimize $m$ on $\V$, let $X=X_{e^*}$ and $s^2=\E X^2$. Lemma~\ref{lem:tail} gives $\E X^4\le9s^4$. H\"older then gives $\E|X|\ge s/3$. Since $\E X=0$, the negative part $X_-$ has $\E X_-\ge s/6$ and $\E X_-^2\le s^2$. For $a\ge0$, $\PP(X_-\ge a)\ge(\E X_--a)_+^2/\E X_-^2$, which is at least $1/144$ at $a=s/12$. If $L\le c_4\gamma^{-2/d}$, then by Lemma~\ref{lem:nondeg} $s^2\ge c_*L^{-d}\ge144\gamma^2$, so $\gamma\le s/12$ and $\PP(F_L)\ge\PP(X\le-\gamma)\ge1/144$.

\emph{Part 3.} If $\gamma=0$, then $\E X_-\le\sqrt{\E X_-^2\,\PP(X<0)}$ gives $\PP(X<0)\ge(s/6)^2/s^2=1/36$. If $\gamma<0$, Chebyshev's inequality at $e^*$ gives $\PP(\Rea S_L(e^*)>0)\le s^2/\gamma^2\le C_*L^{-d}/\gamma^2$. \hfill$\square$

\emph{Proof of Proposition~\ref{prop:log}.} Take $\Lambda_a=a\Z^3$, a cubic window, and $f\equiv1$. Then $R=\sqrt3a/2$ and $\gamma(a)=\sin(\pi\sqrt3a)/(\pi\sqrt3a)$, which decreases to $0$ as $a\uparrow1/\sqrt3$. Fix $q>0$ and choose $a_L$ with $\gamma(a_L)=\sqrt q/L$. By Lemmas~\ref{lem:op} and~\ref{lem:nondeg}, the eigenvalues of $H_{e^*}$ satisfy $\max_j|\lambda_j|/s=O(L^{-1})$, so $\sum_j\E|\lambda_j(E_j-1)/s|^3\le C\max_j|\lambda_j|/s\to0$. The Lindeberg--Feller theorem~\cite[Thm.~3.4.10]{durrett} applied to $\sum_j\lambda_j(E_j-1)/s$ gives
\[
 \liminf_L\PP(F_L)\ge\liminf_L\PP\big(X/s\le-\sqrt q/(sL)\big)\ge\Phi\big(-\sqrt{q/c_*}\big)>0,
\]
while $\gamma^2L^d=q$. Suppose $L^d\ge C\gamma^{-2}[1+g(\log1/\eta)]$ always implied $\PP(F_L)\le\eta$, with $g(t)=o(t)$. Fix a large $t$ and set $\eta=e^{-t}$ and $q=\max\{1,C[1+g(t)]\}$. Then the implication applies, but the Gaussian lower tail gives $\Phi(-\sqrt{q/c_*})>e^{-t}$ for $t$ large, since $q=o(t)$. This is a contradiction. \hfill$\square$

\section{Proofs for Section~\ref{sec:deepholes}}\label{app:deepholes}

Constants depend only on $n,\Lambda_0,s_-$. Write $\sigma_L$, $A$, $H_e$ for the dilated lattice $\Lambda_s$. All bounds below are uniform in $s\in I$. Let $a_n=|B_1^n|$ and $s_{n-1}=na_n$.

\subsection{Normalization and covariance (Theorem~\ref{thm:localization}.1)}\label{app:normcov}

\emph{Smooth window.} Write $\Lambda_0=B_0\Z^n$, $v_s=\covol(\Lambda_s)$, and $\G_s=2\pi(sB_0)^{-T}\Z^n$. For radial $w\in C_c^\infty$ let $\widehat w(\xi)=\int w(x)e^{\mathrm ix\cdot\xi}\dd x$ and $D^w_L(q)=M^{-1}\sum_{a\in\Lambda_s}w(a/L)e^{\mathrm ia\cdot q}$. Poisson summation~\cite[Ch.~5, Thm.~3.1]{stein} gives
\[
 D^w_L(q)=\frac{L^n}{v_sM}\sum_{G\in\G_s}\widehat w\big(L(q-G)\big).
\]
Since $\widehat w$ decays rapidly and $\G_s$ is uniformly separated, at most one term is not $O(L^{-N})$ at each $q$, so $|D^w_L|^2=\big(\frac{L^n}{v_sM}\big)^2\sum_G|\widehat w(L(q-G))|^2+O(L^{-N})$ uniformly on $|q|\le2k$.

\emph{Aliases.} For $U,U'$ independent uniform on the sphere, $q=k(U-U')$ has density $p_2(q)=\big(\pi^2|q|\sqrt{4k^2-|q|^2}\big)^{-1}$ on $0<|q|<2k$ for $n=2$, and $p_3(q)=(8\pi k^2|q|)^{-1}$ for $n=3$. Away from the origin, the mass of a ball of radius $r$ is $O(r^{3/2})$ for $n=2$, because of the inverse square-root singularity at $|q|=2k$, and $O(r^3)$ for $n=3$. Hence each $G\ne0$ contributes
\[
 \iint|\widehat w(L(k(u-u')-G))|^2\sigma(\dd u)\sigma(\dd u')=O(L^{-3/2})\ (n=2),\qquad O(L^{-3})\ (n=3),
\]
which is $o(L^{-d})$. This holds uniformly whether $|G|-2k$ is positive, zero, or negative.

\emph{Diagonal term.} For continuous $\Psi$ on the product of two spheres, write $u'=\sqrt{1-|y|^2}\,u+y$ with $y\in u^\perp$ and $x=kLy$. Dominated convergence gives
\[
 L^d\iint|\widehat w(kL(u-u'))|^2\Psi(u,u')\dd\sigma\dd\sigma\to\frac{1}{k^ds_{n-1}}\int_{\R^d}|\widehat w(x,0)|^2\dd x\int\Psi(u,u)\sigma(\dd u).
\]
By Fubini and Plancherel, $\int_{u^\perp}|\widehat w|^2=(2\pi)^d\int_{u^\perp}\big[\int w(y+tu)\dd t\big]^2\dd y$.

\emph{Hard window.} Let $w_\delta$ be radial, $0\le w_\delta\le1$, equal to $1$ on $B_{1-\delta}$ and $0$ outside $B_1$. Let
\[
 A^w=\frac1M\sum_{a\in\Lambda_s}w(a/L)\,|v_a\rangle\langle v_a| ,
\]
which keeps the hard-window normalization $M$. Then $0\le A-A^{w_\delta}\le A$ and $\tr(A-A^{w_\delta})$ is $M^{-1}$ times a weighted count of the lattice points with $(1-\delta)L\le|a|\le L$. The fundamental cells of these points, of diameter at most $C_P$ uniformly in $s\in I$, lie in the shell $(1-\delta)L-C_P\le|x|\le L+C_P$. Hence their number is at most $v_s^{-1}$ times its volume, that is $O(\delta L^n+L^{n-1})$ uniformly in $s$, and $\tr(A-A^{w_\delta})\le C(\delta+L^{-1})$. Only this upper bound is needed, not the asymptotics of the count. Lemma~\ref{lem:op} then gives $\norm{A-A^{w_\delta}}_\HS^2\le C(\delta+L^{-1})L^{-d}$. By Cauchy--Schwarz, replacing $D_L$ by $D_L^{w_\delta}$ changes $L^d\iint|D_L|^2\Psi$ by at most $C\norm\Psi_\infty\sqrt{\delta+L^{-1}}$. Letting $L\to\infty$ and then $\delta\to0$, and using $L^n/(v_sM)\to a_n^{-1}$ and the chord function $2\sqrt{1-|y|^2}$ of the ball,
\begin{equation}\label{eq:weak}
 L^d\iint|D_L(k(u-u'))|^2\Psi(u,u')\dd\sigma\dd\sigma\to\kappa_n\int\Psi(u,u)\sigma(\dd u),\qquad
 \kappa_n=\frac{8a_{n-1}}{(n+1)a_n^2s_{n-1}} .
\end{equation}
This gives $\kappa_2=8/(3\pi^3)$ and $\kappa_3=9/(32\pi^2)$. The convergence is uniform over bounded equicontinuous families of $\Psi$. By~\eqref{eq:Hkernel}, $\operatorname{Cov}(Y(e),Y(w))=\tr(H_eH_w)$ is~\eqref{eq:weak} with a test whose diagonal is $\cos(ku\cdot e)\cos(ku\cdot w)$. Dividing by the case $\Psi\equiv1$ gives $G(e,w)=\int\cos(ku\cdot e)\cos(ku\cdot w)\dd\sigma=\frac12[\phi(e-w)+\phi(e+w)]$. For a general window $K$ the same computation, with chord function $\int\mathbf 1_K(y+tu)\dd t$, gives Proposition~\ref{prop:weight}. Since $\int\cos^2(ku\cdot e)\dd\sigma$ has a positive minimum on compact sets,
\begin{equation}\label{eq:nondeg2}
 c\sigma_L^2\le\operatorname{Var}Y(e)\le\sigma_L^2,\qquad cL^{-d}\le\sigma_L^2\le CL^{-d}.
\end{equation}

\subsection{Exact localization (Theorem~\ref{thm:localization}.2)}

\emph{Linear gap.} Let $E_0$ be the vertex set of $\V_0$ and, for $v\in\D_0$, let $a_v=\frac12\min_{w\in E_0\setminus\{v\}}|w-v|$. For $w\in E_0$, $|w|\le|v|$ gives $v\cdot(w-v)\le-a_v|w-v|$. By convexity $v\cdot(e-v)\le-a_v|e-v|$ for $e\in\V_0$, hence $R_0-|e|\ge\frac{a_v}{2R_0}|e-v|$ near $v$. By compactness elsewhere, $R_0-|e|\ge c_\V\dist(e,\D_0)$ on $\V_0$. Since $\psi$ is strictly decreasing on $[0,r_0]$ with $\psi'<0$ on $(0,r_0]$, the divided difference $(\psi(r)-\psi(R))/(R-r)$, extended by $-\psi'(R)$ at $r=R$, is continuous and positive on $\{0\le r\le R,\ s_-R_0\le R\le r_0\}$, hence bounded below. Therefore
\begin{equation}\label{eq:gap}
 m(e)-\gamma_s\ge c_g\dist(e,\D_s)\qquad(e\in\V_s,\ s\in I).
\end{equation}

\emph{Gradient tail.} By Lemma~\ref{lem:incr} and the moment form of Lemma~\ref{lem:tail}, $\norm{\partial^\alpha Y(e)}_p\le3k^{|\alpha|}(\sigma_L\sqrt p+\norm A_\op p)$ for even $p$. On a fixed cube $Q_0\supset\V_{s_*}$ of side $\ell_0$, the averaged fundamental theorem of calculus in each coordinate gives $\sup_{Q_0}|g|\le\sum_{J}\ell_0^{-(n-|J|)}\int_{Q_0}|\partial_Jg|$, where $J$ ranges over subsets of $\{1,\ldots,n\}$ and $\partial_J$ differentiates once in each coordinate indexed by $J$. Hence
\[
 \Big\|\sup_{Q_0}|\nabla Y|\Big\|_p\le C_Q\big(\sigma_L\sqrt p+\norm A_\op p\big),\qquad C_Q=3nk(1+k\ell_0)^n .
\]
With $\sigma_L^2\le\norm A_\op\le C'L^{-d}$, choosing $p\asymp L^d$ gives
\begin{equation}\label{eq:gradtail}
 \PP\Big\{\sup_{Q_0}|\nabla Y|\ge c_g\Big\}\le e^{-cL^d}.
\end{equation}

\emph{Conclusion.} Outside the event in~\eqref{eq:gradtail}, take $e\in\V_s\setminus\D_s$ and the nearest $v\in\D_s$. The segment $[e,v]$ lies in $\V_s$, and~\eqref{eq:gap} gives $\Rea S_L(e)-\Rea S_L(v)\ge(c_g-\sup|\nabla Y|)|e-v|>0$. So every minimizer is a deep hole.

\subsection{Gaussian comparison (Theorem~\ref{thm:localization}.3)}

Let $F_i=Y(v_i)/\sigma_L$ for the $N$ deep holes and $r=\norm A_\op/\sigma_L\le CL^{-d/2}$. Realify: $F_i=x^\top B_ix-\tr B_i$ with $x$ standard real Gaussian, $\norm{B_i}_\op\le r/2$, $\norm{B_i}_\HS\le1/\sqrt2$, and $C_{ij}=\E F_iF_j=2\tr(B_iB_j)$. By~\eqref{eq:nondeg2}, $C_{ii}\ge c$. Gaussian integration by parts gives $\E[F_if(F)]=\sum_j\E[T_{ij}\partial_jf(F)]$ with $T_{ij}=2x^\top\operatorname{sym}(B_iB_j)x$ and $\operatorname{sym}(M)=(M+M^\top)/2$, where $\E T_{ij}=C_{ij}$ and $\E|T_{ij}-C_{ij}|\le r$. With $G\sim N(0,C)$ independent and the interpolation $U_t=\sqrt tF+\sqrt{1-t}G$,
\[
 |\E h(F)-\E h(G)|\le\frac r2\sum_{i,j}\norm{\partial_{ij}h}_\infty .
\]
This allows singular $C$. Bracketing the indicator of $\{F_i>a_i\ \forall i\}$ by $C^2$ functions at scale $\delta$ costs $C\delta^{-2}r$ in the comparison. Each bracket differs from the indicator only on the union of the strips $\{|x_i-a_i|\le\delta\}$, so for the Gaussian vector the error is at most $\sum_i\PP\{|G_i-a_i|\le\delta\}\le2N\delta/\sqrt{2\pi c}$. Only the marginal variances $C_{ii}\ge c$ enter, and no lower bound on the smallest eigenvalue of $C$ is needed. Taking $\delta=r^{1/3}$, using $a_i=-\gamma_s/\sigma_L$, and combining with part~2 proves part~3.

\subsection{Limit law, sign classes, and sharp aperture}

\emph{Part 4.} Along $s_j\to s_*$, the normalized covariances converge to $G$ on $\D_*$ by~\eqref{eq:weak}. Their square roots converge as well, so coupled Gaussians converge almost surely. The smooth comparison above gives joint weak convergence. Each coordinate of the limit has positive variance, so $\{\min Z=-t\}$ is null, and the probabilities converge. For $t=\infty$, a union bound gives $N\Phi(-t_j)\to0$.

\emph{Proposition~\ref{prop:classes}.} Evenness of cosine gives $Z_v=Z_{-v}$. For distinct $x_1,\ldots,x_m\in\R^n$ with $n\ge2$, the functions $e^{\mathrm iku\cdot x_j}$ are linearly independent on the sphere. To see this, restrict to a great circle $p\cos z+q\sin z$ on which the values $q\cdot x_j$ are distinct, and continue analytically to $z=\mathrm i\tau$. As $\tau\to\infty$ the term with the smallest $q\cdot x_j$ dominates, which forces its coefficient to vanish. Hence the functions $\cos(ku\cdot v)$, one for each sign pair, are independent, and their Gram matrix is positive definite. $H$ is continuous because the marginals have densities. It is strictly decreasing because the Gaussian vector indexed by the sign classes has a positive density. For the square lattice the two classes give a bivariate normal with correlation $\rho=J_0(\sqrt2j_{0,1})/\frac12(1+J_0(2j_{0,1}))$, and the wedge formula gives~\eqref{eq:p0square}.

\emph{Theorem~\ref{thm:sharp}.} For fixed $s<s_*$, $\E\sup_{\V_s}|Y|^2\le C\sigma_L^2$, so the set of bad apertures is bounded. $\PP(F_L)$ is a right-continuous step function of $L$, so $L_\eta$ is attained. Let $0<\eta<p_0$, $q=q_\eta$ and $r_\gamma=(\kappa_nq^2/\gamma^2)^{1/d}$. At $L=(1-\epsilon)r_\gamma$, the uniform normalization gives $\gamma/\sigma_L\to q(1-\epsilon)^{d/2}<q$, so the limit probability exceeds $\eta$. Hence $L_\eta\ge(1-\epsilon)r_\gamma$ eventually. Conversely, suppose bad apertures $L_j\ge(1+\epsilon)r_{\gamma_j}$ existed along $\gamma_j\to0$. Then $\liminf\gamma_j/\sigma_{L_j}\ge q(1+\epsilon)^{d/2}$, and along a convergent subsequence the limit probability would be below $\eta$, a contradiction. The cases $\eta>p_0$ and $\eta=p_0$ follow by the same subsequence argument. For the square lattice, rational interval arithmetic on the Bessel series certifies $p_0<24/25$, so $\eta=0.99$ violates the restriction.

\subsection{Joint limit and relative error (Theorem~\ref{thm:joint})}

\emph{Exponential chaos.} For $Y=\sum_j\lambda_j(E_j-1)$ with i.i.d.\ unit exponentials, $v^2=\sum_j\lambda_j^2$ and $b=\max_j|\lambda_j|$, the cumulant generating function is $K(q)=\sum_j[-q\lambda_j-\log(1-q\lambda_j)]$ for $|q|b<1$, and $\kappa_r(Y)=(r-1)!\sum_j\lambda_j^r$. Hence $|\kappa_r(Y/v)|\le(r-1)!(b/v)^{r-2}\le(r-2)!(2b/v)^{r-2}$ for $r\ge3$, which is condition (S) of the cumulant method of D\"oring, Jansen and Schubert~\cite[Thm.~2.1]{djs} with $\Delta=v/(2b)$. The same cumulant bounds hold for $-Y$ and, after recentering and normalizing, under every exponential tilt with $1+\theta\lambda_j>0$, with $\lambda_j$ replaced by $\lambda_j/(1+\theta\lambda_j)$.

\emph{A logarithmic moderate-deviation lemma.} If $x/v\to\infty$ and $xb/v^2\to0$, then $\log\PP(Y\le-x)=-\frac{x^2}{2v^2}(1+o(1))$. The upper bound is Chernoff's inequality at $q=-x/v^2$, using $K(q)=\frac12v^2q^2(1+O(|q|b))$. For the lower bound let $y=x+3v$ and choose $\theta>0$ with $K'(-\theta)=-y$. Then $\theta=(y/v^2)(1+O(yb/v^2))$. Under the tilted law $\dd\mathbb Q/\dd\PP=\exp(-\theta Y-K(-\theta))$, $Y$ has mean $-y$ and variance at most $2v^2$, so $\mathbb Q\{-x-6v\le Y\le-x\}>\frac12$. Changing measure back gives $\PP(Y\le-x)\ge\frac12\exp\{K(-\theta)-\theta(x+6v)\}$, which is the claim.

\emph{Part 1.} At a deep hole, $\norm{H_v}_\op\le\norm A_\op\le CL^{-d}$ by Lemma~\ref{lem:op}, and $\nu_v^2=\sigma_L^2G(v,v)(1+o(1))=(\kappa_nG_*+o(1))L^{-d}$ uniformly by~\eqref{eq:weak}. So $\gamma_s\norm{H_v}_\op/\nu_v^2=O(\gamma_s)\to0$, and the lemma gives $\log\PP\{Y_L(v)\le-\gamma_s\}=-\gamma_s^2L^d/(2\kappa_nG_*)(1+o(1))$ for each deep hole. By part~2 of Theorem~\ref{thm:localization},
\[
 \max_v\PP\{Y_L(v)\le-\gamma_s\}\le\PP(F_L)\le\sum_v\PP\{Y_L(v)\le-\gamma_s\}+Ce^{-cL^d},
\]
and the last term is negligible on the scale $\gamma_s^2L^d=o(L^d)$. This proves the logarithmic asymptotics. For the aperture, put $w=\log(1/\eta)$ and $\ell_\pm^d=(1\pm\epsilon)2\kappa_nG_*\gamma_s^{-2}w$. At $L=\ell_-$ the asymptotics give $\PP(F_L)>\eta$ eventually, so $L_\eta\ge\ell_-$. For all $L\ge\ell_+$ simultaneously, the Bernstein bound $\PP\{Y_L(v)\le-\gamma_s\}\le\exp[-\gamma_s^2/(2(\nu_v^2+\norm{H_v}_\op\gamma_s))]$, which follows from $K(-\theta)\le\nu_v^2\theta^2/(2(1-\norm{H_v}_\op\theta))$, together with the uniform variance limit and the localization bound, gives $\PP(F_L)\le\eta$ eventually. Hence $L_\eta\le\ell_+$. Let $\epsilon\downarrow0$. No monotonicity of $\PP(F_L)$ in $L$ is used.

\emph{Part 2.} Throughout, $t=t_L$, $\sigma=\sigma_L$, and $\varepsilon=\norm{A}_\op/\sigma=O(L^{-d/2})$. Since the sample set is centrally symmetric, the sampling operator, written $\mathcal K=A$ in this paragraph (kernel $D_L(k(u-u'))$), is real symmetric. With $C_e,T_e$ the multiplications by $\cos(ku\cdot e)$ and $\sin(ku\cdot e)$,
\[
 H_{\pm e}=Q_e\pm A_e,\qquad Q_e=\tfrac12(C_e\mathcal K+\mathcal KC_e),\qquad A_e=\tfrac{\mathrm i}2(T_e\mathcal K-\mathcal KT_e),
\]
where $Q_e$ is real symmetric, $A_e$ is imaginary antisymmetric, and $\norm{Q_e}_\op,\norm{A_e}_\op\le\norm{A}_\op$. Write $q_e,a_e$ for the centered forms. Then $Y_L(\pm e)=q_e\pm a_e$, and the event for a sign pair is exactly $\{q_e-|a_e|\le-\gamma_s\}$.

\begin{lemma}[Sign difference]\label{lem:signdiff}
For $e$ in a compact set, uniformly in $s\in I$, $\tr A_e^2\le CL^{-3d/2}$. Hence $\tr A_e^2/\sigma^2\le CL^{-d/2}$.
\end{lemma}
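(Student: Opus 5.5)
The plan is to compute $\tr A_e^2$ as a Hilbert--Schmidt norm and split it into a diagonal part, where the commutator structure gives extra decay, and alias parts, where the geometry of the sphere does. Since $A_e=\tfrac{\mathrm i}2[T_e,\mathcal K]$ is Hermitian, $\tr A_e^2=\norm{A_e}_\HS^2$, and its kernel is $\tfrac{\mathrm i}2D_L(k(u-u'))\big(\sin(ku\cdot e)-\sin(ku'\cdot e)\big)$. The sine difference vanishes on the diagonal and satisfies $|\sin(ku\cdot e)-\sin(ku'\cdot e)|\le k|e|\,|u-u'|$. The gain over $\sigma_L^2\asymp L^{-d}$ should therefore come from two sources: the factor $|u-u'|^2\lesssim L^{-2}$ where $D_L$ concentrates, near $u=u'$, and the small mass of the alias regions $k(u-u')\approx G\ne0$. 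Working with the pointwise decay of the hard-window Dirichlet kernel $D_L$ is awkward, so I would reuse the smoothing device of Appendix~\ref{app:normcov}.

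\emph{Step 1 (smoothing with an $L$-dependent width).} Write $\mathcal K=A^{w_\delta}+R_\delta$. From Appendix~\ref{app:normcov} and Lemma~\ref{lem:op} we have $\norm{R_\delta}_\HS^2\le C(\delta+L^{-1})L^{-d}$, and $\norm{[T_e,R_\delta]}_\HS\le2\norm{R_\delta}_\HS$. Choosing $\delta=L^{-d/2}$ makes this remainder $O(L^{-3d/2})$, since $L^{-1}\le L^{-d/2}$ for $d\le2$. It then remains to bound $\norm{[T_e,A^{w_\delta}]}_\HS^2$. By Poisson summation, its kernel is $\tfrac{L^n}{v_sM}\sum_{G}\widehat w_\delta(L(k(u-u')-G))$ times the sine difference.

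\emph{Step 2 (diagonal term $G=0$).} I bound the sine difference by $k|e||u-u'|$ and use the change of variables $u'=\sqrt{1-|y|^2}u+y$, $x=kLy$. This gives a contribution of at most
\[
C L^{-d-2}\int_{\R^d}|x|^2|\widehat w_\delta(x,0)|^2\dd x .
\]
The integral is, up to constants, the energy $\int|\nabla_y(\text{chord of }w_\delta)|^2$, which is $O(\delta^{-1})$ for a radial profile with a transition layer of width $\delta$. With $\delta=L^{-d/2}$ the diagonal term is $O(L^{-2-d/2})\le O(L^{-3d/2})$, because $d\le2$. The cross terms between $G=0$ and $G\ne0$ are handled by Cauchy--Schwarz once Step 3 is done.

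\emph{Step 3 (alias terms, the main obstacle).} For $G\ne0$ the sine difference does not vanish, so I bound it by $2$. I then use the argument from the alias paragraph of Appendix~\ref{app:normcov}: the law of $q=k(U-U')$ gives mass $O(r^{3/2})$ ($n=2$) or $O(r^3)$ ($n=3$) to balls of radius $r$ away from the origin, uniformly through tangency $|G|=2k$. At scale $r\sim L^{-1}$ this yields $O(L^{-3/2})=O(L^{-3d/2})$ for $n=2$ and $O(L^{-3})=O(L^{-3d/2})$ for $n=3$. This matches the exponent in the statement, which suggests that the alias contributions are the sharp term.

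The difficulty is that $\widehat w_\delta$ decays rapidly only beyond $|\xi|\gg\delta^{-1}$, so the tails are not uniform in $\delta$. I would first try to split $|\widehat w_\delta(\xi)|^2$ dyadically in $|\xi|$. On the shells I would combine the ball-mass bound with the Plancherel bound $\int|\widehat w_\delta|^2\le C$ and the bound $|\widehat w_\delta(\xi)|\lesssim|\xi|^{-(n+1)/2}$, which holds uniformly in $\delta$ because $w_\delta$ is a mollified ball indicator. Summing the shells should give $O(L^{-3d/2})$ up to at most a logarithm. If a logarithm appears, I would take $\delta=L^{-d/2-\epsilon'}$ and absorb it. Finally, uniformity in $s\in I$ and in $e$ in a compact set follows because $\G_s$ is uniformly separated and all constants above depend only on $|e|$ and $\Lambda_0,s_-$. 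The second claim of the statement then follows from $\sigma^2\ge cL^{-d}$ in~\eqref{eq:nondeg2}.
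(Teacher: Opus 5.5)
Your proposal is correct and follows the paper's proof in its essentials. Both arguments smooth the hard window and control the remainder in Hilbert--Schmidt norm through the shell count and Lemma~\ref{lem:op}. Both then Poisson-sum the smoothed kernel, gain the factor $|q|^2$ from the vanishing of the sine difference in the $G=0$ term, and bound the alias terms $G\neq0$ by the small-ball masses of the law of $q$ (uniform through tangency) summed over dyadic annuli.

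The differences are minor:
\begin{itemize}
\item The paper mollifies at a fixed absolute width, which corresponds to $\delta\asymp L^{-1}$, and uses the pointwise decay of $\widehat{\mathbf 1}_{B_L}$ also for $G=0$, obtaining $L^{-d-2}\log L$ there. Your one-sided $w_\delta$ gives a positive remainder, and your cruder chord-energy bound $O(\delta^{-1})$ is still sufficient.
\item No logarithm arises in the alias sum, so your fallback of shrinking $\delta$ is unnecessary. It also could not absorb a logarithm there, since that term does not depend on $\delta$.
\item The uniform decay $|\xi|^{-(n+1)/2}$ alone is not summable over $G$. The tail of the Poisson sum over $|G|>4k+1$ is controlled by the mollifier factor, uniformly in $L$, because $\delta L\ge1$. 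You should state this explicitly when you invoke Cauchy--Schwarz for the cross terms.
\end{itemize}
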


\begin{proof}
Let $\zeta$ be an even mollifier supported in a ball of fixed radius $c_0$, and replace the weights $\mathbf 1_{B_L}(a)$ by $(\mathbf 1_{B_L}*\zeta)(a)$, keeping the normalization $M$. The difference $\mathcal R$ has coefficients of modulus at most one supported in the shell $L-c_0\le|a|\le L+c_0$. Split it into positive parts $\mathcal R_\pm$. The shell count of Appendix~\ref{app:normcov} gives $\tr\mathcal R_\pm\le CL^{-1}$, and Lemma~\ref{lem:op} gives $\norm{\mathcal R_\pm}_\op\le CL^{-d}$, so $\norm{\mathcal R}_\HS^2\le CL^{-d-1}\le CL^{-3d/2}$ for $n\in\{2,3\}$. Commutation with $T_e$ preserves this bound. For the smoothed kernel, Poisson summation and $|\widehat{\mathbf 1}_{B_L}(z)|\le CL^n(1+L|z|)^{-(n+1)/2}$ give
\[
 |\widetilde D_L(q)|^2\le C\sum_{G\in\G_s,\ |G|\le4k+1}(1+L|q-G|)^{-(n+1)}+CL^{-(n+1)}\qquad(|q|\le2k).
\]
The kernel of $A_e$ is $\frac{\mathrm i}2D_L(q)[\sin(ku\cdot e)-\sin(ku'\cdot e)]$ with $q=k(u-u')$. For $G\ne0$, the small-ball bounds for the law of $q$ (mass $Cr^{3/2}$ for $n=2$, $Cr^3$ for $n=3$, including tangency) and dyadic annuli give contributions $O(L^{-3/2})$ and $O(L^{-3})$, that is, $O(L^{-3d/2})$. For $G=0$, $|\sin(ku\cdot e)-\sin(ku'\cdot e)|^2\le C|q|^2$, and the radial density of $q$ near $0$ is at most $Cr^{d-1}$, so the contribution is at most $C\int_0^k r^{d+1}(1+Lr)^{-(d+2)}\dd r\le CL^{-d-2}\log(2+L)$.
\end{proof}

\begin{lemma}[Collapse of a sign pair]\label{lem:collapse}
If $t\to\infty$ and $t\,[L^{-d/4}\sqrt{\log L}+L^{-d/2}\log L]\to0$, then uniformly in $s\in I$
\[
 \PP\{Y_L(e)\le-\gamma_s\}=(1+o(1))\PP\{q_e\le-\gamma_s\}=(1+o(1))\PP\{\min(Y_L(e),Y_L(-e))\le-\gamma_s\},
\]
and the same holds for the centered Gaussian pair with the exact covariance of $(Y_L(e),Y_L(-e))$.
\end{lemma}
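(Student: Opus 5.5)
The plan has three steps. First, reduce the event at the pair to a statement about $q_e$ plus a small perturbation. By the decomposition preceding Lemma~\ref{lem:signdiff}, $Y_L(\pm e)=q_e\pm a_e$, so
\[
 \{Y_L(e)\le-\gamma_s\}\subseteq\{\min(Y_L(e),Y_L(-e))\le-\gamma_s\}=\{q_e-|a_e|\le-\gamma_s\}.
\]
Thus it suffices to sandwich both probabilities between $\PP\{q_e\le-\gamma_s(1\pm\epsilon_L)\}$ and then show that $q_e$ has a tail that is insensitive to such relative shifts. Lemma~\ref{lem:signdiff} gives $\tr A_e^2\le CL^{-3d/2}$, and $a_e$ is a centered Hermitian Gaussian quadratic form with kernel of operator norm at most $\norm A_\op\le CL^{-d}$. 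Lemma~\ref{lem:tail} with $u=K\log L$ then gives
\[
 \PP\{|a_e|>\beta\}\le2L^{-K},\qquad \beta:=C\big(L^{-3d/4}\sqrt{\log L}+L^{-d}\log L\big).
\]
For the upper bound, $\PP\{q_e-|a_e|\le-\gamma_s\}\le\PP\{q_e\le-\gamma_s+\beta\}+2L^{-K}$. For the lower bound, $\PP\{Y_L(e)\le-\gamma_s\}\ge\PP\{q_e\le-\gamma_s-\beta\}-2L^{-K}$.

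Second, show that the relative shift is negligible. Relative to the scale, $\beta/\sigma=O(L^{-d/4}\sqrt{\log L}+L^{-d/2}\log L)$, so the hypothesis gives exactly $t\beta/\sigma\to0$. I would establish a local ratio property: if $t\to\infty$ and $t\delta\to0$, then
\[
 \PP\{q_e\le-\gamma_s\pm\delta\sigma\}=(1+o(1))\PP\{q_e\le-\gamma_s\}.
\]
To do this, I would use the cumulant (exponential tilting) representation from the proof of the moderate-deviation lemma. Write $q_e=\sum_j\lambda_j(E_j-1)$ with $b/v=O(L^{-d/2})$, where $v^2=\operatorname{Var}q_e\asymp\sigma^2$ by~\eqref{eq:nondeg2} and $\operatorname{Var}a_e=o(\sigma^2)$. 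Tilt at $\theta\approx\gamma_s/v^2$. Under the tilted law, $q_e$ is approximately normal around $-\gamma_s$ with variance of order $v^2$, and the D\"oring--Jansen--Schubert bound gives a local CLT-type smoothness of the tilted distribution on windows of width $\delta\sigma$ with relative error $o(1)$. Changing measure back multiplies the probability by $e^{\pm\theta\delta\sigma}=e^{O(t\delta)}=1+o(1)$. Equivalently, one can compare the ratio of tail densities using the Cramér-type expansion $\PP\{q_e\le-x\}=\Phi(-x/v)\exp(O(x^3b/v^4))(1+O((1+x/v)b/v))$. The extra shift then changes $x/v$ by $o(1/t)$, and hence changes the Mills ratio by the factor $1+o(1)$.

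Third, check that the polynomial error $L^{-K}$ is negligible. This holds because $t=o(L^{d/4})$, so $\PP\{q_e\le-\gamma_s\}\ge e^{-Ct^2}\ge e^{-CL^{d/2}}$. That bound alone is not polynomial, so the $\beta$ threshold has to be chosen differently. Instead I would take $u\asymp t^2$ in Lemma~\ref{lem:tail}, giving
\[
 \PP\{|a_e|>\beta'\}\le2e^{-Kt^2},\qquad \beta'=C\big(L^{-3d/4}t+L^{-d}t^2\big).
\]
This is much smaller than the main tail $e^{-t^2/2(1+o(1))}$ when $K>1$. The condition $t\beta'/\sigma\to0$ then becomes $t^2L^{-d/4}\to0$ and $t^3L^{-d/2}\to0$. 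To avoid these stronger requirements, I would split on $|a_e|$ dyadically and use that $a_e$ and $q_e$ jointly tilt. I expect this balancing between the tail of $a_e$ and the sharp tail ratio for $q_e$ to be the main obstacle. The Gaussian statement follows the same way, and there it is easier: the pair $(Y(e),Y(-e))$ is Gaussian with $\operatorname{Var}(a)=o(\sigma^2L^{-d/2})$, and Gaussian Mills-ratio bounds replace the cumulant step.
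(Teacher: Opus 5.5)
Your reduction $\{\min(Y_L(e),Y_L(-e))\le-\gamma_s\}=\{q_e-|a_e|\le-\gamma_s\}$ is correct. So is the window estimate for $q_e$ in Step~2: tilt to the saddle, apply Berry--Esseen under the tilt, and compare a window mass $O(\delta+\varepsilon)$ with a tilted main term of order $t^{-1}$. The gap is the one you flag in Step~3 and leave open.

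In Step~1 the error $\PP\{|a_e|>\beta\}\le2L^{-K}$ is computed under the original measure. It is added to, or for the lower bound subtracted from, a main term $\PP\{q_e\le-\gamma_s\}$ of order $t^{-1}e^{-(1+o(1))t^2/(2G_*)}$. The lemma allows $t$ up to $o(L^{d/4}/\sqrt{\log L})$, so $t^2\gg\log L$ is permitted, and then the polynomial error swamps the main term. Your repair with $u\asymp t^2$ needs $t^2L^{-d/4}\to0$, i.e.\ $t=o(L^{d/8})$. That covers neither the stated range nor the $o(L^{d/6})$ range used in Theorem~\ref{thm:joint}.2.

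A dyadic split on $|a_e|$ does not help by itself. Each piece $\PP\{0<q_e+\gamma_s\le2^k\beta,\ |a_e|>2^{k-1}\beta\}$ then needs a genuinely joint bound. A minimum gives either $L^{-K}$ or the main term times $e^{h2^k\beta}$, and H\"older loses a constant in the exponent $t^2$. You say that $a_e$ and $q_e$ ``jointly tilt'', but the argument rests on a specific structural fact that you do not identify.

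That fact is that tilting by $q_e$ alone keeps $a_e$ centered and small. Put $U=q_e+\gamma_s$ and $\dd\PP_h/\dd\PP=\exp\{-hq_e-K(-h)\}$ at the saddle $h\approx\gamma_s/\tau^2$. Under $\PP_h$ the complex Gaussian vector has covariance $R_h=(I+hQ_e)^{-1}$. This is real symmetric, because the centrally symmetric sample set makes $Q_e$ real symmetric. Since $A_e$ is imaginary antisymmetric, you get three things:
\begin{itemize}
\item $\E_ha_e=\tr(R_hA_e)=0$;
\item $\operatorname{Var}_ha_e\le4\tr A_e^2$;
\item the tilted operator norm is at most $2\norm A_\op$, using $h\norm A_\op\asymp t\varepsilon\to0$.
\end{itemize}

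Lemma~\ref{lem:tail} under $\PP_h$ with $u=p\log L$ then gives $\PP_h\{|a_e|>a_L\}\le2L^{-p}$ for $a_L/\sigma\asymp L^{-d/4}\sqrt{\log L}+L^{-d/2}\log L$. The point is that this polynomial error is now compared with $B_h=\E_h[e^{hU}\mathbf 1\{U\le0\}]\asymp t^{-1}$, not with $e^{-ct^2}$.

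Write the excess as $\PP\{0<U\le|a_e|\}=e^{-I}\E_h[e^{hU}\mathbf 1\{0<U\le|a_e|\}]$. On $\{|a_e|\le a_L\}$, bound it by $e^{ha_L}\PP_h\{0<U\le a_L\}\le Ce^{ha_L}(a_L/\sigma+\varepsilon)$. On the complement, use Cauchy--Schwarz with $\E_he^{2h|a_e|}\le2\exp\{Ch^2\tr A_e^2\}=O(1)$, which uses $t^2L^{-d/2}\to0$. The relative excess is $O(t[e^{ha_L}(a_L/\sigma+\varepsilon)+L^{-p/2}])$, and it vanishes exactly under the hypothesis $t\,a_L/\sigma\to0$.

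The same argument with the two-sided band $\{|U|\le|a_e|\}$ handles $\{Y_L(e)\le-\gamma_s\}$. Your Gaussian case is fine as sketched, since there the average and half-difference are independent. One correction: Lemma~\ref{lem:signdiff} gives $\tr A_e^2=O(\sigma^2L^{-d/2})$, not $o(\cdot)$. This still suffices because $t^2L^{-d/2}\to0$.
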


\begin{proof}
Let $\tau^2=\tr Q_e^2$ and $K(\theta)=\log\E e^{\theta q_e}$. Since $t\varepsilon\to0$, there is a unique small saddle $h=(\gamma_s/\tau^2)(1+O(t\varepsilon))$ with $K'(-h)=-\gamma_s$. Under $\dd\PP_h/\dd\PP=\exp\{-hq_e-K(-h)\}$, the complex Gaussian vector has covariance $R_h=(I+hQ_e)^{-1}$, which is real symmetric. Hence $\E_ha_e=\tr(R_hA_e)=0$, $\operatorname{Var}_ha_e\le4\tr A_e^2$, and the tilted operator norm of $a_e$ is at most $2\norm A_\op$. Put $U=q_e+\gamma_s$ and $I=h\gamma_s-K(-h)$. Under $\PP_h$, $U$ is a centered exponential chaos with cumulants controlled as above, so~\cite[Thm.~2.4]{djs} gives $\sup_x|\PP_h\{U/\tau_h\le x\}-\Phi(x)|\le C\varepsilon$ with $\tau_h^2=K''(-h)$. Writing $B_h=\E_h[e^{hU}\mathbf 1\{U\le0\}]$,
\[
 \PP\{q_e\le-\gamma_s\}=e^{-I}B_h,\qquad B_h=e^{h^2\tau_h^2/2}\Phi(-h\tau_h)+O(\varepsilon)\asymp t^{-1}.
\]
Let $a_L/\sigma=C_p[L^{-d/4}\sqrt{\log L}+L^{-d/2}\log L]$. The Bernstein bound of Lemma~\ref{lem:tail} under $\PP_h$ and Lemma~\ref{lem:signdiff} give $\PP_h\{|a_e|>a_L\}\le2L^{-p}$ for any fixed $p$, and $\E_he^{2h|a_e|}\le2\exp\{Ch^2\tr A_e^2\}=O(1)$ because $t^2L^{-d/2}\to0$. The pair event adds the region $\{0<U\le|a_e|\}$. On $\{|a_e|\le a_L\}$ its tilted weight is at most $e^{ha_L}\PP_h\{0<U\le a_L\}\le Ce^{ha_L}(a_L/\sigma+\varepsilon)$. On the complement, Cauchy--Schwarz bounds it by $CL^{-p/2}$. Dividing by $B_h\asymp t^{-1}$ gives a relative excess $O(t[e^{ha_L}(a_L/\sigma+\varepsilon)+L^{-p/2}])\to0$, since $ha_L=o(1)$. Take $p$ large. The same argument with the two-sided band $\{|U|\le|a_e|\}$ compares $\{Y_L(e)\le-\gamma_s\}$ with $\{q_e\le-\gamma_s\}$. For the Gaussian pair, the average and the half-difference are independent with variances $\tau^2$ and $\tr A_e^2$. The same proof applies with zero Berry--Esseen error. Finally $\operatorname{Var}Y_L(e)-\tau^2=O(\sigma^2L^{-d/2})$, and Mills' ratio shows that this variance change costs a factor $1+o(1)$ when $t^2L^{-d/2}\to0$.
\end{proof}

\begin{lemma}[Union over sign classes]\label{lem:union}
Under the assumption of Lemma~\ref{lem:collapse}, with one representative $v_i$ from each sign class,
\[
 \PP(F_L)=(1+o(1))\sum_i\PP\{q_{v_i}\le-\gamma_s\},
\]
and the same holds for the Gaussian proxy of Theorem~\ref{thm:localization}.3.
\end{lemma}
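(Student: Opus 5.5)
The plan has three steps: localize to the deep holes, collapse each sign pair to its representative using Lemma~\ref{lem:collapse}, and then show that distinct sign classes contribute essentially disjoint events.

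\emph{Localization and pair collapse.} By Theorem~\ref{thm:localization}.2,
\[
 \PP(F_L)=\PP\Big(\bigcup_{v\in\D_s}\{Y_L(v)\le-\gamma_s\}\Big)+O(e^{-cL^d}).
\]
The exponential error is negligible relative to $\Phi(-t)\ge e^{-t^2}$, because $t^2=o(L^{d/2})$. Group the deep holes into sign pairs $\{v_i,-v_i\}$. By Lemma~\ref{lem:collapse}, the pair event $E_i=\{\min(Y_L(v_i),Y_L(-v_i))\le-\gamma_s\}$ has probability $(1+o(1))p_i$, with $p_i=\PP\{q_{v_i}\le-\gamma_s\}$, uniformly in $s$. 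Inclusion--exclusion then gives
\[
 \sum_iP(E_i)-\sum_{i<j}P(E_i\cap E_j)\le\PP\Big(\bigcup_iE_i\Big)\le\sum_iP(E_i).
\]
Since the number of classes is fixed, it remains to show that $P(E_i\cap E_j)=o(p_i+p_j)$ for $i\ne j$.

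\emph{Decorrelation of distinct classes.} Each $p_i$ is $\asymp t^{-1}e^{-I_i}$, with exponent $I_i\approx\gamma_s^2/(2\tau_i^2)$. For the joint event, I would bound $P(E_i\cap E_j)$ by the probability that $q_{v_i}+q_{v_j}\le-2\gamma_s+|a_{v_i}|+|a_{v_j}|$. On $\{|a|\le a_L\}$ this is a single exponential chaos whose variance is $\tau_i^2+\tau_j^2+2\operatorname{Cov}(q_{v_i},q_{v_j})$. By Proposition~\ref{prop:classes} and the uniform covariance limit~\eqref{eq:weak}, the normalized covariance matrix of $(q_{v_i},q_{v_j})$ converges to a positive definite limit. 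Its correlation $\rho_{ij}$ is therefore bounded away from $1$, uniformly in $s\in I$ (for $s$ near $s_*$, by continuity in $s$ and compactness of $I$). Chernoff's bound with the cumulant control (the Bernstein form of Lemma~\ref{lem:tail}) gives
\[
 P(E_i\cap E_j)\le\exp\Big\{-\frac{4\gamma_s^2}{2(\tau_i^2+\tau_j^2+2\rho\tau_i\tau_j)}(1+o(1))\Big\}+CL^{-p}.
\]
Since $\tau_i\asymp\tau_j\asymp\sigma$ with ratio tending to $1$ (both $G(v,v)\to G_*$ up to the lattice's own value), the exponent is $\frac{2}{1+\rho}\cdot\frac{t^2}{2}$-ish, which exceeds $t^2/2\,(1+\epsilon)$ for some fixed $\epsilon>0$. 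This beats $p_i\asymp t^{-1}e^{-t^2/2(1+o(1))}$ by a factor $e^{-\epsilon t^2/2}t\to0$.

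\emph{Main obstacle.} The delicate point is that the $o(1)$ in the Chernoff exponent must be small compared with the gap $\epsilon t^2$, and not merely relative. The correction is of size $O(t^3\varepsilon)=O(t^3L^{-d/2})$, while the gap is of order $t^2$, so it suffices that $t\varepsilon\to0$, which holds under the hypothesis. The prefactor $t^{-1}$ in $p_i$ must likewise be compared with $e^{-\epsilon t^2}$; this is fine because $t\to\infty$. If the classes had different variances at finite $s$, I would restrict the comparison to the class with the larger $p_i$ and use $p_i+p_j\ge\max$. For the Gaussian proxy the same argument applies verbatim, with exact Gaussian tails replacing the chaos bounds.
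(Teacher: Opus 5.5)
Your architecture is the paper's: localize to $\D_s$, collapse each sign pair with Lemma~\ref{lem:collapse}, and make cross-class intersections negligible through a Chernoff bound on $q_{v_i}+q_{v_j}$. That sum has variance at most $2G_*(1+\bar\rho)\sigma^2$ because the limiting covariance on one representative per class is positive definite. The gap is in the intersection bound itself. You bound $\PP(E_i\cap E_j)$ by $\exp\{\cdots\}+CL^{-p}$, where $CL^{-p}$ is the \emph{untilted} probability $\PP\{|a_{v}|>a_L\}$ at the threshold $a_L$ of Lemma~\ref{lem:collapse}. You then compare only the exponential term with $p_i$. But $p_i=\exp\{-(1+o(1))t^2/(2G_*)\}$, and the hypothesis allows, for instance, $t=L^{d/8}$. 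Then $p_i$ is smaller than every power of $L$, so $CL^{-p}=o(p_i)$ fails for every fixed $p$, and $\PP(E_i\cap E_j)=o(p_i+p_j)$ is not established. In Lemma~\ref{lem:collapse} a polynomial bound suffices only because it is taken under the tilted law $\PP_h$ and compared with $B_h\asymp t^{-1}$; without the tilt it is far too weak. The issue you flag as the main obstacle is not the delicate one: a relative $o(1)$ in the exponent already loses to the fixed gap $\frac{t^2}{2G_*}\frac{1-\bar\rho}{1+\bar\rho}$.

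The repair is short. The paper never lets $a$ enter an intersection. Since $\min(Y_L(v),Y_L(-v))=q_v-|a_v|\le q_v$, the average events $A_i=\{q_{v_i}\le-\gamma_s\}$ satisfy $A_i\subseteq E_i$. Hence $\bigcup_iA_i\subseteq\bigcup_iE_i$ and $\PP(\bigcup_iE_i\setminus\bigcup_iA_i)\le\sum_i\PP(E_i\setminus A_i)=o(\sum_ip_i)$ by Lemma~\ref{lem:collapse}. Inclusion--exclusion is then needed only for the $A_i$, where $\PP(A_i\cap A_j)\le\PP\{q_{v_i}+q_{v_j}\le-2\gamma_s\}$ involves no half-differences. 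Alternatively, keep your route but use a $t$-dependent threshold. Taking $u=Ct^2$ in Lemma~\ref{lem:tail}, together with Lemma~\ref{lem:signdiff}, gives $\PP\{|a_v|>a\}\le2e^{-Ct^2}$ with $a\le C'\sigma(L^{-d/4}t+L^{-d/2}t^2)=o(\gamma_s)$. Choosing $C>1/(2G_*)$ makes this term $o(p_i)$, and the shift $2a$ only perturbs the Chernoff exponent by a relative $o(1)$. The Gaussian proxy needs the same correction. There the half-difference is Gaussian with variance $\tr A_e^2$, and a threshold at $\sqrt{\log L}$ standard deviations again leaves only a polynomial tail.
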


\begin{proof}
By Lemma~\ref{lem:signdiff} and~\eqref{eq:weak}, the averages $q_{v_i}$ have limiting normalized covariance equal to $G$ on one representative per class, which is positive definite with all diagonal entries $G_*$ (Proposition~\ref{prop:classes}). Fix $\bar\rho<1$ above the largest correlation between distinct classes. For large $L$, $\operatorname{Var}(q_{v_i}+q_{v_j})\le2G_*(1+\bar\rho)\sigma^2$, and Chernoff's bound for this exponential chaos gives
\[
 \PP\{q_{v_i}\le-\gamma_s,\ q_{v_j}\le-\gamma_s\}\le\PP\{q_{v_i}+q_{v_j}\le-2\gamma_s\}\le\exp\Big\{-\frac{(1+o(1))t^2}{G_*(1+\bar\rho)}\Big\}.
\]
Each single tail is $\exp\{-(1+o(1))t^2/(2G_*)\}$ with a prefactor polynomial in $t^{-1}$, so the pairwise intersections are of smaller order uniformly. Hence the union of the average events has probability $(1+o(1))\sum_i\PP\{q_{v_i}\le-\gamma_s\}$. The union of the exact pair events contains it, and its excess is at most the sum of the excesses bounded in Lemma~\ref{lem:collapse}. The localization error $Ce^{-cL^d}$ is negligible against $e^{-O(t^2)}$ in this range. The same steps apply to the Gaussian proxy.
\end{proof}

\emph{Conclusion of Part 2.} The condition of Lemma~\ref{lem:collapse} is equivalent to $t=o(L^{d/4}/\sqrt{\log L})$. For each scalar form, the cumulant theorem~\cite[Thm.~2.1]{djs} gives
\[
 \frac{\PP\{Y\le-x\}}{\Phi(-z)}=\exp\Big\{-\frac{c_3z^3}{6}+O(\delta^2z^4)\Big\}\big[1+O((1+z)\delta)\big],\qquad z=\frac xv,\ c_3=\frac{2\sum_j\lambda_j^3}{v^3},\ \delta=\frac bv,
\]
for $0\le z\le c/\delta$. With $\delta=O(L^{-d/2})$ and $z\asymp t$, one has $\delta^2z^4=O(L^{-d}t^4)\to0$ in this range, so Lemmas~\ref{lem:collapse}--\ref{lem:union} give the cubic-corrected formula, with exact finite-$L$ variances $\nu_v^2$ and third cumulants $\kappa_{3,v}=2\tr H_v^3$. Replacing $Q_e$ by $H_e$ in the cubic term changes it by $O(t^3\varepsilon L^{-d/2})=o(1)$, since $\tr H_e^3-\tr Q_e^3=3\tr(Q_eA_e^2)$ and $|\tr(Q_eA_e^2)|\le\norm A_\op\tr A_e^2$. When $t=o(L^{d/6})$, the cubic term is itself $o(1)$, and comparing with the same expansion for the Gaussian proxy gives the ratio statement. In the Gaussian limit the same union argument gives $H(q)=N_*\Phi(-q/\sqrt{G_*})(1+o(1))$ as $q\to\infty$, and inverting $\Phi(-x)=\Phi'(x)x^{-1}(1+O(x^{-2}))$ gives the quantile expansion.

\subsection{No uniform rate over spectral densities}

Let $n=2$, $K=B_1$, $\Lambda=\frac14\Z^2$, and $f_m(\theta)=1+\frac12\cos(m\theta)$ with $m$ even. Then $\phi_m(r)=J_0(2\pi|r|)+\frac12\mathrm i^mJ_m(2\pi|r|)\cos(m\arg r)$. The continuum constant is $\kappa(f_m)=\frac{8}{3\pi^3}(1+\frac18)$ for every $m$. For $|t|\le4\pi L$, $|J_m(t)|\le\delta(L,m)=\frac{(2\pi L)^m}{m!}\exp\big(\frac{4\pi^2L^2}{m+1}\big)$. With $m_L=2\lceil50(L+1)^2\rceil$, $L\delta(L,m_L)\to0$, so $L\sigma_L^2(f_{m_L})-L\sigma_L^2(1)\to0$, while the limits differ by $\frac{1}{3\pi^3}$. Hence $\liminf_L\sup_f|L\sigma_L^2(f)-\kappa(f)|\ge\frac1{6\pi^3}$ over this family. Each $f_m$ is smooth and satisfies $\frac12\le f_m\le\frac32$, so bounds on $f$ alone give no rate.

\section{Proofs for Section~\ref{sec:specular}}\label{app:specular}

\subsection{Uniform limit}
Expand $S_L=(1-\alpha)S_c+S_\times+S_{\rm at}$. Here $S_c$ is built from $h_c$, $S_\times$ collects the cross terms
\[
 \sqrt{(1-\alpha)p_j}\,\frac1M\sum_{a\in\A}h_c(a)\,\bar g_j\,e^{-\mathrm i\xi_j\cdot(a+e)}
\]
and their counterparts with the roles of $h_c$ and the paths exchanged, and
\[
 S_{\rm at}(e)=\sum_{j,l}\sqrt{p_jp_l}\,g_j\bar g_l\,e^{-\mathrm i\xi_l\cdot e}\,D_L(\xi_j-\xi_l).
\]
By~\eqref{eq:uniformtail} and Lemma~\ref{lem:op}, $\sup_\V|S_c-\phi_c(-\cdot)|=O_P(L^{-d/2})$. Conditionally on the gains, each cross term is a linear Gaussian functional of $h_c$. Its variance is $\int|D_L(ku-\xi_j)|^2\mu_c(\dd u)\le C\norm{A}_\op=O(L^{-d})$, and the same holds for its $e$-derivatives. A Sobolev bound on the compact cell, applied to real and imaginary parts, then gives $\sup_\V|S_\times|=O_P(L^{-d/2})$. For $q\in\G$, $D_L(q)=1$. For $q\notin\G$, choose a lattice vector $a_0$ with $e^{\mathrm ia_0\cdot q}\ne1$ and write $D_L(q)(1-e^{\mathrm ia_0\cdot q})$ as a sum over the symmetric difference of $\A$ and $\A+a_0$. This gives $|D_L(q)|=O(L^{-1})$, with a constant depending on the fixed frequency gaps. Hence
\begin{equation}\label{eq:Sinf}
 S_\infty(e)=(1-\alpha)\phi_c(-e)+\sum_{\mathcal C}\Big(\sum_{j\in\mathcal C}\sqrt{p_j}g_j\Big)\overline{\Big(\sum_{l\in\mathcal C}\sqrt{p_l}g_le^{\mathrm i\xi_l\cdot e}\Big)},
\end{equation}
whose real part is $m_c(e)+g^*B_eg$ with $B_e$ as defined in Section~\ref{sec:specular}.

\subsection{Theorem~\ref{thm:specular}}
Let $M_\infty=\min_\V(m_c+g^*B_eg)$ and $g=\sqrt TU$. Then $M_\infty\le0$ if and only if $T\,(-U^*B_eU)\ge m_c(e)$ for some $e$, that is, $T\ge1/r(U)$ when $r(U)>0$. Since $T$ has the Gamma density $t^{J-1}e^{-t}/(J-1)!$ and is independent of $U$,
\[
 \PP(M_\infty\le0\mid U)=\PP\big(T\ge1/r(U)\big)=H_J\big(r(U)\big),\qquad \PP(M_\infty=0\mid U)=0 .
\]
The boundary event is null, so uniform convergence in probability gives $\PP(F_L)\to\PP(M_\infty\le0)=\E H_J(r(U))$. If some $B_{e_0}$ has a negative eigenvalue, then $r(U)>0$ for $U$ in a neighborhood of the corresponding eigenvector, a set of positive probability, so the limit is positive. If every $B_e$ is positive semidefinite, then $r\equiv0$. Without aliasing $B_e$ is diagonal, which gives Corollary~\ref{cor:nothreshold}. For one path, $J=1$, $r=p/((1-p)\kappa)$ and $H_1(r)=e^{-1/r}$.

\subsection{Proposition~\ref{prop:fixed} and Corollary~\ref{cor:quarter}}
In model (D) without aliasing, $|g_j|^2=1$ and the cross terms between different paths vanish in the limit, so $S_\infty$ is deterministic. Uniform convergence gives the dichotomy. For one path, $\min_\V[(1-p)\Rea\phi_c+p\cos(\xi\cdot e)]>0$ if and only if $(1-p)\kappa>p$.

For Corollary~\ref{cor:quarter}, $|u\cdot e|\le|e|\le1/4$ gives $\cos(2\pi u\cdot e)\ge0$. Conversely, if $R>1/4$, a point on the segment from $0$ to a deep hole with radius in $(1/4,\min\{R,3/4\})$ has negative cosine along its own direction. With the dual lattice $\Lambda^\#=B^{-T}\Z^n$, so that $\G=2\pi\Lambda^\#$, an alias $u_j-u_l=q\in\Lambda^\#\setminus\{0\}$ needs $|q|\le2$. But every nonzero dual vector satisfies $R>1/(2|q|)$: take the point $q/(2|q|^2)$, shifted transversally off the countable set of lattice projections. All lattice points are at longitudinal distance at least $1/(2|q|)$ from it. Hence $R\le1/4$ excludes aliasing. The densities follow from the covering theorems~\cite{bambah,barnes56}.

\subsection{Theorem~\ref{thm:known}}
\emph{Upper bound.} $C$ is a centrally symmetric convex body. If $\V(\Lambda)\subset C$, the translates of $C$ cover the plane. Under the covering condition $\Lambda+C=\R^2$, this holds by assumption. By F\'ary's theorem~\cite[\S III.1, Lemme~1]{fary}, $C$ then contains a centrally symmetric hexagon, possibly degenerate, that tiles the plane under $\Lambda$, so $\covol\Lambda$ is at most the largest area of such a hexagon in $C$.

\emph{Largest inscribed hexagon.} Center the hexagon. For vectors $p,q,r$,
\[
 |\conv\{\pm p,\pm q,\pm r\}|=\max\{2D_{pq},2D_{qr},2D_{rp},D_{pq}+D_{qr}+D_{rp}\},\qquad D_{pq}=|\det(p,q)|,
\]
which is convex in each vector. So the vertices can be moved to the circular arcs of $\partial C$. Let $\theta_1\le\theta_2\le\theta_3$ be the angles of the three vertices in the upper half-plane, so that $\theta_0\le\theta_1$ and $\theta_3\le\pi-\theta_0$ with $\theta_0=\arccos(h/\rho)$, and put $a=\theta_2-\theta_1$, $b=\theta_3-\theta_2$. The area is $\rho^2[\sin a+\sin b+\sin(a+b)]\le\rho^2[2\sin(w/2)+\sin w]$ with $w=a+b\le2\arcsin(h/\rho)\le2\pi/3$. The bound is increasing in $w$, so the area is at most $2h(\rho+s)$. Equality holds for the vertices $(0,\pm\rho)$ and $(\pm h,\pm s)$.

\emph{The Voronoi cell of $\Lambda_*$.} With $a=(2h,0)$, $b=(h,t)$ and $t=\rho+s$, the identities $h^2+t^2=2\rho t$ and $h^2/t=\rho-s$ show that the bisectors of $\pm a$, $\pm b$, $\pm(b-a)$ cut out exactly this hexagon. For a general lattice vector $(hA,tn)$, with $A$ and $n$ of equal parity, one checks the support function at the six vertices: $\max\{\rho t|n|,\,h^2|A|+st|n|\}\le\frac12(h^2A^2+t^2n^2)$, using $t\ge h$. So the hexagon is the Voronoi cell, and its area is $2ht$.

\emph{Strict margin and boundary.} A uniform positive continuous margin confines residuals to $C_{R',h}=\{e:|e|\le R',\ |e_1|\le h\}$ with $R'<\rho$. The same argument then gives an area strictly below $2h(\rho+s)$, so the infimum is not attained. The Voronoi cell of $(1-\varepsilon)\Lambda_*$ is $(1-\varepsilon)\V(\Lambda_*)$. On it $J_0(2\pi|e|)\ge J_0(2\pi(1-\varepsilon)\rho)>0$ and $\cos(2\pi e_1)\ge\sin(\pi\varepsilon/2)>0$, so Theorem~\ref{thm:specular} gives $\PP(F_L)\to0$. For $\Lambda_*$ itself, at the vertex $(h,s)$ both $J_0(2\pi\rho)$ and $\cos(2\pi h)$ vanish, so $\E\Rea S_L=0$ there for every $L$. In model (G) the field is Gaussian, so at that vertex $X=\Rea S_L=\sum_j\lambda_j(E_j-1)$ as in Lemma~\ref{lem:tail}. The continuous part gives $s_X^2=\E X^2>0$, and $\E X^4\le9s_X^4$. Then $\E|X|\ge s_X/3$, $\E X_-\ge s_X/6$, and $\E X_-\le\sqrt{\E X_-^2\,\PP(X<0)}\le s_X\sqrt{\PP(X<0)}$, so $\PP(F_L)\ge\PP(X<0)\ge1/36$ for every finite nonempty aperture.

\bibliographystyle{IEEEtran}
\bibliography{refs}

\begin{thebibliography}{10}
\providecommand{\url}[1]{#1}
\csname url@samestyle\endcsname
\providecommand{\newblock}{\relax}
\providecommand{\bibinfo}[2]{#2}
\providecommand{\BIBentrySTDinterwordspacing}{\spaceskip=0pt\relax}
\providecommand{\BIBentryALTinterwordstretchfactor}{4}
\providecommand{\BIBentryALTinterwordspacing}{\spaceskip=\fontdimen2\font plus
\BIBentryALTinterwordstretchfactor\fontdimen3\font minus
  \fontdimen4\font\relax}
\providecommand{\BIBforeignlanguage}[2]{{%
\expandafter\ifx\csname l@#1\endcsname\relax
\typeout{** WARNING: IEEEtran.bst: No hyphenation pattern has been}%
\typeout{** loaded for the language `#1'. Using the pattern for}%
\typeout{** the default language instead.}%
\else
\language=\csname l@#1\endcsname
\fi
#2}}
\providecommand{\BIBdecl}{\relax}
\BIBdecl

\bibitem{ota}
Y.~Yang, Z.~Zhang, Y.~Tian, Z.~Yang, C.~Huang, C.~Zhong, and K.-K. Wong,
  ``Over-the-air split machine learning in wireless {MIMO} networks,''
  \emph{IEEE Journal on Selected Areas in Communications}, vol.~41, no.~4, pp.
  1007--1022, 2023, doi:
  \href{https://doi.org/10.1109/JSAC.2023.3242701}{10.1109/JSAC.2023.3242701}.

\bibitem{wong}
K.-K. Wong, A.~Shojaeifard, K.-F. Tong, and Y.~Zhang, ``Fluid antenna
  systems,'' \emph{IEEE Transactions on Wireless Communications}, vol.~20,
  no.~3, pp. 1950--1962, 2021.

\bibitem{bambah}
R.~P. Bambah, ``On lattice coverings by spheres,'' \emph{Proceedings of the
  National Institute of Sciences of India}, vol.~20, pp. 25--52, 1954.

\bibitem{barnes56}
E.~S. Barnes, ``The covering of space by spheres,'' \emph{Canadian Journal of
  Mathematics}, vol.~8, pp. 293--304, 1956, doi:
  \href{https://doi.org/10.4153/CJM-1956-033-4}{10.4153/CJM-1956-033-4}.

\bibitem{conway}
J.~H. Conway and N.~J.~A. Sloane, ``Voronoi regions of lattices, second moments
  of polytopes, and quantization,'' \emph{IEEE Transactions on Information
  Theory}, vol.~28, no.~2, pp. 211--226, 1982, doi:
  \href{https://doi.org/10.1109/TIT.1982.1056483}{10.1109/TIT.1982.1056483}.

\bibitem{barnes}
E.~S. Barnes and N.~J.~A. Sloane, ``The optimal lattice quantizer in three
  dimensions,'' \emph{SIAM Journal on Algebraic and Discrete Methods}, vol.~4,
  no.~1, pp. 30--41, 1983, doi:
  \href{https://doi.org/10.1137/0604005}{10.1137/0604005}.

\bibitem{pm62}
D.~P. Petersen and D.~Middleton, ``Sampling and reconstruction of
  wave-number-limited functions in {$N$}-dimensional {E}uclidean spaces,''
  \emph{Information and Control}, vol.~5, no.~4, pp. 279--323, 1962, doi:
  \href{https://doi.org/10.1016/S0019-9958(62)90633-2}{10.1016/S0019-9958(62)90633-2}.

\bibitem{kunsch}
H.~R. K{\"u}nsch, E.~Agrell, and F.~A. Hamprecht, ``Optimal lattices for
  sampling,'' \emph{IEEE Transactions on Information Theory}, vol.~51, no.~2,
  pp. 634--647, 2005, doi:
  \href{https://doi.org/10.1109/TIT.2004.840864}{10.1109/TIT.2004.840864}.

\bibitem{agrell18}
E.~Agrell and B.~Cs{\'e}bfalvi, ``Multidimensional sampling of isotropically
  bandlimited signals,'' \emph{IEEE Signal Processing Letters}, vol.~25, no.~3,
  pp. 383--387, 2018, doi:
  \href{https://doi.org/10.1109/LSP.2017.2720143}{10.1109/LSP.2017.2720143}.

\bibitem{fary}
I.~F{\'a}ry, ``Sur la densit{\'e} des r{\'e}seaux de domaines convexes,''
  \emph{Bulletin de la Soci{\'e}t{\'e} Math{\'e}matique de France}, vol.~78,
  pp. 152--161, 1950, doi:
  \href{https://doi.org/10.24033/bsmf.1413}{10.24033/bsmf.1413}.

\bibitem{adlertaylor}
R.~J. Adler and J.~E. Taylor, \emph{Random Fields and Geometry}.\hskip 1em plus
  0.5em minus 0.4em\relax New York: Springer, 2007.

\bibitem{azais}
J.-M. Aza{\"\i}s and M.~Wschebor, \emph{Level Sets and Extrema of Random
  Processes and Fields}.\hskip 1em plus 0.5em minus 0.4em\relax Hoboken, NJ:
  Wiley, 2009.

\bibitem{marinucci}
D.~Marinucci and G.~Peccati, \emph{Random Fields on the Sphere: Representation,
  Limit Theorems and Cosmological Applications}.\hskip 1em plus 0.5em minus
  0.4em\relax Cambridge University Press, 2011.

\bibitem{nourdinpeccati}
I.~Nourdin and G.~Peccati, \emph{Normal Approximations with {M}alliavin
  Calculus: From {S}tein's Method to Universality}.\hskip 1em plus 0.5em minus
  0.4em\relax Cambridge University Press, 2012.

\bibitem{kkw}
M.~Krishnapur, P.~Kurlberg, and I.~Wigman, ``Nodal length fluctuations for
  arithmetic random waves,'' \emph{Annals of Mathematics}, vol. 177, no.~2, pp.
  699--737, 2013, doi:
  \href{https://doi.org/10.4007/annals.2013.177.2.8}{10.4007/annals.2013.177.2.8}.

\bibitem{berry}
M.~V. Berry, ``Regular and irregular semiclassical wavefunctions,''
  \emph{Journal of Physics A: Mathematical and General}, vol.~10, no.~12, pp.
  2083--2091, 1977.

\bibitem{gps}
L.~Gass, G.~Peccati, and M.~Stecconi, ``Scars in random waves and the
  {FGF}$(1/2)$ universality class,'' arXiv:2606.07842, 2026.

\bibitem{beliaev}
D.~Beliaev and A.~Hegde, ``Excursion fluctuations and spectral universality in
  {G}aussian fields,'' arXiv:2606.15630, 2026.

\bibitem{maini}
L.~Maini, ``Asymptotic covariances for functionals of weakly stationary random
  fields,'' \emph{Stochastic Processes and their Applications}, vol. 170, 2024,
  art. no. 104297, doi:
  \href{https://doi.org/10.1016/j.spa.2024.104297}{10.1016/j.spa.2024.104297}.

\bibitem{maininourdin}
L.~Maini and I.~Nourdin, ``Spectral central limit theorem for additive
  functionals of isotropic and stationary {G}aussian fields,'' \emph{The Annals
  of Probability}, vol.~52, no.~2, pp. 737--763, 2024, doi:
  \href{https://doi.org/10.1214/23-AOP1669}{10.1214/23-AOP1669}.

\bibitem{ferris}
M.~C. Ferris, ``Weak sharp minima and exact penalty functions,'' Computer
  Sciences Department, University of Wisconsin--Madison, Tech. Rep. 779, 1988.

\bibitem{carcamo}
J.~C{\'a}rcamo, A.~Cuevas, and L.-A. Rodr{\'\i}guez, ``Directional
  differentiability for supremum-type functionals: statistical applications,''
  \emph{Bernoulli}, vol.~26, no.~3, pp. 2143--2175, 2020, doi:
  \href{https://doi.org/10.3150/19-BEJ1188}{10.3150/19-BEJ1188}.

\bibitem{ngo14}
H.~Q. Ngo, E.~G. Larsson, and T.~L. Marzetta, ``Aspects of favorable
  propagation in massive {MIMO},'' in \emph{Proc. European Signal Processing
  Conference (EUSIPCO)}, 2014, pp. 76--80.

\bibitem{liray}
S.~Li, P.~J. Smith, P.~A. Dmochowski, H.~Tataria, M.~Matthaiou, and J.~Yin,
  ``Massive {MIMO} asymptotics for ray-based propagation channels,'' \emph{IEEE
  Transactions on Wireless Communications}, vol.~19, no.~6, pp. 3977--3991,
  2020, doi:
  \href{https://doi.org/10.1109/TWC.2020.2979146}{10.1109/TWC.2020.2979146}.

\bibitem{truong}
K.~T. Truong and R.~W. Heath, Jr., ``Effects of channel aging in massive {MIMO}
  systems,'' \emph{Journal of Communications and Networks}, vol.~15, no.~4, pp.
  338--351, 2013, doi:
  \href{https://doi.org/10.1109/JCN.2013.000065}{10.1109/JCN.2013.000065}.

\bibitem{clarke}
R.~H. Clarke, ``A statistical theory of mobile-radio reception,'' \emph{Bell
  System Technical Journal}, vol.~47, no.~6, pp. 957--1000, 1968.

\bibitem{rv13}
M.~Rudelson and R.~Vershynin, ``{Hanson--Wright} inequality and sub-gaussian
  concentration,'' \emph{Electronic Communications in Probability}, vol.~18,
  pp. 1--9, 2013, doi:
  \href{https://doi.org/10.1214/ECP.v18-2865}{10.1214/ECP.v18-2865}.

\bibitem{zenodo}
T.~Guo, H.~Luo, X.~Li, and Y.~Yang, ``Deep holes and aperture: certified
  whole-cell minima and code for the numerical comparison ({T}able {II}),''
  Zenodo, version 1.0, 2026, doi:
  \href{https://doi.org/10.5281/zenodo.23687425}{10.5281/zenodo.23687425}.

\bibitem{stein}
E.~M. Stein and R.~Shakarchi, \emph{Fourier Analysis: An Introduction}.\hskip
  1em plus 0.5em minus 0.4em\relax Princeton University Press, 2003.

\bibitem{durrett}
R.~Durrett, \emph{Probability: Theory and Examples}, 5th~ed.\hskip 1em plus
  0.5em minus 0.4em\relax Cambridge University Press, 2019.

\bibitem{djs}
H.~D{\"o}ring, S.~Jansen, and K.~Schubert, ``The method of cumulants for the
  normal approximation,'' \emph{Probability Surveys}, vol.~19, pp. 185--270,
  2022.

\end{thebibliography}
\end{document}